\pdfoutput=1
\documentclass[onecolumn,12pt]{IEEEtran}

\usepackage{amsmath,amssymb,amsthm,mathtools,dsfont}
\usepackage{graphicx}
\usepackage{tikz}
\usetikzlibrary{arrows.meta,positioning}
\usepackage{booktabs}
\usepackage[colorlinks=true,allcolors=blue]{hyperref}
\usepackage[capitalise]{cleveref}

\theoremstyle{plain}
\newtheorem{theorem}{Theorem}
\newtheorem{lemma}[theorem]{Lemma}

\newtheorem{proposition}[theorem]{Proposition}
\theoremstyle{definition}
\newtheorem{definition}[theorem]{Definition}
\newtheorem{remark}[theorem]{Remark}

\DeclareMathOperator*{\argmax}{arg\,max}

\DeclareMathOperator*{\esssup}{ess\,sup}

\newcommand{\cC}{{\mathcal C}}
\newcommand{\cD}{{\mathcal D}}

\newcommand{\cN}{{\mathcal N}}\newcommand{\cP}{{\mathcal P}}

\newcommand{\cU}{{\mathcal U}}
\newcommand{\cX}{{\mathcal X}}\newcommand{\cY}{{\mathcal Y}}

\newcommand{\mR}{{\mathbb R}}

\newcommand{\bx}{{\mathbf x}}\newcommand{\by}{{\mathbf y}}
\newcommand{\bC}{{\mathbf C}}

\newcommand{\BRA}[1]{\left( #1 \right)}
\newcommand{\BRAb}[1]{\left[ #1 \right]}
\newcommand{\BRAi}[1]{\left\langle #1 \right\rangle}
\newcommand{\BRAs}[1]{\left\{ #1 \right\}}
\newcommand{\abs}[1]{\left| #1 \right|}
\newcommand{\norm}[1]{\lVert #1 \rVert}

\newcommand{\PR}[1]{\mathbb{P}\left\{ #1 \right\}}
\newcommand{\PRs}[2]{\mathbb{P}_{#1}\left\{ #2 \right\}}
\newcommand{\E}[1]{\mathbb{E}\left( #1 \right)}
\newcommand{\Es}[2]{\mathbb{E}_{#1}\left( #2 \right)}

\newcommand{\Unif}[1]{\cU\left( #1 \right)}
\newcommand{\uU}{\Unif{[0,1]}}
\newcommand{\Ind}[1]{\mathds{1}_{\BRAs{#1}}}

\newcommand{\W}{W_{Y|X}}

\newcommand{\BETA}[3]{\beta_{#1}\!\Big( #2\; ; \; #3\Big)}
\newcommand{\PEP}[2]{p_e\BRA{#1 \mid #2}}
\newcommand{\PEPs}[1]{p_e\BRA{#1}}
\newcommand{\PEPU}[3]{p_e\BRA{#1 \mid #2; #3}}
\newcommand{\PEC}[2]{p_c\BRA{#1 \mid #2}}
\newcommand{\PECU}[3]{p_c\BRA{#1 \mid #2; #3}}
\newcommand{\Fspec}[2]{F\!\BRA{#1\,;\,#2}}
\newcommand{\idens}[2]{\imath\!\BRA{#1\,;\,#2}}

\newcommand{\RCUp}{\mathrm{RCU}^{+}}
\newcommand{\verdu}{Verd\'{u}}

\begin{document}

\title{A Pairwise-Error-Probability Framework for One-Shot Information Theory}

\author{Nir~Elkayam~and~Meir~Feder%
\thanks{The authors are with the Department of Electrical Engineering--Systems,
Tel Aviv University, Tel Aviv, Israel (e-mail: nir.elkayam@gmail.com; meir@eng.tau.ac.il).}}

\maketitle

\begin{abstract}
We develop a one-shot (finite-blocklength) channel-coding framework based on the
pairwise error probability (PEP) of a decoder with randomized tie-breaking. The
tie-breaking rule yields a probability-integral-transform identity: the induced
error spectrum describes both random-coding achievability and exact fixed-code
converse statements, for an arbitrary decoding metric. We derive two variational
identities for metric-weighted tail functionals of the PEP, one through the
Neyman--Pearson $\beta$-functional and one through a reverse channel, valid for
an arbitrary metric. Under matched maximum-likelihood decoding they specialize
to representations of the spectrum itself, which is then jointly convex in the
testing level and the input prior; combined with the reverse-channel
representation, this
gives a linear program for the prior-optimized minimax meta-converse ---
finite-dimensional in general and, for memoryless channels with fixed
alphabets, of size polynomial in the blocklength after a type reduction. Prior
optimization of the random-coding bound is formulated as a concave program over
input distributions with an explicit gradient, solved by a direct first-order
method. The framework recovers several classical one-shot bounds, including the
random-coding union bound and minimax meta-converse of Polyanskiy--Poor--\verdu,
the information-spectrum bounds of Han--\verdu, and the linear-programming
converse of Matthews. Numerical examples on the AWGN and binary Z-channels
illustrate the achievability--converse comparison and the effect of prior
optimization.
\end{abstract}

\begin{IEEEkeywords}
One-shot information theory, finite blocklength, pairwise error probability,
meta-converse, prior optimization, linear programming, random coding,
reverse channel.
\end{IEEEkeywords}

\section{Introduction}\label{sec:intro}

Classical information theory characterizes the limits of communication and
compression in the asymptotic regime, where the blocklength $n\to\infty$.
Modern practice, and much of recent theory, operates at finite (often short)
blocklengths, where the asymptotic limits are not directly informative.  The
non-asymptotic, or \emph{one-shot}, viewpoint brought to prominence by
Polyanskiy, Poor, and \verdu~\cite{polyanskiy2010channel} derives
finite-blocklength bounds directly from the channel statistics; the
error-exponent asymptotics and the fixed-error-probability (dispersion)
refinement are then recovered as
limits~\cite{strassen1962asymptotische,tomamichel2013tight}.

Two structural templates dominate the one-shot literature.  The information-spectrum approach
of \verdu{} and Han~\cite{verdu1994general,koga2002information}
reads both achievability and converse off a \emph{single} functional --- the
cumulative distribution of the information density
$\idens{X}{Y}=\log W(Y\mid X)/Q_Y(Y)$ --- at the price of a symmetric
$e^{-\gamma}$ slack.  The PPV approach~\cite{polyanskiy2010channel} instead
sharpens each direction with a functional of its own: a random-coding integral
(the RCU bound) on the achievability side, and a Neyman--Pearson
$\BETA{\alpha}{\cdot}{\cdot}$-functional (the meta-converse) on the converse
side.  The first template is structurally comparable but, specialized to concrete
finite-blocklength bounds, typically pays a common slack term; the second is
sharp but expresses the two directions through unrelated quantities.

Both templates are, in their canonical form, built around the
\emph{matched} maximum-likelihood metric: the information density $\idens{X}{Y}$
\emph{is} the log-likelihood ratio.  This ties them to a single decision rule
and leaves two questions open.  First, what governs achievability and converse
under an \emph{arbitrary} decoding metric --- a mismatched rule, a structured
decoder (matched filter, MMSE, zero-forcing), or a universal decoder whose
decision statistic is not a likelihood ratio?  Second, the meta-converse is
sharp only after optimizing over an input prior (and an auxiliary output law);
when is that prior-optimized value actually \emph{computable}, rather than the
output of a direct numerical minimax search?  This paper answers both through a
single primitive.

\subsection{The Pairwise Error Probability as a Primitive}

Fix a decoding metric $m:\cX\times\cY\to\mR$, an input prior $Q_X\in\cP(\cX)$,
and an output $y\in\cY$.  The \emph{pairwise error probability} (PEP) is
\[
  \PEP{x}{y}\;\triangleq\;\PRs{\bar X\sim Q_X}{m(\bar X,y)\succ m(x,y)},
\]
where $\succ$ is a strict comparison with a uniform (dithered) tie-breaker
made precise in \cref{sec:pep}: $\PEP{x}{y}$ is the probability
that a single random competing codeword outranks the transmitted one at output
$y$.  Only the relative order of scores enters this definition: any monotone
transformation of $m$ induces the same decoder, so the metric's numerical values
carry no meaning of their own.  The PEP is the canonical representative of this
order-equivalence class, and its value is itself an operational probability ---
the probability of being outranked --- rather than an arbitrary score.

The construction relies on one property of this object.  Under the
dithered tie rule, fix $y$ and draw $X\sim Q_X$ from the same prior as the
competitors: then $\PEP{X}{y}$ is \emph{exactly} uniform on $[0,1]$, regardless
of $y$, the channel, or the metric.  This uniformity is conditional on $y$ and
lifts to the product law $(X,Y)\sim Q_X\times Q_Y$ for any output measure
$Q_Y$ --- the form the meta-converse consumes.  Under the channel-induced joint
law $(X,Y)\sim Q_X\cdot\W$, by contrast, the conditional $X\mid Y$ is the Bayes
posterior, so $\PEP{X}{Y}$ is in general \emph{not} uniform: its law is the
\emph{error spectrum}
\[
  \Fspec{Q_X}{z}\;\triangleq\;\PR{-\log\PEP{X}{Y}\le z},
\]
the CDF that carries the dependence on $Q_X$ and the channel.  This
product-law versus channel-joint distinction is made precise in \cref{sec:pep}.

The PEP concerns a \emph{single} competitor; an actual decoding error involves
all $M-1$ of them.  For a random codebook the competitors are i.i.d., so,
conditioned on the transmitted pair, the codebook error probability is
\emph{exactly} $1-(1-\PEP{X}{Y})^{M-1}$ --- a deterministic function of the
pairwise quantity.  This \emph{lift} from one competitor to $M-1$ is the
mechanism of the paper: every random-coding bound below is an expectation
$\E{f(\PEP{X}{Y})}$ of the lift or of a relaxation of it, and we call the
function $f$ the \emph{kernel} of the bound.  The spectrum therefore determines
every kernel-based bound considered below, and both directions of the coding
problem are read from it at the \emph{same} point $z=R$ (with $R=\log M$).
Achievability integrates the spectrum against the random-coding kernel over a
vicinity of $z=R$.  The converse enters structurally rather than through a
separate argument: for a \emph{fixed} codebook the maximum-metric decoder errs
exactly when the transmitted codeword is outranked --- a threshold event on its
PEP --- so the fixed-code error equals $\Fspec{Q_X^\cC}{R}$ \emph{exactly}, for
\emph{any} metric, where $Q_X^\cC$ is the empirical input distribution of the
code $\cC$.  The two directions thus differ only by the kernel (smooth average
versus sharp threshold) and by the prior at which the spectrum is read.  What
matched ML adds is not the converse identity but \emph{structure}: under
matched ML the variational identities of \cref{sec:pep} represent the spectrum
itself, jointly convex in the testing level and the prior, and prior
optimization becomes tractable.

This is the reason to base the theory on the PEP rather than on the
information density $\idens{X}{Y}$: the error spectrum provides a common
representation for both directions, and, being built from the
\emph{rank} of the metric rather than from a likelihood ratio, it is defined
for any decoding rule.

\subsection{Contributions}

The contributions of the paper are the following.

\begin{enumerate}
\item \textbf{The dithered-PEP representation of both coding directions
(\cref{sec:pep,sec:channel}).}  Exact uniformity of the PEP under the
competitor prior (\cref{thm:pep-uniform}; atomlessness of the spectrum in
\cref{lem:atomless}) makes the spectrum the common object of both directions:
the exact \emph{ensemble-average} random-coding probability and its $\RCUp$
refinement are kernel averages of the spectrum
(\cref{thm:exact-rc,thm:rcu-plus}), and the error probability of any fixed
code equals the spectrum at the code's empirical prior, exactly and for any
metric (\cref{thm:cc-converse}).

\item \textbf{The variational identities (\cref{sec:pep}).}  For an arbitrary
metric, the metric-weighted tail functional of the PEP equals a Neyman--Pearson
$\BETA{\alpha}{\cdot}{\cdot}$-functional maximized over auxiliary output laws
(\cref{thm:meta-converse}) and, dually, a reverse-channel program under an
essential-supremum cap (\cref{thm:reverse-channel}); under matched ML the
weighted functional becomes the spectrum itself, which is then jointly convex
in the testing level and the prior (\cref{thm:joint-convex}), adapting the
saddle-point structure of the $\beta$-functional~\cite{polyanskiy2013saddle}.
The identification of the weighted functional with the spectrum, and the
resulting convexity of the spectrum, are restricted to matched ML.

\item \textbf{The minimax-prior LP (\cref{sec:prioropt}).}  Joint convexity
and the reverse-channel identity turn the prior-optimized minimax
meta-converse into a single \emph{finite-dimensional linear program}, jointly
in the input prior and an auxiliary reverse channel (\cref{thm:lp}), which
coincides with the finite-blocklength instance of Matthews' nonsignaling
converse~\cite{matthews2012linear}, an identification made exact through the
LP dual (\cref{rem:lp-dual}).  For memoryless channels with fixed alphabets, a
type reduction (\cref{prop:lp-type}) makes the program polynomial in the
blocklength, with degree growing with the alphabet size.

\item \textbf{The concave achievability program (\cref{sec:ach-prioropt}).}
Prior optimization of the random-coding bound --- which couples all
thresholds, not the converse's single one --- is a concave program over the
input simplex with an explicit gradient (\cref{prop:ach-concave}), optimized
by a first-order method on the simplex with the convergence guarantee of
\cref{prop:ach-march}; the same solver applies to the random-coding kernels
considered here.
\end{enumerate}

Consequences worked out along the way include recoveries of several classical
one-shot bounds --- Feinstein's threshold bound~\cite{feinstein1955new}, the
Han--\verdu{} information-spectrum bounds~\cite{verdu1994general,koga2002information},
the PPV RCU and minimax meta-converse~\cite{polyanskiy2010channel},
V\'azquez-Vilar's fixed-code exactness~\cite{vazquez2016bayesian}, and
Matthews' linear-programming converse~\cite{matthews2012linear}, each by a
short specialization at the point of recovery in
\cref{sec:channel,sec:prioropt} --- a rate-gap theorem: under
log-concavity of the spectrum, the achievability--converse rate gap at a fixed
error level and fixed prior is at most $-\log(1-\varsigma)/\varsigma$ in the
spectrum log-slope $\varsigma$, which approaches one nat in the small-slope
limit (\cref{thm:rategap}) --- and a local pairwise estimate providing the
$n^{-1/2}$ factor underlying the $\tfrac12\log n$ third-order improvement for
non-singular memoryless channels (\cref{lem:pep-local,rem:gauss-regime}).
The framework is not channel-specific: for Gallager-symmetric channels the
uniform input is optimal in both directions at every blocklength
(\cref{prop:symmetric}).  Settings that share the same spectrum but are not
pursued here --- joint source--channel and lossy source coding, refinements of
the random-coding bound, and problems outside the matched-ML convexity regime
--- are noted in \cref{sec:conc}.

\subsection{Relation to Prior Work}

The structural template --- one functional read in both directions --- is that
of the information-spectrum method of \verdu{} and
Han~\cite{verdu1994general,koga2002information}; under matched ML the PEP
meta-converse reduces to the same Neyman--Pearson test, and the PEP
construction extends the template to arbitrary metrics, where the
information-density route does not directly apply.  The mismatched-decoding
literature --- see the monograph of Scarlett, Guill\'en i F\`abregas,
Somekh-Baruch, and Martinez~\cite{scarlett2020mismatched} --- studies the rates
and error exponents attainable with a \emph{fixed} mismatched metric, and is
complementary to the present paper, which gives exact fixed-code identities for
the decoder actually used rather than optimal rates for a class of metrics.
The sharp finite-blocklength viewpoint, the RCU bound, and the meta-converse
are due to Polyanskiy, Poor, and \verdu~\cite{polyanskiy2010channel};
hypothesis-testing converses go back to Blahut~\cite{BlahutBook} and were
developed by Hayashi and Nagaoka~\cite{hayashi2003general} and Wang and
Renner~\cite{wang2012oneshot}; fixed-code exactness of the meta-converse under
matched ML is due to V\'azquez-Vilar et al.~\cite{vazquez2016bayesian}; the
convex--concave saddle structure of the $\beta$-functional is
Polyanskiy's~\cite{polyanskiy2013saddle}, and the linear-programming view of
channel-coding converses is Matthews'~\cite{matthews2012linear}.  Feinstein's
maximal-error threshold lemma~\cite{feinstein1955new} (maximal and average
error differing by at most a factor of two~\cite{shannon1957certain}) and the
exact dithered random-coding form~\cite{HaimKE18} complete the achievability
picture.  The Poisson-matching framework of Li and
Anantharam~\cite{li2021unified} likewise derives one-shot achievability bounds
from a single lemma, recovering second-order asymptotics across a range of
settings; it addresses the achievability direction only, whereas the spectrum
here carries both directions, and \cref{lem:pep-local} provides the local
$n^{-1/2}$ estimate underlying the $\tfrac12\log n$ third-order improvement
(\cref{rem:gauss-regime}).  What is new here is not any
single one of these ingredients but
their derivation from one spectrum, together with the formulation of the
prior-optimized minimax meta-converse as a finite-dimensional linear program
--- polynomial in the blocklength for memoryless channels with fixed alphabets
(\cref{prop:lp-type}).

The paper also consolidates a line of the authors' earlier work, and the
overlap should be delimited.  Achievability and converse bounds over a general
channel and a general decoding metric --- the comparable-pair template that
\cref{sec:channel} sharpens --- were developed in~\cite{elkayam2015achievable}
and its extended version~\cite{Elkayam14GeneralBounds}, which also contain a
precursor of the slope identity~\eqref{eq:slope-identity}; the variational
treatment of the $\beta$-functional that the identities of \cref{sec:pep} build
on was given in~\cite{elkayam2016variational}; and the computation of the
minimax converse by alternating saddle-point optimization was developed
in~\cite{elkayam2017calculation}.  The present paper builds on those works but
is not a union of them: the uniform-dither canonicalization of the PEP with its
exact probability-integral-transform identity (\cref{thm:pep-uniform}), the
exact fixed-code converse identity (\cref{thm:cc-converse}), the sharpened
slope identity with the strict bound $-1<\dot E\le0$ and the rate-gap theorem
(\cref{thm:rategap}), the single-LP formulation of the prior-optimized minimax
converse with its Matthews identification (\cref{thm:lp}), and the concave
achievability program of \cref{sec:ach-prioropt} appear in none of them.

\subsection{Organization}

\cref{sec:pep} defines the dithered PEP and establishes uniformity
(\cref{thm:pep-uniform}), the meta-converse and reverse-channel identities
(\cref{thm:meta-converse,thm:reverse-channel}), and joint convexity
(\cref{thm:joint-convex}).  \cref{sec:channel} applies them to point-to-point
channel coding --- achievability, the fixed-code converse, and the rate gap.
\cref{sec:prioropt,sec:ach-prioropt} develop the converse LP and its concave
achievability companion.  \cref{sec:num} reports numerical evaluations, and
\cref{sec:conc} concludes; throughout, logarithms are natural unless stated
otherwise, $\cP(\cX)$ denotes the probability simplex on the alphabet $\cX$,
and $\W$ denotes the channel.

\section{The Pairwise Error Probability and the Error Spectrum}\label{sec:pep}

This section develops the central primitive of the framework: a randomized (dithered) pairwise error probability and its associated \emph{error spectrum}. From a single uniformity property we derive three variational identities---a Neyman--Pearson (meta-converse) form, a reverse-channel form, and a joint-convexity property. The first two represent metric-weighted tail functionals of the PEP and hold for an arbitrary decoding metric; under the matched likelihood-ratio metric the represented functional becomes the spectrum itself, which the third property then makes jointly convex in the testing level and the prior. All subsequent sections use these identities as stated.

\subsection{The dithered pairwise error probability}

Throughout, the alphabets $\cX$ and $\cY$ are standard Borel spaces, $\cP(\cX)$ denotes the Borel probability measures on $\cX$, and every score is jointly measurable; by Fubini's theorem this makes the derived quantities below ($G_{x,y}$, $H_{x,y}$, and the spectrum) measurable in their arguments. \Cref{sec:prioropt,sec:ach-prioropt} specialize to finite alphabets.

Fix a prior $Q_X\in\cP(\cX)$ and a (jointly) measurable score (metric) $m:\cX\times\cY\to[0,\infty]$. For a fixed observation $y\in\cY$, the relevant reliability question is how likely a randomly drawn competitor $X\sim Q_X$ is to outrank a given candidate $x$ under the score $m(\cdot,y)$. Only the relative order of scores matters --- any monotone transformation of $m$ induces the same comparisons --- and ties must be resolved without bias. To this end we augment each score with an auxiliary uniform variable (a \emph{dither}) $U\sim\uU$ and order score--dither pairs lexicographically,
\begin{equation}\label{eq:lex-order}
  (m_1,u_1)\succ(m_2,u_2)\iff
  \BRAb{m_1>m_2}\ \text{or}\ \BRAb{m_1=m_2\ \text{and}\ u_1<u_2},
\end{equation}
the reversed $u$-coordinate ensuring that, among tied scores, the competitor wins the tie-break with probability $\PR{U<u}=u$.

\begin{definition}[Dithered PEP]\label{def:pep}
For $x\in\cX$, $y\in\cY$, $u\in[0,1]$, set
\begin{align*}
  G_{x,y}&\triangleq Q_X\BRA{\BRAs{\bar x:m(\bar x,y)>m(x,y)}},\\
  H_{x,y}&\triangleq Q_X\BRA{\BRAs{\bar x:m(\bar x,y)=m(x,y)}},
\end{align*}
and define the \emph{pairwise error probability} (PEP) of the triple $(x,y,u)$ by
\begin{equation}\label{eq:pep-def}
  \PEPU{x}{y}{u}\triangleq G_{x,y}+u\,H_{x,y}.
\end{equation}
The complementary \emph{correct probability} is $\PECU{x}{y}{u}\triangleq 1-\PEPU{x}{y}{u}$. Substituting a fresh $U\sim\uU$ gives the \emph{randomized PEP} $\PEP{x}{y}\triangleq\PEPU{x}{y}{U}$, a $[0,1]$-valued random variable.
\end{definition}

The quantity $\PEPU{x}{y}{u}$ is exactly the probability that a random pair $(X,U)\sim Q_X\times\uU$ outranks $(x,u)$ under~\eqref{eq:lex-order}: the strictly-larger-score mass $G_{x,y}$ plus the tie mass $H_{x,y}$ weighted by the chance $u$ that the competitor wins the tie. Thus the PEP maps every candidate onto a common $[0,1]$ scale of risk, and lexicographic comparison reduces to numerical comparison of PEP values. In this sense the PEP is the canonical representative of the metric's order-equivalence class: any monotone transformation of $m(\cdot,y)$ leaves it unchanged, and its value is itself an operational probability --- the probability of being outranked --- rather than an arbitrary score.

The randomized tie-break provides a uniform treatment of ties: it affects only the tie set $\{x':m(x',y)=m(x,y)\}$, and its purpose is to expose the exact probability-integral-transform identity of the next subsection, replacing the case-by-case tie conventions (fair-split for the exact analysis, ties-as-errors for the union bound) by a single rule that is tight in both directions. For continuous-valued metrics ties typically have probability zero, and the randomized rule then coincides almost surely with any deterministic tie-breaking. For discrete channels the results apply as stated to the randomized tie rule; \cref{rem:converse-decoder-scope} delimits which of them extend to deterministic tie-breaking.

\subsection{Uniformity and the error spectrum}

The defining property of the construction is that, when the candidate is itself drawn from the prior, the PEP is exactly uniform at every fixed output---a randomized probability-integral transform that holds irrespective of $y$, the channel, or the metric.

\begin{lemma}[PEP properties]\label{thm:pep-uniform}
Let $(X,U)\sim Q_X\times\uU$. Then:
\begin{enumerate}
\item\emph{(Lexicographic equivalence.)} For every fixed $(x,y,u)$,
\[
  \PRs{X,U}{(m(X,y),U)\succ(m(x,y),u)}=\PRs{X,U}{\PEP{X}{y}<\PEPU{x}{y}{u}}=\PEPU{x}{y}{u}.
\]
\item\emph{(Order preservation.)} If $m(x_1,y)>m(x_2,y)$ then $\PEPU{x_1}{y}{u_1}\le\PEPU{x_2}{y}{u_2}$ for all $u_1,u_2$.
\item\emph{(Uniformity.)} For every fixed $y$,
\begin{equation}\label{eq:pep-uniform}
  \PEP{X}{y}=\PEPU{X}{y}{U}\sim\uU.
\end{equation}
\end{enumerate}
\end{lemma}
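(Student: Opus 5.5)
Fix $y$ throughout and write $m_x=m(x,y)$. Part~1 follows by computing the outranking probability directly from \eqref{eq:lex-order}, conditioning on $X$. The event $(m(X,y),U)\succ(m_x,u)$ is the disjoint union of $\{m(X,y)>m_x\}$, which has probability $G_{x,y}$, and $\{m(X,y)=m_x,\ U<u\}$. Since $U$ is independent of $X$, the second event has probability $u\,H_{x,y}$. The sum is $\PEPU{x}{y}{u}$ by \eqref{eq:pep-def}.

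To identify this with $\PRs{X,U}{\PEP{X}{y}<\PEPU{x}{y}{u}}$, I will show that the map $(\bar x,\bar u)\mapsto\PEPU{\bar x}{y}{\bar u}$ is order-reversing for $\succ$, up to null sets. First I establish part~2.
\begin{itemize}
\item If $m_{x_1}>m_{x_2}$, then every $\bar x$ counted in $G_{x_1,y}+H_{x_1,y}$ (that is, with $m_{\bar x}\ge m_{x_1}$) satisfies $m_{\bar x}>m_{x_2}$. Hence
\[
\PEPU{x_1}{y}{u_1}\le G_{x_1,y}+H_{x_1,y}\le G_{x_2,y}\le \PEPU{x_2}{y}{u_2}.
\]
\item Within a tie class $m_{\bar x}=m_x$, the PEP is $G+\bar u H$, which is increasing in $\bar u$. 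So $\bar u<u$ gives a strictly smaller PEP when $H>0$.
\end{itemize}
Thus $(\bar x,\bar u)\succ(x,u)$ implies $\PEP{\bar x}{y}\le\PEPU{x}{y}{u}$. Conversely, a strict inequality $\PEPU{\bar x}{y}{\bar u}<\PEPU{x}{y}{u}$ forces $m_{\bar x}>m_x$, or equal scores with $\bar u<u$. Otherwise part~2 (or monotonicity in $\bar u$) would give the reverse weak inequality.

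The two events therefore differ only on the set where the PEP values coincide while $(\bar x,\bar u)\succ(x,u)$. I expect this to be the main obstacle. There are two ways equality can happen:
\begin{itemize}
\item Within a tie class with $H>0$, equality forces $\bar u=u$, which has probability zero.
\item Across classes, equality $G_{\bar x,y}+\bar uH_{\bar x,y}=G_{x,y}+uH_{x,y}$ with $m_{\bar x}>m_x$ requires both sides to equal $G_{x,y}=G_{\bar x,y}+H_{\bar x,y}$, with $u=0$ and $\bar u=1$ (or zero-mass classes). So the $Q_X$-mass of score values strictly between $m_x$ and $m_{\bar x}$, together with the relevant tie masses, must vanish.
\end{itemize}
I will handle the second case by showing that the set of $\bar x$ with $m_{\bar x}>m_x$ but $G_{\bar x,y}+\bar u H_{\bar x,y}\ge G_{x,y}$ has $Q_X\times\uU$-measure zero. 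The argument uses the fact that $G_{\bar x,y}+H_{\bar x,y}\le G_{x,y}$ always holds, so equality requires $\bar u H_{\bar x,y}=H_{\bar x,y}$ and $G_{\bar x,y}+H_{\bar x,y}=G_{x,y}$. The first condition means $\bar u=1$ or $H_{\bar x,y}=0$. The null set of $\bar u=1$ is harmless. The points with $H_{\bar x,y}=0$ and $G_{\bar x,y}=G_{x,y}$ lie in $\{m_x<m_{\bar x}\}$ but carry no mass above $m_{\bar x}$. I will bound their total mass by a supremum argument: let $s=\sup$ of the scores of such points, approach $s$ from below, and use continuity of measure to show the mass is zero. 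This gives the middle equality of part~1.

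Part~3 then follows from part~1 by conditioning. For fixed $y$ and any $t\in[0,1]$, the map $u\mapsto\PEPU{x}{y}{u}$ sweeps the interval $[G_{x,y},G_{x,y}+H_{x,y}]$, and these intervals tile $[0,1]$ up to null sets as $x$ ranges over score classes. Hence I can pick $(x,u)$ with $\PEPU{x}{y}{u}=t$. For $t$ in a gap, which has no mass, I take a limit instead. Then part~1 gives $\PR{\PEP{X}{y}<t}=t$ for all $t\in[0,1]$, which is exactly $\PEP{X}{y}\sim\uU$. As a cleaner alternative for part~3, I can write $\PEP{X}{y}=F^-(m_X)+U\,\Delta F(m_X)$ with $F^-$ the upper-tail function of $m(X,y)$. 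This is the standard randomized probability-integral transform, whose uniformity follows from the continuous case plus linear interpolation across atoms.
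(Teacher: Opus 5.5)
Your proposal is correct, but it runs the logic in the opposite order from the paper.

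\textbf{The paper's route.} The paper treats (i)--(ii) as immediate from the definition. It proves (iii) through the randomized probability-integral transform of Lemma~\ref{app:lem-rpit}: with $T=m(X,y)$ it writes $\PEP{X}{y}=1-\Psi$, where $\Psi=F_T(T)-U\,P_T(T)$. It then checks $\PR{\Psi\le\phi}=\phi$ in two cases: $\phi$ in the range of $F_T$, or $\phi$ inside an atom's jump.

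\textbf{Your route.} You prove all of (i) by hand and then read (iii) off (i). The proof of (i) has two steps: the outranking computation, and the fact that $(\bar x,\bar u)\mapsto\PEPU{\bar x}{y}{\bar u}$ reverses $\succ$, with coincidences confined to a null set.

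The null-set step is sound once its condition is stated correctly. $G_{\bar x,y}=G_{x,y}$ means $Q_X\{m(\cdot,y)\in(m_x,m_{\bar x}]\}=0$, i.e.\ no mass \emph{between} the two scores, not ``above $m_{\bar x}$''. With that condition, your supremum/continuity-from-below argument disposes of the set.

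\textbf{What your route buys.} It gives a genuine proof of the middle equality in (i), which the paper passes over. That equality is the reduction of lexicographic comparison to numerical comparison of PEP values. In the paper's order it is a one-liner after (iii), since $\PR{V<p}=p$ for $V\sim\uU$.

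\textbf{What it costs.} The passage (i)$\Rightarrow$(iii) needs $\PR{\PEP{X}{y}<t}=t$ also at levels $t$ not of the form $\PEPU{x}{y}{u}$. So the attained set $\bigcup_x[G_{x,y},G_{x,y}+H_{x,y}]$ must be dense in $[0,1]$. Your ``tile $[0,1]$ up to null sets'' asserts this rather than proving it.
\begin{itemize}
\item A non-degenerate gap $(a,b)$ could only be compensated by a PEP atom of mass $b-a$ at $a$.
\item Such an atom would come from inputs with $H_{x,y}=0$ whose scores sit on a flat piece of $F_T$.
\item Ruling this out is exactly the paper's Case~1 argument, and your own supremum argument adapts to it.
\end{itemize}
Once density is in hand, your limit step works from either side, because non-attained levels carry no PEP mass.

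\textbf{Suggested arrangement.} The more economical order is the paper's: prove (iii) first via your ``cleaner alternative''. The appendix supplies the rigorous two-case proof behind ``linear interpolation across atoms''. The middle equality of (i) then needs no null-set analysis at all.
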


\begin{IEEEproof}
Properties (i)--(ii) are direct from~\eqref{eq:pep-def} and the definition of $\succ$. For (iii), with $T\triangleq m(X,y)$ of CDF $F_T$ and atom function $P_T$, the identity
\[
\PEPU{X}{y}{U}=1-\BRA{F_T(T^-)+(1-U)P_T(T)}
\]
reduces uniformity to the randomized probability-integral transform: for $T$ with $U\sim\uU$ independent, $F_T(T^-)+U\,P_T(T)\sim\uU$, the dither filling each atom's height interval to restore uniformity; see Appendix~\ref{app:pep-uniform} for the full argument.
\end{IEEEproof}

Uniformity at fixed $y$ lifts immediately to the product law: under $(X,Y)\sim Q_X\times Q_Y$ for \emph{any} output measure $Q_Y\in\cP(\cY)$, the randomized PEP satisfies $\PEP{X}{Y}\sim\uU$, since~\eqref{eq:pep-uniform} holds for every $y$ separately. This product-law uniformity is the precise form consumed by the meta-converse identity below.

The situation is qualitatively different under the \emph{channel-induced joint}. When $(X,Y)\sim Q_X\cdot\W$, the conditional $X\mid Y$ is the Bayes posterior rather than the prior, so the cancellation behind~\eqref{eq:pep-uniform} no longer occurs and $\PEP{X}{Y}$ is in general \emph{not} uniform. Its law is the nontrivial object that the achievability analysis optimizes.

\begin{definition}[Error spectrum]\label{def:spectrum}
For $X\sim Q_X$, $Y\sim\W(\cdot\mid X)$, the \emph{(PEP) error spectrum} is the cumulative distribution
\begin{equation}\label{eq:spectrum-def}
  \Fspec{Q_X}{z}\triangleq\PR{-\log\PEP{X}{Y}\le z},\qquad z\ge 0.
\end{equation}
\end{definition}

The spectrum depends on $Q_X$ and $\W$, and the bounds developed below can be expressed in terms of it; the joint source--channel spectrum is the identical construction over an enlarged candidate space, and the distortion spectrum is the same lift run in the value domain (the rank transform of the distortion) --- neither introduces a new informational quantity, though we do not pursue those settings here.

\subsection{Atomlessness under mismatch}

Several arguments---in particular the random-coding-to-converse passage of Section~\ref{sec:channel}---require that the PEP have no atoms, so that strict threshold inequalities pass to equalities almost surely. Under matched maximum-likelihood decoding the dither already removes atoms; under mismatch we invoke absolute continuity.

\begin{lemma}[Atomlessness]\label{lem:atomless}
Let $Q_X^{\mathrm{pep}}\in\cP(\cX)$ define the PEP and let $Q_X^{\mathrm{prob}}\in\cP(\cX)$ with $Q_X^{\mathrm{prob}}\ll Q_X^{\mathrm{pep}}$. Let $\W$ be a channel whose output marginal $P_Y$ (induced by $Q_X^{\mathrm{prob}}\cdot\W$) satisfies $Q_X^{\mathrm{prob}}\cdot\W\ll Q_X^{\mathrm{prob}}\times P_Y$ --- automatic for discrete alphabets, and whenever $\W(\cdot\mid x)$ admits a density with respect to $P_Y$. For $(X,Y)\sim Q_X^{\mathrm{prob}}\cdot\W$,
\[
  \PR{\PEP{X}{Y}=w}=0\qquad\text{for every }w\in[0,1];
\]
consequently $Z_e\triangleq-\log\PEP{X}{Y}$ has a continuous CDF.
\end{lemma}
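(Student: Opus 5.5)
Since the joint law $Q_X^{\mathrm{prob}}\cdot\W$ is dominated by the product law $Q_X^{\mathrm{prob}}\times P_Y$, and the dither $U$ is independent of both, it suffices to show that the event $\{\PEPU{X}{Y}{U}=w\}$ is null under the product law $Q_X^{\mathrm{prob}}\times P_Y\times\uU$. Absolute continuity then carries nullity back to the channel joint. The PEP is built with $Q_X^{\mathrm{pep}}$, so the product law is not directly the one of \cref{thm:pep-uniform}. The second hypothesis, $Q_X^{\mathrm{prob}}\ll Q_X^{\mathrm{pep}}$, lets us pass to $Q_X^{\mathrm{pep}}\times P_Y\times\uU$, since
\[
Q_X^{\mathrm{prob}}\times P_Y\times\uU \ll Q_X^{\mathrm{pep}}\times P_Y\times\uU .
\]

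Under $Q_X^{\mathrm{pep}}\times P_Y\times\uU$ we condition on $Y=y$. For every fixed $y$, part (iii) of \cref{thm:pep-uniform}, applied with prior $Q_X^{\mathrm{pep}}$, says that $\PEPU{X}{y}{U}\sim\uU$ when $(X,U)\sim Q_X^{\mathrm{pep}}\times\uU$. Hence
\[
\PRs{X,U}{\PEPU{X}{y}{U}=w}=0 \quad\text{for every } y .
\]
By Fubini, using the joint measurability of $(x,y,u)\mapsto G_{x,y}+uH_{x,y}$ noted at the start of the section, integrating over $y\sim P_Y$ gives
\[
(Q_X^{\mathrm{pep}}\times P_Y\times\uU)\bigl(\{\PEP{X}{Y}=w\}\bigr)=0 .
\]

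Chaining the two absolute-continuity relations then gives $\PR{\PEP{X}{Y}=w}=0$ under $Q_X^{\mathrm{prob}}\cdot\W\times\uU$. The first relation is the joint-versus-product hypothesis, tensored with the independent dither. For the CDF statement, $z\mapsto-\log z$ is a bijection of $[0,1]$ onto $[0,\infty]$. Therefore each atom of $Z_e$ would be an atom of the PEP, and a real-valued CDF with no atoms is continuous. The endpoint $w=0$, where $Z_e=\infty$, is covered by the same nullity.

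The only delicate point is the one-line claim that $\W(\cdot\mid x)$ having a density with respect to $P_Y$ gives $Q_X^{\mathrm{prob}}\cdot\W\ll Q_X^{\mathrm{prob}}\times P_Y$. This holds because the joint has density $w(y\mid x)$ with respect to the product. In the discrete case it holds because $P_Y(y)=0$ forces $\W(y\mid x)=0$ for $Q_X^{\mathrm{prob}}$-a.e.\ $x$. I expect the main care to be the measurability bookkeeping in the Fubini step rather than any real obstacle.
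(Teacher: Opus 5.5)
Your proof is correct and follows the paper's argument: chain $Q_X^{\mathrm{prob}}\cdot\W\ll Q_X^{\mathrm{prob}}\times P_Y\ll Q_X^{\mathrm{pep}}\times P_Y$, invoke uniformity of the PEP under the last law, and transport null sets back. Your extra bookkeeping fills in details the paper leaves implicit: tensoring with the independent dither, the Fubini step, the $-\log$ bijection for the CDF claim, and the checks of the two ``automatic'' cases.
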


\begin{IEEEproof}
The two absolute-continuity hypotheses chain: $Q_X^{\mathrm{prob}}\cdot\W\ll Q_X^{\mathrm{prob}}\times P_Y\ll Q_X^{\mathrm{pep}}\times P_Y$, the first by hypothesis and the second from $Q_X^{\mathrm{prob}}\ll Q_X^{\mathrm{pep}}$. Under $Q_X^{\mathrm{pep}}\times P_Y$ the PEP is uniform by Lemma~\ref{thm:pep-uniform}, hence atomless; absolute continuity transports null sets.
\end{IEEEproof}

\subsection{The meta-converse identity}

Product-law uniformity identifies the PEP threshold event with the acceptance region of a Neyman--Pearson test, turning the metric-weighted tail functional of the PEP into a $\beta$-functional supremized over an auxiliary output distribution; in the matched case~\eqref{eq:meta-channel} the represented functional is the spectrum itself. For finite nonnegative measures $P,Q$ on $\cX\times\cY$,
\[
  \BETA{\alpha}{P}{Q}\triangleq\min_{0\le T\le 1}\ \Es{Q}{T}\quad\text{s.t.}\quad\Es{P}{T}\ge\alpha,
\]
the Neyman--Pearson functional, with optimum attained by a likelihood-ratio threshold test. Write $((Q_X\times P_Y)\cdot m)(x,y)\triangleq Q_X(x)P_Y(y)m(x,y)$ for the metric-tilted measure --- the reference distribution that reweights each pair $(x,y)$ by the metric value $m(x,y)$, playing the role that the output distribution plays in the PPV binary hypothesis test.

\begin{theorem}[Meta-converse identity]\label{thm:meta-converse}
Let $Q_X\in\cP(\cX)$, $P_Y\in\cP(\cY)$ and $m:\cX\times\cY\to[0,\infty]$. Assume \textup{(A1)} $Q_X\{x:m(x,y)<\infty\}=1$ for every $y$ in the support of $P_Y$, and \textup{(A2)} $\Es{Q_X\times P_Y}{m(X,Y)}<\infty$. With $(X,Y)\sim Q_X\times P_Y$ and $Z_e\triangleq-\log\PEP{X}{Y}$, for every $R>0$:
\begin{equation}
  \Es{Q_X\times P_Y}{m(X,Y)\,\Ind{Z_e\le R}}
    =\max_{Q_Y\in\cP(\cY)}\BETA{1-e^{-R}}{Q_X\times Q_Y}{(Q_X\times P_Y)\cdot m}.\label{eq:meta-converse-e}
\end{equation}
\end{theorem}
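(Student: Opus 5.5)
The plan is to prove the two inequalities separately, using the threshold test $T^\star(x,y)\triangleq\PRs{U}{\PEPU{x}{y}{U}\ge e^{-R}}$. This is a $[0,1]$-valued randomized test in which the dither is averaged out, and it is the indicator of $\{Z_e\le R\}$ once the dither is integrated. Its cost under the reference measure is exactly the left-hand side of~\eqref{eq:meta-converse-e}: $\Es{(Q_X\times P_Y)\cdot m}{T^\star}=\Es{Q_X\times P_Y}{m(X,Y)\Ind{Z_e\le R}}$.

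\emph{Step 1 (upper bound on the max).} Fix any $Q_Y\in\cP(\cY)$. By \cref{thm:pep-uniform}(iii), for every fixed $y$ the PEP under $X\sim Q_X$ is $\uU$. Hence $\Es{Q_X\times Q_Y}{T^\star}=\PR{\PEP{X}{Y}\ge e^{-R}}=1-e^{-R}$, whatever the choice of $Q_Y$. So $T^\star$ is feasible for every $\BETA{1-e^{-R}}{Q_X\times Q_Y}{(Q_X\times P_Y)\cdot m}$, which gives $\beta\le\text{LHS}$ for all $Q_Y$. The supremum over $Q_Y$ is therefore at most the left-hand side.

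\emph{Step 2 (a maximizing $Q_Y$).} For each $y$ I would define the per-output threshold $t_y$: the value of $m(\cdot,y)$ at which $G_{x,y}+uH_{x,y}$ crosses $e^{-R}$. The test $T^\star(\cdot,y)$ then accepts $\{m(\cdot,y)<t_y\}$ fully, rejects $\{m>t_y\}$, and randomizes on $\{m=t_y\}$; this uses order preservation, \cref{thm:pep-uniform}(ii). The likelihood ratio of the reference measure against $Q_X\times Q_Y$ is $P_Y(y)m(x,y)/Q_Y(y)$, so I would tilt the output law to $Q_Y^\star(dy)\propto t_y\,P_Y(dy)$. With this choice the ratio equals $c\,m(x,y)/t_y$, and $T^\star$ becomes a single global likelihood-ratio threshold test at level $c$, randomized on the boundary. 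By the Neyman--Pearson lemma it is then optimal, so $\beta$ at $Q_Y^\star$ equals the left-hand side and the maximum is attained.

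\emph{Step 3 (normalizability and degenerate cases).} This is the step I expect to be the main obstacle. Normalizability requires $0<\Es{P_Y}{t_Y}<\infty$. Finiteness follows from (A2): for each $y$ the set $\{\bar x:m(\bar x,y)\ge t_y\}$ has $Q_X$-mass at least $e^{-R}$, so $t_y\le e^{R}\Es{Q_X}{m(X,y)}$, which is integrable in $y$. Assumption (A1) guarantees $t_y<\infty$ for $P_Y$-almost every $y$.

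The delicate case is $t_y=0$ on a set of $y$ of positive $P_Y$-measure, where the ratio $m/t_y$ is ill-defined. There $T^\star$ accepts only the set $\{m=0\}$, which contributes zero cost to both sides. I would handle this either by placing $Q_Y^\star$-mass only where $t_y>0$ and checking optimality directly, or, if $t_Y=0$ $P_Y$-a.s., by observing that the left-hand side is $0$ and $\beta\ge0$ trivially.

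As a fallback, I would replace the NP-lemma appeal by a direct Lagrangian lower bound. For any feasible test $T$, bound $\Es{(Q_X\times P_Y)\cdot m}{T}-\lambda\,\Es{Q_X\times Q_Y^\star}{T}$ pointwise in $y$ by its value at $T^\star$. This also covers nonnegative-measure and measurability subtleties.
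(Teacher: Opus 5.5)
Your proposal is correct and follows the paper's proof essentially step for step. One direction is feasibility of the dithered threshold test at every $Q_Y$ by uniformity; the other is the tilted output law $Q_Y^\star\propto t_y P_Y$, whose normalizer is finite by Markov's inequality and (A2), under which that same test becomes Neyman--Pearson optimal. Your explicit treatment of outputs with $t_y=0$ is more careful than the paper, which writes the likelihood ratio $\lambda^\ast m/\tau_y$ without addressing $\tau_y=0$ and mentions only the degenerate case $m\equiv0$. There $Q_Y^\star$ puts no mass and the test rejects the singular part $\{m>0\}$.
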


\begin{IEEEproof}
The test $T_e(x,y)=\Ind{\PEP{x}{y}\ge e^{-R}}$ has $Q_X$-mass $1-e^{-R}$ at each fixed $y$ by uniformity, hence is feasible at every $Q_Y$, giving ``$\le$''. For ``$\ge$'', a randomized output quantile $\tau_y,\theta_y$ solving $Q_X(m(\cdot,y)<\tau_y)+\theta_y\,Q_X(m(\cdot,y)=\tau_y)=1-e^{-R}$ defines $Q_Y^\star(y)\propto P_Y(y)\tau_y$ (finite by (A2) via Markov's inequality); under $Q_Y^\star$ the likelihood ratio is $\lambda^\star m(x,y)/\tau_y$, so the Neyman--Pearson test coincides exactly with $T_e$, achieving equality. See Appendix~\ref{app:meta-converse} for the full argument. A companion identity for the complementary (correct-decoding) tail $Z_c\triangleq-\log\PEC{X}{Y}$ holds by the same argument; it is not used in this paper.
\end{IEEEproof}

Assumptions \textup{(A1)}--\textup{(A2)} are mild and hold automatically in the standard discrete-memoryless setting under matched maximum-likelihood decoding, where the metric is everywhere finite with finite mean. Specialized to channel coding with the maximum-likelihood metric $m(x,y)=\W(y\mid x)/P_Y(y)$, the identity reads, for $(X,Y)\sim Q_X\cdot\W$,
\begin{equation}\label{eq:meta-channel}
  \Fspec{Q_X}{z}=\PRs{Q_X\cdot\W}{Z_e\le z}
    =\max_{Q_Y\in\cP(\cY)}\BETA{1-e^{-z}}{Q_X\times Q_Y}{Q_X\cdot\W},
\end{equation}
exhibiting the spectrum as an output-optimized binary hypothesis test. The same identity holds over an enlarged candidate space (for instance source--codeword pairs, for joint source--channel coding), an extension we do not pursue here.

\subsection{The reverse-channel identity}

The meta-converse form tests on the joint $(X,Y)$ space; the next form re-parametrizes the same quantity through a \emph{reverse channel} $W_{X\mid Y}$ subject to an essential-supremum divergence cap. This is the representation that the linear-programming formulation of Section~\ref{sec:prioropt} consumes. For a reverse channel with $W_{X\mid Y}(\cdot\mid y)\ll Q_X$ for $Q_Y$-a.e.\ $y$,
\[
  D_\infty\BRA{W_{X\mid Y}\,\|\,Q_X\mid Q_Y}\triangleq\esssup_{y\sim Q_Y}\ \sup_{x}\ \log\frac{W_{X\mid Y}(x\mid y)}{Q_X(x)},
\]
so the cap $D_\infty\le z$ is equivalent to the pointwise constraint $W_{X\mid Y}(x\mid y)\le e^{z}Q_X(x)$.

\begin{theorem}[Reverse-channel identity]\label{thm:reverse-channel}
Let $m:\cX\times\cY\to[0,\infty]$ satisfy the analogues of \textup{(A1)--(A2)} with $P_Y$ replaced by $Q_Y$. For every $z>0$,
\begin{align}
  \Es{Q_X\times Q_Y}{m(X,Y)\,\Ind{\PEC{X}{Y}\le e^{-z}}}
    &=\inf_{W:\,D_\infty(W\|Q_X\mid Q_Y)\le z}e^{-z}\,\Es{W\cdot Q_Y}{m(X,Y)},\label{eq:rc-inf}\\
  \Es{Q_X\times Q_Y}{m(X,Y)\,\Ind{\PEP{X}{Y}\le e^{-z}}}
    &=\sup_{W:\,D_\infty(W\|Q_X\mid Q_Y)\le z}e^{-z}\,\Es{W\cdot Q_Y}{m(X,Y)}.\label{eq:rc-sup}
\end{align}
\end{theorem}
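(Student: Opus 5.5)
The plan is to reduce both identities to a pointwise (fixed-$y$) fractional-knapsack problem, solve it by the bathtub principle using the dithered PEP as the threshold rule, and then integrate over $y\sim Q_Y$. Fix $y$ and parametrize an admissible reverse-channel slice by its density ratio $\phi_y(x)\triangleq e^{-z}W_{X\mid Y}(x\mid y)/Q_X(x)$. The cap $D_\infty\le z$ becomes $0\le\phi_y\le 1$, and normalization of $W_{X\mid Y}(\cdot\mid y)$ becomes $\Es{Q_X}{\phi_y(X)}=e^{-z}$. The objective satisfies $e^{-z}\Es{W(\cdot\mid y)}{m(X,y)}=\Es{Q_X}{\phi_y(X)\,m(X,y)}$. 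Hence the right-hand side of \eqref{eq:rc-sup} (respectively \eqref{eq:rc-inf}) is the $Q_Y$-average of
\[
\sup\ (\text{resp.\ }\inf)\ \Es{Q_X}{\phi\, m(\cdot,y)}\quad\text{over }0\le\phi\le1,\ \ \Es{Q_X}{\phi}=e^{-z},
\]
provided we may interchange the supremum (infimum) with the $y$-integral. Since $z>0$, we have $e^{-z}<1$, so the feasible set is nonempty.

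\textbf{Step 1 (candidate optimizer from the dithered PEP).} Define $\phi^\star_y(x)\triangleq\PRs{U}{\PEPU{x}{y}{U}\le e^{-z}}$. By Lemma~\ref{thm:pep-uniform}(iii), $\Es{Q_X}{\phi^\star_y}=\PRs{X,U}{\PEP{X}{y}\le e^{-z}}=e^{-z}$, so $\phi^\star_y$ is feasible. By Lemma~\ref{thm:pep-uniform}(ii), $\phi^\star_y$ is a threshold function of the score. Concretely, there is a level $\tau_y$ with $\phi^\star_y=1$ where $m(\cdot,y)>\tau_y$ and $\phi^\star_y=0$ where $m(\cdot,y)<\tau_y$; on the tie set $\phi^\star_y$ equals a constant in $[0,1]$, namely the portion of the dither interval lying below $e^{-z}$. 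Its objective value is exactly $\Es{Q_X}{m(X,y)\Ind{\PEP{X}{y}\le e^{-z}}}$, with the expectation also taken over $U$. Averaging this over $Q_Y$ gives the left side of \eqref{eq:rc-sup}. For \eqref{eq:rc-inf} we use $\psi^\star_y(x)\triangleq\PRs{U}{\PECU{x}{y}{U}\le e^{-z}}$, i.e.\ the event $\PEPU{x}{y}{U}\ge1-e^{-z}$. Uniformity again gives mass $e^{-z}$, and this function selects the lowest-score region instead.

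\textbf{Step 2 (optimality via the bathtub inequality).} For any feasible $\phi$, consider $\int(\phi^\star_y-\phi)(m(\cdot,y)-\tau_y)\,dQ_X$. The integrand is nonnegative pointwise: where $m>\tau_y$ we have $\phi^\star_y=1\ge\phi$, and where $m<\tau_y$ we have $\phi^\star_y=0\le\phi$. Both functions have the same mass, so $\int(\phi^\star_y-\phi)\tau_y\,dQ_X=0$, and therefore $\Es{Q_X}{\phi^\star_y m}\ge\Es{Q_X}{\phi\, m}$. The mirror argument gives minimality of $\psi^\star_y$. Here (A1) ensures $m(\cdot,y)<\infty$ $Q_X$-a.s., so $\tau_y$ is finite whenever $e^{-z}<1$ and no $\infty-\infty$ arises.

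\textbf{Step 3 (interchange and measurability).} This is the main obstacle. Since $\phi^\star_y(x)$ is expressed through $G_{x,y},H_{x,y}$, which are jointly measurable by the standing Fubini remark, $W^\star(x\mid y)\triangleq e^{z}Q_X(x)\phi^\star_y(x)$ is a bona fide reverse channel meeting the cap. It attains the pointwise supremum for every $y$, so the global supremum is at least the integrated value. Conversely, every admissible $W$ is pointwise dominated by the optimum, hence its objective is at most the integrated value. Thus the supremum is attained and equality holds. (A2) guarantees that all objectives are finite: $\phi\le1$ makes them bounded by $\Es{Q_X\times Q_Y}{m}$, which keeps the comparison meaningful. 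The infimum case is identical, with nonnegativity of $m$ making finiteness trivial.
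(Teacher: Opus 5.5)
Your proposal is correct and is essentially the paper's proof. The paper likewise takes $W^\star(x\mid y)=e^{z}Q_X(x)\,\PR{\PEP{x}{y}\le e^{-z}}$, gets its normalization from uniformity, proves optimality at each fixed $y$ by the same exchange inequality $\sum_x\Delta(x)\,m(x,y)\le\mu_y\sum_x\Delta(x)=0$ (its $\mu_y$ is your $\tau_y$), and obtains the $\inf$ form by filling from the bottom; your density-ratio parametrization $\phi_y$ and the explicit $Q_Y$-interchange only spell out what the paper compresses into its closing remark on the standard Borel setting.
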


\begin{IEEEproof}
By uniformity the event $\{\PEP{X}{Y}\le e^{-z}\}$ has $Q_X$-mass exactly $e^{-z}$ at each $y$, so $W^\star(x\mid y)\triangleq e^{z}Q_X(x)\PR{\PEP{x}{y}\le e^{-z}}$ is a valid reverse channel saturating the cap and supported on this event. Since the PEP is monotone (decreasing) in the metric rank, this event is a superlevel set of $m(\cdot,y)$; a bathtub (rearrangement) exchange argument---``fill from the top''---shows $W^\star$ maximizes the linear objective under the box constraint, giving~\eqref{eq:rc-sup}. The inf form~\eqref{eq:rc-inf} is the sign-reversed ``fill-from-the-bottom'' construction on the sublevel set $\{\PEC{X}{Y}\le e^{-z}\}$; see Appendix~\ref{app:reverse-channel} for the full argument.
\end{IEEEproof}

The two forms are dual: the correct-decoding event picks out \emph{sublevel} sets of $m(\cdot,y)$ (lower tail, $\inf$), the error event \emph{superlevel} sets (upper tail, $\sup$). Each identity is pointwise in the threshold, hence extends by linearity to any nonnegative combination of thresholds, the reverse channels at distinct thresholds remaining independent optimization variables.

\subsection{Joint convexity in the prior}

The final structural property makes prior optimization tractable: the output-optimized $\beta$-functional is jointly convex in the level and the prior.

\begin{theorem}[Joint convexity]\label{thm:joint-convex}
Fix $P_Y\in\cP(\cY)$ and a nonnegative kernel $m:\cX\times\cY\to[0,\infty)$, and set $g(x)\triangleq\Es{P_Y}{m(x,Y)}$. Then
\[
  F(\alpha,Q_X)=\max_{Q_Y\in\cP(\cY)}\BETA{\alpha}{Q_X\times Q_Y}{(Q_X\times P_Y)\cdot m}
\]
is jointly convex in $(\alpha,Q_X)$ on $[0,1]\times\cD$, where $\cD\triangleq\BRAs{Q_X\in\cP(\cX):g(x)<\infty\ \text{for }Q_X\text{-a.e.\ }x}$. The finiteness hypothesis is imposed on each prior entering a convex combination --- $g(x)<\infty$ for $Q_X^{(j)}$-a.e.\ $x$, for each $j$ --- and passes to the mixture $\sum_j\lambda_j Q_X^{(j)}$, so $\cD$ is convex.
\end{theorem}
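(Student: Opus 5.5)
The plan is to write $F$ as a supremum of functions that are jointly convex in $(\alpha,Q_X)$. A pointwise supremum of convex functions is convex. The natural route is the dual (Lagrangian) form of the Neyman--Pearson functional. For finite nonnegative measures $P,Q$, linear-programming duality gives
\[
  \BETA{\alpha}{P}{Q}=\sup_{\gamma\ge0}\Bigl(\gamma\alpha-\Es{Q}{\max\{\gamma\,\tfrac{dP}{dQ}-1,0\}}\Bigr),
\]
where the integral is understood with respect to a common dominating measure. Equivalently,
\[
  \BETA{\alpha}{P}{Q}=\sup_{\gamma\ge0}\bigl(\gamma\alpha-\textstyle\int(\gamma\,dP-dQ)^+\bigr).
\]
I will justify this first. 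Weak duality is immediate. Strong duality holds because an optimal Neyman--Pearson threshold test exists, and its threshold $\gamma^\star$ and randomization attain equality. The degenerate cases $\alpha\in\{0,1\}$ are handled directly.

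Next I substitute $P=Q_X\times Q_Y$ and $Q=(Q_X\times P_Y)\cdot m$. Both are linear in $Q_X$ through the common factor $Q_X(dx)$. Hence the inner integral factorizes as
\[
  \int Q_X(dx)\,\phi_{\gamma,Q_Y}(x),\qquad
  \phi_{\gamma,Q_Y}(x)\triangleq\int\bigl(\gamma\,Q_Y(dy)-m(x,y)P_Y(dy)\bigr)^+ .
\]
Here $\phi_{\gamma,Q_Y}(x)$ does not depend on $Q_X$. This gives
\[
  F(\alpha,Q_X)=\sup_{Q_Y}\sup_{\gamma\ge0}\Bigl(\gamma\alpha-\Es{Q_X}{\phi_{\gamma,Q_Y}(X)}\Bigr).
\]
For fixed $(Q_Y,\gamma)$, the bracket is affine in $\alpha$ and affine in $Q_X$. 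This holds on the convex set $\cD$, where every quantity is finite: $\phi\le\gamma$, and $m$ enters only through $g$ being finite a.e. Hence the bracket is jointly affine in $(\alpha,Q_X)$. The supremum over the index set $(Q_Y,\gamma)$ is therefore jointly convex. This completes the argument once the duality step is in place.

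Finally, I check that $\cD$ is convex. If $g<\infty$ holds $Q_X^{(j)}$-a.e.\ for each $j$, then the exceptional set is null for each component, and so null for the mixture.

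The main obstacle is the strong duality step in the general standard-Borel setting with an unnormalized $Q$ that may have infinite total mass locally. I must make sure the supremum over $\gamma$ is attained or approached, and that no $\infty-\infty$ appears. I will restrict to $Q_X\in\cD$ and use (A2)-type finiteness, which gives $\int m\,dQ_X\,dP_Y<\infty$. Note that only weak duality ($\ge$ with sup) is needed for convexity if the representation is exact. So I will establish exactness by exhibiting the NP test: $T=1$ when $\gamma\,dP/dQ>1$, randomized on ties. I will use existence of the optimum as stated before \cref{thm:meta-converse}.
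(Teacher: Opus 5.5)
Your proof is correct, but it takes a different route from the paper. The paper argues on the primal side. At fixed $Q_Y$, take optimal tests $T_1,T_2$ for $(\alpha_1,Q_X^{(1)})$ and $(\alpha_2,Q_X^{(2)})$, and the mixture $\bar Q_X\triangleq\lambda Q_X^{(1)}+(1-\lambda)Q_X^{(2)}$. The test $T\triangleq\lambda\tfrac{dQ_X^{(1)}}{d\bar Q_X}T_1+(1-\lambda)\tfrac{dQ_X^{(2)}}{d\bar Q_X}T_2$ takes values in $[0,1]$, meets the mixed level, and costs exactly the mixed value. This works because both measures in the test carry the common factor $Q_X(dx)$. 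The pointwise maximum over $Q_Y$ then preserves joint convexity; this is Polyanskiy's saddle-point argument, which the paper cites rather than writes out.

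You instead dualize the Neyman--Pearson functional and exhibit $F$ directly as a supremum over $(Q_Y,\gamma)$ of functions jointly affine in $(\alpha,Q_X)$. The same shared-$Q_X$ structure shows up in your proof as the $Q_X$-independence of $\phi_{\gamma,Q_Y}$. The primal route needs no duality theorem, only (near-)optimal tests. Yours pays for a strong-duality step but returns explicit affine minorants $\gamma\alpha-\Es{Q_X}{\phi_{\gamma,Q_Y}(X)}$, i.e.\ subgradients in $Q_X$. It also mirrors the per-output threshold dual of Remark~\ref{rem:lp-dual}.

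Three bookkeeping points, none of which breaks the argument.

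First, membership in $\cD$ does \emph{not} give (A2)-type integrability: $g<\infty$ $Q_X$-a.e.\ allows $\Es{Q_X}{g(X)}=\infty$. You do not need it. Write $p,q$ for densities of $P,Q$ with respect to a common dominating measure. The Neyman--Pearson test $T^\star$ at threshold $\gamma^\star$ has $q\le\gamma^\star p$ wherever $T^\star>0$, so $\Es{Q}{T^\star}\le\gamma^\star\alpha<\infty$ and no $\infty-\infty$ arises for $\alpha<1$. At $\alpha=1$, the identity $\gamma-\int(\gamma p-q)^+=\int\min\{\gamma p,q\}$ and monotone convergence give the (possibly infinite) value as $\gamma\to\infty$.

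Second, since $0\le\phi_{\gamma,Q_Y}\le\gamma$, your affine pieces are finite for every prior, so your argument never actually uses $\cD$.

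Third, what your argument needs is exactness of the dual representation (strong duality), not ``only weak duality''. Your NP-test construction supplies that exactness.
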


\begin{IEEEproof}
At each fixed $Q_Y$, joint convexity of $(\alpha,Q_X)\mapsto\beta_\alpha$ follows by mixing optimal tests, exactly as in the saddle-point convexity argument of~\cite{polyanskiy2013saddle}; the finiteness hypothesis $Q_X^{(j)}\in\cD$ keeps every term of the mixture finite, and the pointwise maximum over $Q_Y$ preserves joint convexity. Details omitted.
\end{IEEEproof}

Geometrically, $\beta_\alpha$ is the value of a linear program, convex in the constraint level $\alpha$ and concave in the cost, into which $Q_X$ enters linearly through the mixture distribution; it traces a convex Pareto frontier in the (type-I, type-II) plane, and the supremum over $Q_Y$ inherits joint convexity. The hypothesis is automatic in the channel-coding setting, where $g(x)$ reduces to $1$, and in the joint source--channel setting, where it reduces to a source likelihood ratio. This convexity, together with the threshold-wise separability of the reverse-channel identity, is precisely what permits prior optimization to be carried out as a convex---indeed linear---program in Section~\ref{sec:prioropt}.

\section{Point-to-Point Channel Coding}\label{sec:channel}

This section applies the PEP primitive of Section~\ref{sec:pep} to point-to-point channel coding. Computing the exact optimal one-shot error probability is believed to be NP-hard~\cite{costa2010one}; we bracket it by a comparable achievability/converse pair, both read off the same error spectrum $\Fspec{Q_X}{z}$.

Throughout, fix a channel $\W:\cX\to\cY$ and a decoding metric $m:\cX\times\cY\to[0,\infty)$, and transmit a message $I\sim\Unif{[M]}$ as $X=\varphi(I)$ through $\W$. A codebook $\cC=\BRAs{x_1,\dots,x_M}$ induces the encoder $\varphi(i)=x_i$; the decoder selects $\cD(y)=\argmax_i m(x_i,y)$, with ties resolved by the dither rule of Section~\ref{sec:pep}. The block error probability of a fixed code is $P_e(\cC)\triangleq\PR{\cD(Y)\neq I}$, and the optimal probability at rate $R\ge0$ is $P_e(R)\triangleq\inf_{\abs{\cC}=\lceil e^R\rceil}P_e(\cC)$.

A \emph{random codebook} $\bC=\BRAs{X_1,\dots,X_M}$ draws each codeword i.i.d.\ from a prior $Q_X\in\cP(\cX)$; its average error probability is $\bar P_e(R;Q_X)\triangleq\Es{\bC}{P_e(\bC)}$, and the probabilistic method gives $P_e(R)\le\bar P_e(R;Q_X)$. The design problem thus reduces to optimizing over the prior $Q_X$. The dither-based tie rule, recalled next, makes this random-coding analysis \emph{exact}.

\subsection{Dithered tie-breaking and the PEP}

Decoding uses the construction of Section~\ref{sec:pep} verbatim: each codeword carries an i.i.d.\ dither $U_i\sim\uU$, score--dither pairs are compared by the lexicographic order~\eqref{eq:lex-order}, and the decoder is $\cD(y)=\argmax_{1\le i\le M}(m(X_i,y),U_i)$ with respect to $\succ$ --- a rule that is almost surely decisive. With the dithered PEP $\PEPU{x}{y}{u}=G_{x,y}+u\,H_{x,y}$ of Definition~\ref{def:pep} and its randomized form $\PEP{x}{y}=\PEPU{x}{y}{U}$, two facts drive the analysis: decoding by pairs is probabilistically equivalent to comparing PEPs, and a random competitor's PEP is uniform on $[0,1]$ (Lemma~\ref{thm:pep-uniform}):
\begin{align}
\PR{(m(X,y),U)\succ(m(x,y),u)}
&=\PR{\PEP{X}{y}<\PEPU{x}{y}{u}}\label{eq:ch-lex}\\
&=\PEPU{x}{y}{u}.\label{eq:ch-unif}
\end{align}

\subsection{Exact random-coding performance}

The first result is an exact evaluation of the ensemble-average error probability, equivalent to the fair-split expression \cite[Thm.~15]{polyanskiy2010channel} but streamlined through the PEP. ``Exact'' is meant at the level of the ensemble average and finite codebook size $M$: no asymptotic limit and no relaxation of the codeword-pair counting.

\begin{theorem}[Exact ensemble-average random coding]\label{thm:exact-rc}
For $M=\lceil e^R\rceil$ codewords drawn i.i.d.\ from $Q_X$,
\begin{equation}\label{eq:exact-rc}
\bar P_e(R;Q_X)=\Es{X,Y,U}{1-(1-\PEPU{X}{Y}{U})^{M-1}},
\end{equation}
where $(X,Y)\sim Q_X\cdot\W$ and $U\sim\uU$.
\end{theorem}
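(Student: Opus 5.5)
By symmetry of the random ensemble and of the dithered decoder under relabeling of messages, the ensemble-average error probability does not depend on the transmitted index. I will therefore condition on $I=1$ and write $\bar P_e(R;Q_X)=\PR{\cD(Y)\neq 1\mid I=1}$. Under this conditioning the transmitted triple $(X_1,U_1,Y)$ has law $(Q_X\cdot\W)\times\uU$. The competitors $(X_j,U_j)$, $j=2,\dots,M$, are i.i.d.\ $Q_X\times\uU$ and jointly independent of $(X_1,U_1,Y)$, since $Y$ is generated from $X_1$ alone.

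Next, fix $(x,y,u)=(X_1,Y,U_1)$. Because the dithered lexicographic order \eqref{eq:lex-order} yields an almost surely decisive argmax, the event of correct decoding is exactly the event that no competitor outranks the transmitted pair:
\[
\BRAs{\cD(y)=1}=\bigcap_{j=2}^{M}\BRAs{(m(X_j,y),U_j)\not\succ(m(x,y),u)}\quad\text{a.s.}
\]
The events $\{(m(X_j,y),U_j)=(m(x,y),u)\}$ have probability zero because the dithers are continuous, so ``not outranked'' and ``outranked by'' are complementary up to null sets. By \eqref{eq:ch-lex}--\eqref{eq:ch-unif}, which is Lemma~\ref{thm:pep-uniform}(i), each competitor outranks with probability exactly $\PEPU{x}{y}{u}$. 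Independence of the competitors then gives the conditional probability of correct decoding as $(1-\PEPU{x}{y}{u})^{M-1}$.

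Finally, I will take expectation over $(X_1,Y,U_1)$, using Fubini and the joint measurability of $G_{x,y},H_{x,y}$ noted in Section~\ref{sec:pep}. This gives \eqref{eq:exact-rc}, with $M=\lceil e^R\rceil$ substituted.

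The only delicate step is the almost-sure decisiveness and the handling of ties. Specifically, one must check that ties in the pair $(m,u)$ occur with probability zero, including when $m$ takes the value $\infty$ or has atoms. This holds because a tie in the pair requires $U_j=U_1$, which is a null event for independent uniform dithers. Everything else is direct conditioning.
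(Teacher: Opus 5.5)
Your proposal is correct and follows the paper's proof: condition on the first codeword being sent, factor the conditional correct-decoding probability over the $M-1$ competitors (independent given $(X_1,Y,U_1)$), evaluate each factor as $1-\PEPU{X_1}{Y}{U_1}$ via Lemma~\ref{thm:pep-uniform}(i), and take the expectation. Your remarks on the null pair-tie event $\{U_j=U_1\}$ and on measurability for Fubini make explicit what the paper leaves implicit.
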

\begin{IEEEproof}
By symmetry assume the first codeword is sent. Conditioned on $(X_1,Y,U_1)$, the $M-1$ competitors are independent, so the correct-decoding probability factors as
\[
\prod_{i=2}^{M}\PR{(m(X_1,Y),U_1)\succ(m(X_i,Y),U_i)}.
\]
By \eqref{eq:ch-lex}--\eqref{eq:ch-unif} each factor equals $1-\PEPU{X_1}{Y}{U_1}$, giving $(1-\PEPU{X_1}{Y}{U_1})^{M-1}$; taking expectations yields \eqref{eq:exact-rc}.
\end{IEEEproof}

Using $\E{U^k}=1/(k+1)$, \eqref{eq:exact-rc} is algebraically equivalent to \cite[Eq.~55]{polyanskiy2010channel}, recovering the exact random-coding probability of Polyanskiy, Poor and \verdu, and matching the dithered expression of \cite{HaimKE18}.

\subsection{The \texorpdfstring{$\RCUp$}{RCU+} bound}

Clipping the union bound turns the exact expression into a tractable achievability bound.

\begin{theorem}[$\RCUp$ bound]\label{thm:rcu-plus}
With $M=\lceil e^R\rceil$,
\begin{equation}\label{eq:rcu-plus}
\bar P_e(R;Q_X)\le\tilde P_e(R;Q_X)\triangleq\Es{X,Y,U}{\min\BRA{1,e^R\,\PEPU{X}{Y}{U}}}.
\end{equation}
\end{theorem}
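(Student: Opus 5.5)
The plan is to start from the exact expression of \cref{thm:exact-rc} and bound its integrand pointwise. Write $p\triangleq\PEPU{X}{Y}{U}\in[0,1]$. By \eqref{eq:exact-rc},
\[
\bar P_e(R;Q_X)=\Es{X,Y,U}{1-(1-p)^{M-1}},
\]
so it suffices to show that, for every $p\in[0,1]$,
\[
1-(1-p)^{M-1}\le\min\BRA{1,e^R p}.
\]
Taking expectations over $(X,Y)\sim Q_X\cdot\W$ and $U\sim\uU$ then gives \eqref{eq:rcu-plus}. Monotonicity of expectation is enough here; no measurability issues arise beyond those already settled for $G_{x,y}$ and $H_{x,y}$.

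For the pointwise inequality, the bound by $1$ is immediate because $(1-p)^{M-1}\ge0$. For the other branch, I would use Bernoulli's inequality $(1-p)^{k}\ge1-kp$, valid for $p\in[0,1]$ and integer $k\ge0$. Equivalently, this is the union bound over the $M-1$ competitors, each outranking the transmitted codeword with probability $p$ by \eqref{eq:ch-lex}--\eqref{eq:ch-unif}. It gives
\[
1-(1-p)^{M-1}\le(M-1)p.
\]

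The only point needing care is replacing $M-1$ by $e^R$ when $M=\lceil e^R\rceil$ is not equal to $e^R$. Since $\lceil e^R\rceil<e^R+1$, we have $M-1<e^R$, so $(M-1)p\le e^R p$. Combining the two branches yields the pointwise bound, and \eqref{eq:rcu-plus} follows.

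I do not expect a real obstacle. The step to state explicitly is that the exact formula of \cref{thm:exact-rc} uses the dithered PEP, so the union bound is applied to the correctly randomized tie rule with no ties-as-errors slack. This is why the clipped bound is written in terms of the same $\PEPU{X}{Y}{U}$ and not $G_{X,Y}+H_{X,Y}$.
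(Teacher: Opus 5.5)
Your proposal is correct and matches the paper's proof: apply $1-(1-p)^{M-1}\le\min(1,(M-1)p)$ pointwise to the exact expression of \cref{thm:exact-rc}, then use $M-1\le e^{R}$ for $M=\lceil e^{R}\rceil$. Your remark that the union step acts on the dithered $\PEPU{X}{Y}{U}$ rather than on $G_{X,Y}+H_{X,Y}$ is exactly why, as the paper notes, the PPV RCU bound follows from $\RCUp$ by a further relaxation.
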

\begin{IEEEproof}
Immediate from $1-(1-x)^{M-1}\le\min(1,(M-1)x)$ applied to \eqref{eq:exact-rc}, followed by $M-1\le e^{R}$ for $M=\lceil e^R\rceil$, which bounds the clipped slope by $e^{R}$ and keeps the bound defined for all real $R$.
\end{IEEEproof}

The functional $\tilde P_e$, which we call $\RCUp$, is a refinement of the clipped union bound. It is tighter than the PPV $\mathrm{RCU}$ bound \cite[Thm.~16]{polyanskiy2010channel}, which follows from it by the relaxation $\PEP{X}{Y}\le G_{X,Y}+H_{X,Y}$. Moreover, by concavity of $w\mapsto\min\BRAs{1,w}$ and Jensen's inequality applied to the inner expectation over $U$,
\begin{equation}\label{eq:rcu-star}
\tilde P_e(R;Q_X)\le\Es{X,Y}{\min\BRA{1,e^R\BRA{G_{X,Y}+\tfrac12 H_{X,Y}}}},
\end{equation}
which is the $\mathrm{RCU}^*$ bound of \cite[Eq.~23]{HaimKE18}, provably tighter than both $\mathrm{RCU}$ and the dependence-testing ($\mathrm{DT}$) bound of~\cite{polyanskiy2010channel}.

Specialized to matched maximum-likelihood decoding, the bound contains the classical threshold bounds by two elementary steps. Splitting the clipped kernel at a shifted threshold gives, for every $\gamma>0$, $\tilde P_e\le\Fspec{Q_X}{R+\gamma}+e^{-\gamma}$; weakening the spectrum by the change-of-measure estimate $\PEP{x}{y}\le G_{x,y}+H_{x,y}\le e^{-\idens{x}{y}}$ (the information density taken with respect to the $Q_X$-induced output law) gives $\PR{\idens{X}{Y}\le R+\gamma}+e^{-\gamma}$ --- the Han--\verdu\ information-spectrum direct bound~\cite{verdu1994general,koga2002information}, whose maximal-error version, Feinstein's threshold bound~\cite{feinstein1955new}, follows by the standard average-to-maximal conversion~\cite{shannon1957certain}. Applying the same estimate inside the kernel instead gives $\E{\min\BRAs{1,e^{R}e^{-\idens{X}{Y}}}}=\E{e^{-[\idens{X}{Y}-R]^{+}}}$, the $\mathrm{DT}$ bound up to the constant in the threshold.

\subsection{Error spectrum and integral formula}

In analogy with information-spectrum methods, the primary informational quantity is the error spectrum of the negative log-PEP, $\Fspec{Q_X}{z}=\PR{-\log\PEP{X}{Y}\le z}$ with $(X,Y)\sim Q_X\cdot\W$ (\cref{def:spectrum}).

The map $z\mapsto\Fspec{Q_X}{z}$ is continuous, since the PEP distribution is atomless (Lemma~\ref{lem:atomless}; its absolute-continuity hypothesis --- automatic for discrete alphabets, and satisfied whenever the channel has output densities --- is assumed throughout this section). The $\RCUp$ bound is an expectation of a function of the PEP, so the spectrum carries all the information needed to compute it.

\begin{theorem}[Integral representation]\label{thm:rcu-integral}
\begin{equation}\label{eq:integral}
\tilde P_e(R;Q_X)=e^R\int_R^\infty e^{-z}\,\Fspec{Q_X}{z}\,dz.
\end{equation}
\end{theorem}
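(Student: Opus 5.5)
The identity is a layer-cake (integration-by-parts) computation. It applies to the random variable $Z_e\triangleq-\log\PEPU{X}{Y}{U}$ with $(X,Y)\sim Q_X\cdot\W$ and $U\sim\uU$, whose CDF is $\Fspec{Q_X}{\cdot}$ by \cref{def:spectrum}. Since $\PEPU{X}{Y}{U}\in[0,1]$, we have $Z_e\in[0,\infty]$. The value $Z_e=\infty$, i.e.\ PEP equal to zero, can only carry mass in a way that the formula handles automatically, because it contributes $0$ to the kernel. I first rewrite the kernel of \eqref{eq:rcu-plus} as a function of $Z_e$:
\[
\min\BRA{1,e^{R}\PEPU{X}{Y}{U}}=\min\BRA{1,e^{R-Z_e}}=\phi(Z_e),\qquad \phi(z)\triangleq\min\BRA{1,e^{R-z}}.
\]
So $\tilde P_e(R;Q_X)=\E{\phi(Z_e)}$. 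Here $\phi$ is nonincreasing, equals $1$ on $(-\infty,R]$, and decays to $0$ as $z\to\infty$.

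Next I write $\phi$ as a tail integral of its negative derivative. For every $t\in[0,\infty]$,
\[
\phi(t)=\int_{t}^{\infty}\BRA{-\phi'(z)}\,dz=\int_{\max(t,R)}^{\infty}e^{R-z}\,dz,
\]
using $-\phi'(z)=e^{R-z}\Ind{z>R}$. This equals $e^{R-t}$ for $t\ge R$ and $1$ for $t<R$, and $0$ for $t=\infty$. Equivalently, $\phi(t)=\int_R^\infty e^{R-z}\Ind{t\le z}\,dz$. Substituting $t=Z_e$ and taking expectations, Tonelli's theorem applies because the integrand is nonnegative and jointly measurable. This gives
\[
\E{\phi(Z_e)}=\int_R^\infty e^{R-z}\,\PR{Z_e\le z}\,dz=e^R\int_R^\infty e^{-z}\Fspec{Q_X}{z}\,dz,
\]
which is \eqref{eq:integral}.

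The only delicate point is the boundary behaviour, namely strict versus non-strict inequalities and the atom at $z=R$. Writing the indicator as $\Ind{t\le z}$ gives exactly the CDF $\Fspec{Q_X}{z}=\PR{Z_e\le z}$ of \cref{def:spectrum}, so no atomlessness is needed. Still, I would note that by \cref{lem:atomless} the choice between $<$ and $\le$ is immaterial in any case. I would also check separately that the convention $\min(1,e^{R}\cdot 0)=0$ matches $\Ind{\infty\le z}=0$. Beyond that I expect no real obstacle; the argument is a direct Fubini exchange.
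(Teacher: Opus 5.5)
Your proof is correct and is essentially the paper's argument: both rewrite $\tilde P_e(R;Q_X)$ as $\E{\min(1,e^{R-Z_e})}$ and turn this expectation of a nonincreasing function of $Z_e$ into an integral of its CDF $\Fspec{Q_X}{\cdot}$. The paper does this by Stieltjes integration by parts, which relies on the continuity from \cref{lem:atomless} to treat the point $z=R$ and the boundary terms, whereas your Tonelli (layer-cake) exchange needs neither — a small but real simplification.
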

\begin{IEEEproof}
Writing $Z_e\triangleq-\log\PEP{X}{Y}$, whose CDF is the spectrum
$\Fspec{Q_X}{\cdot}$, we have
\[
\tilde P_e(R;Q_X)=\E{\min(1,e^{R-Z_e})}=\Fspec{Q_X}{R}+\int_R^\infty e^{R-z}\,d\Fspec{Q_X}{z};
\]
integration by parts gives \eqref{eq:integral}.
\end{IEEEproof}

Substituting $z=R+u$ in~\eqref{eq:integral} gives the probabilistic reading
\begin{equation}\label{eq:exp-window}
\tilde P_e(R;Q_X)=\int_0^\infty\Fspec{Q_X}{R+u}\,e^{-u}\,du=\E{\Fspec{Q_X}{R+T}},\qquad T\sim\mathrm{Exp}(1):
\end{equation}
the bound is the spectrum smoothed \emph{forward} through an exponential window of mean one nat. The window's unit decay rate is structural --- one extra nat of rate multiplies the competitor count $e^R$ by $e$, so spectrum mass one nat past the threshold is discounted by exactly a factor $e$ --- and it is fixed once and for all: it never sharpens. Which of its features the bound feels is decided by the spectrum: a steep spectrum is dominated by the window's tail, a flat one reads off its mean, a one-nat forward shift. This is the source of the rate gap quantified in \cref{sec:rategap}.

The behavior of $\tilde P_e$ is governed by the slope of its error exponent. Let $E(R)\triangleq-\log\tilde P_e(R;Q_X)$. Differentiating \eqref{eq:integral} yields the closed-form identity (Appendix~\ref{app:integral})
\begin{equation}\label{eq:slope-identity}
\tilde P_e(R;Q_X)=\frac{\Fspec{Q_X}{R}}{1+\dot E(R)},\qquad -1<\dot E(R)\le 0,
\end{equation}
valid for every $Q_X$ and every channel at any $R$ with $\Fspec{Q_X}{R}>0$. When $\dot E(R)\approx 0$ (steep exponent) the random-coding probability tracks the spectrum at $R$; as $\dot E(R)\to-1$ (straight-line regime) the integral tail dominates and the ratio $\tilde P_e/\Fspec{Q_X}{\cdot}$ is unbounded. The error-exponent envelope and the finite-blocklength normal (dispersion) approximation both follow from the integral formula~\eqref{eq:integral} by the same kernel argument; we do not develop these refinements here.

\subsection{Fixed-code converse via PEP}

For a fixed codebook the maximum-metric decoder errs exactly when the transmitted codeword is outranked --- a threshold event on its PEP; the same spectrum therefore governs a general converse, exact for \emph{any} code under an \emph{arbitrary} metric.

\begin{theorem}[Fixed-code converse equality]\label{thm:cc-converse}
For any codebook $\cC$ with $M$ codewords, let $Q_X$ be the uniform (empirical) distribution over $\cC$. Then, for the dithered maximum-metric decoder,
\begin{equation}\label{eq:cc-converse}
P_e(\cC)=\PR{\PEP{X}{Y}\ge 1/M}.
\end{equation}
The scope of the identity across other tie-breaking rules is delimited in Remark~\ref{rem:converse-decoder-scope} below.
\end{theorem}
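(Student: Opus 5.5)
The plan is to prove \eqref{eq:cc-converse} by conditioning on the transmitted index and the output and then counting ranks. The claim holds in probability after averaging over the dithers, not pathwise. Write $\cC=\BRAs{x_1,\dots,x_M}$, with repeated codewords counted with multiplicity so that $Q_X(\cdot)=\frac1M\sum_j\delta_{x_j}$. Drawing $X\sim Q_X$ is the same as drawing $I\sim\Unif{[M]}$ and setting $X=x_I$, so the right-hand side of \eqref{eq:cc-converse} is $\PR{\PEPU{x_I}{Y}{U}\ge 1/M}$ with $Y\sim\W(\cdot\mid x_I)$ and an independent $U\sim\uU$. The left-hand side is the error probability of the dithered decoder with i.i.d.\ dithers $U_1,\dots,U_M$. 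Both sides are averages over $(I,Y)$ of conditional probabilities, so it suffices to show that, for each fixed $i$ and $y$, the conditional error probability equals $\PR{\PEPU{x_i}{y}{U}\ge 1/M}$.

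Fix $(i,y)$. Let $g\triangleq\#\BRAs{j:m(x_j,y)>m(x_i,y)}$ and $h\triangleq\#\BRAs{j:m(x_j,y)=m(x_i,y)}$, where the second count includes $j=i$, so $h\ge1$. By Definition~\ref{def:pep} with the empirical prior, $G_{x_i,y}=g/M$ and $H_{x_i,y}=h/M$, hence $\PEPU{x_i}{y}{u}=(g+uh)/M$.

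On the decoding side, the lexicographic rule \eqref{eq:lex-order} gives correct decoding if and only if $g=0$ and every one of the $h-1$ tied competitors has a dither exceeding $U_i$. Conditional on $U_i=u$, this has probability $\Ind{g=0}(1-u)^{h-1}$. Integrating over $u$ gives a conditional error probability of $1$ if $g\ge1$ and $1-\int_0^1(1-u)^{h-1}du=1-1/h$ if $g=0$.

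On the PEP side, if $g\ge1$ then $(g+Uh)/M\ge1/M$ surely, so the probability is $1$. If $g=0$ the event is $\BRAs{U\ge1/h}$, of probability $1-1/h$. The two conditional quantities therefore agree, and averaging over $(I,Y)$ yields \eqref{eq:cc-converse}.

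The only delicate step is the tie case $g=0$, $h>1$, which also covers repeated codewords. There the identity is \emph{not} pathwise in the transmitted dither: the event $\BRAs{U_i\ge1/h}$ differs from the actual error event $\BRAs{\exists j\ne i \text{ tied}: U_j<U_i}$. The events agree only in probability, via $\E{(1-U)^{h-1}}=1/h$. This is why the statement is tied to the dithered decoder, and why the scope for deterministic tie-breaking must be delimited separately in Remark~\ref{rem:converse-decoder-scope}. I would state this averaging explicitly rather than try to couple the two events. I would also note that multiplicities must be kept in $Q_X$, so that $H_{x_i,y}$ counts identical codewords correctly.
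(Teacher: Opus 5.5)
Your proof is correct and is essentially the paper's argument. The paper likewise conditions on the scores and the transmitted index and compares tie-block counts, packaged as the finite rank-equivalence Lemma~\ref{app:lem-rank} with $L=1$, which yields exactly your conditional values $\Ind{g\ge1}+\Ind{g=0}(1-1/h)$ on both sides. The only difference is that the paper proves the strict event $\{\PEP{X}{Y}>1/M\}$ and then uses atomlessness (Lemma~\ref{lem:atomless}) to pass to $\ge$; your direct evaluation of $\PR{U\ge1/h}$ at fixed $(i,y)$ makes that step unnecessary.
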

\begin{IEEEproof}
Under the empirical prior $Q_X$, for each $i$ the value $\PEPU{x_i}{y}{U_i}=G_{x_i,y}+U_i H_{x_i,y}$ equals the (dithered) normalized rank of $x_i$ by score among the $M$ candidates. The transmitted codeword is mis-decoded iff its rank exceeds $1$, i.e.\ iff $\PEP{X}{Y}>1/M$; atomlessness (Lemma~\ref{lem:atomless}) gives $\PR{\PEP{X}{Y}=1/M}=0$, so the strict and non-strict events coincide. This parallels the rank-style converse of \cite[Lem.~1]{somekh2015general}, refined here to handle ties; see Appendix~\ref{app:integral} for the full argument.
\end{IEEEproof}

\begin{remark}[Which decoder ``exact'' refers to]\label{rem:converse-decoder-scope}
The error probability in \cref{thm:cc-converse} is that of the \emph{dithered} maximum-metric decoder --- a randomized tie rule. How the identity fares under other tie rules differs sharply between the matched and mismatched cases.

\emph{Matched case: the identity is tie-rule-independent and equals the optimal (MAP) error.} With $m(x,y)=\log\W(y\mid x)$ (or any metric inducing the same ordering of the codewords at each $y$) and uniform messages, the posterior of message $i$ given $Y=y$ is $\W(y\mid x_i)/\sum_j\W(y\mid x_j)$, strictly increasing in the metric value; the metric maximizers are exactly the posterior maximizers and \emph{all} carry the same, maximal, posterior. Every rule for choosing among them --- the dither, a fair split, or any deterministic rule --- attains the same conditional correct-decoding probability $\max_i\PR{I=i\mid Y=y}$, the MAP optimum. This equal-posterior argument is what legitimizes reading~\eqref{eq:mc-tight} below as the fixed-code exactness of~\cite{vazquez2016bayesian}, whose $P_e$ is the optimal decoder's.

\emph{Mismatched case with metric atoms: the identity is specific to the dithered rule.} Codewords tied in metric can carry unequal posteriors under the true channel, and a tie rule favoring the higher-posterior candidate achieves strictly smaller error; for the maximum-metric class with arbitrary tie-breaking, \eqref{eq:cc-converse} is then neither an equality nor a one-sided bound.

The discrepancy is confined to the tie event: all maximum-metric decoders coincide off the event that the top metric is attained by two or more codewords, so any tie rule's error differs from $\PR{\PEP{X}{Y}\ge1/M}$ by at most the probability of that top-rank tie --- zero whenever the metric ties carry no probability mass.
\end{remark}

The identity is exact, with no inequality, so the converse loss for a code reduces entirely to the gap between its empirical prior $Q_X$ and the prior-optimal $Q_X^\ast$. Combining achievability \eqref{eq:slope-identity} and converse \eqref{eq:cc-converse} brackets the optimal probability through the single spectrum:
\begin{equation}\label{eq:ach-conv-bracket}
\inf_{Q_X}\Fspec{Q_X}{R}\ \le\ P_e(R)\ \le\ \inf_{Q_X}\frac{\Fspec{Q_X}{R}}{1+\dot E(R)}.
\end{equation}
The prior plays structurally distinct roles on the two sides: the converse evaluates $\Fspec{Q_X}{R}$ at one threshold, whereas achievability optimizes a kernel-weighted integral \eqref{eq:integral} of the same spectrum. \Cref{sec:rategap} quantifies the width of the bracket~\eqref{eq:ach-conv-bracket} at a fixed prior; Section~\ref{sec:prioropt} develops the linear program that computes the converse optimum.

\subsection{Matched case and the meta-converse}

When the decoder is unconstrained, matched maximum-likelihood decoding is optimal, and the spectrum becomes the value of a binary hypothesis test, linking it to the meta-converse.

\begin{theorem}[Meta-converse identity, matched case]\label{thm:cc-matched}
In the matched case, with the likelihood-ratio metric $m(x,y)=\W(y|x)/P_Y(y)$ (order-equivalent in $x$ to the matched log-likelihood, hence inducing the same spectrum),
\begin{equation}\label{eq:meta-converse}
\Fspec{Q_X}{R}=\max_{Q_Y}\BETA{1-e^{-R}}{Q_X\times Q_Y}{Q_X\cdot\W}.
\end{equation}
\end{theorem}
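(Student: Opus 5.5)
This is a direct specialization of \cref{thm:meta-converse} with the likelihood-ratio metric $m(x,y)=\W(y\mid x)/P_Y(y)$, where $P_Y$ is the output law induced by $Q_X\cdot\W$. The plan has three steps. First, check the hypotheses (A1)--(A2). Second, identify the metric-tilted measure $(Q_X\times P_Y)\cdot m$ with the channel joint $Q_X\cdot\W$. Third, rewrite the left side of \eqref{eq:meta-converse-e} as the spectrum.

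\textbf{Identification of the measure.} By definition,
\[
\big((Q_X\times P_Y)\cdot m\big)(x,y)=Q_X(x)P_Y(y)\,\frac{\W(y\mid x)}{P_Y(y)}=Q_X(x)\W(y\mid x),
\]
so the reference measure in the $\beta$-functional is exactly $Q_X\cdot\W$. In the general case the ratio is read as a Radon--Nikodym derivative $d(Q_X\cdot\W)/d(Q_X\times P_Y)$, which exists under the absolute-continuity hypothesis already assumed in \cref{lem:atomless}.

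\textbf{Hypotheses.} For (A2), $\Es{Q_X\times P_Y}{m}=(Q_X\cdot\W)(\cX\times\cY)=1<\infty$. For (A1), finiteness of $m$ is needed only $Q_X\times P_Y$-a.e. The set where the density is infinite is $Q_X\times P_Y$-null, so after modifying the version of $m$ on that null set (A1) holds. Such a modification changes neither $\PEP{\cdot}{\cdot}$ nor the tilted measure. I expect this measure-theoretic cleanup to be the only delicate point. It is vacuous for discrete alphabets, and $g(x)=1$ as remarked after \cref{thm:joint-convex}.

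\textbf{Left side.} Applying the change of measure $dQ_X\cdot\W=m\,d(Q_X\times P_Y)$ to the indicator $\Ind{Z_e\le R}$ gives
\[
\Es{Q_X\times P_Y}{m(X,Y)\Ind{Z_e\le R}}=\PRs{Q_X\cdot\W}{-\log\PEP{X}{Y}\le R}=\Fspec{Q_X}{R}.
\]
The PEP on both sides is computed from the same metric and prior, so the integrand is the same measurable function under either law. Substituting this and the identified reference measure into \eqref{eq:meta-converse-e} yields \eqref{eq:meta-converse}.

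Finally, the order-equivalence remark in the statement holds because the PEP depends only on the ordering of $m(\cdot,y)$ at each fixed $y$. Dividing by $P_Y(y)>0$ and taking logs are strictly monotone in $x$ at fixed $y$, so the log-likelihood metric produces the same $G_{x,y}$, $H_{x,y}$, and hence the same spectrum.
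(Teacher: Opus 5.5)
Your proposal is correct and follows essentially the same route as the paper. The paper's proof likewise specializes \cref{thm:meta-converse} to $m=\W(y\mid x)/P_Y(y)$, checks (A1)--(A2) (it notes $\Es{Q_X\times P_Y}{m(X,Y)}\le 1$; your equality is exact for the induced $P_Y$ under the absolute-continuity standing assumption), identifies the tilted measure $(Q_X\times P_Y)\cdot m$ with $Q_X\cdot\W$, and reads the left side as the spectrum. Your null-set handling of (A1) and your order-equivalence remark are just a more careful write-up of the same steps.
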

\begin{IEEEproof}
This is the matched specialization of the PEP meta-converse identity (Theorem~\ref{thm:meta-converse}), whose assumptions \textup{(A1)}--\textup{(A2)} hold here as noted after that theorem: the metric is finite wherever $P_Y>0$, with $\Es{Q_X\times P_Y}{m(X,Y)}\le1$. The negative log-PEP threshold becomes the Neyman--Pearson level of the test between the product $Q_X\times Q_Y$ and the joint $Q_X\cdot\W$, optimized over the auxiliary output $Q_Y$; see Appendix~\ref{app:meta-converse}.
\end{IEEEproof}

Combining the fixed-code equality \eqref{eq:cc-converse} with \eqref{eq:meta-converse} recovers the fixed-code exactness of \cite[Thm.~1]{vazquez2016bayesian}: for the prior induced by any fixed code, the meta-converse coincides exactly with the true error probability,
\begin{equation}\label{eq:mc-tight}
P_e(\cC)=\max_{Q_Y}\BETA{1-e^{-R}}{Q_X\times Q_Y}{Q_X\cdot\W}.
\end{equation}
A code-independent bound requires minimizing over the input prior, yielding the PPV minimax meta-converse \cite[Thm.~27]{polyanskiy2010channel},
\begin{equation}\label{eq:minimax-mc}
\inf_{Q_X}\max_{Q_Y}\BETA{1-e^{-R}}{Q_X\times Q_Y}{Q_X\cdot\W}.
\end{equation}
Its tractability hinges on the convexity of $Q_X\mapsto\Fspec{Q_X}{z}$, which holds in the matched case for every $z$ (a consequence of the joint convexity of Theorem~\ref{thm:joint-convex}). The resulting convex program --- a finite-dimensional LP, polynomial in the blocklength for memoryless channels with fixed alphabets after a type reduction --- is the subject of Section~\ref{sec:prioropt}.

\subsection{Worked example: the binary symmetric channel}\label{sec:bsc-example}

Every object of this section takes an explicit closed form on the $n$-fold binary symmetric channel $\W=\mathrm{BSC}(p)^{\otimes n}$, $p<\tfrac12$, with $\cX=\cY=\{0,1\}^n$ and matched metric $m(x,y)=\log\W(y|x)$. The metric depends on $(x,y)$ only through the Hamming distance $d_H(x,y)$ and is strictly decreasing in it, so the decoder is nearest-neighbor and the analysis reduces to two binomial laws.

Fix the uniform input prior $Q_X=\Unif{\{0,1\}^n}$. A random competitor $\bar X\sim Q_X$ has $d_H(\bar X,y)\sim\mathrm{Bin}(n,\tfrac12)$ for every $y$; writing $b_k\triangleq\binom{n}{k}2^{-n}$ for its pmf and $B_k\triangleq\sum_{j<k}b_j$ for its strict CDF, the PEP of a candidate at distance $k=d_H(x,y)$ is, by Definition~\ref{def:pep}, the dithered binomial rank
\begin{equation}\label{eq:bsc-pep}
  \PEPU{x}{y}{u}=B_k+u\,b_k,\qquad k=d_H(x,y).
\end{equation}
Under the channel, $d_H(x,Y)$ equals the noise weight $K\sim\mathrm{Bin}(n,p)$, so the error spectrum is the explicit mixture
\begin{equation}\label{eq:bsc-spectrum}
  \Fspec{Q_X}{z}=\sum_{k=0}^{n}\binom{n}{k}p^{k}(1-p)^{n-k}\,
  \PR{B_k+U\,b_k\ge e^{-z}},\qquad U\sim\uU,
\end{equation}
continuous in $z$ by Lemma~\ref{lem:atomless}. The $\RCUp$ achievability~\eqref{eq:rcu-plus} is the binomial expectation $\tilde P_e(R;Q_X)=\Es{K,U}{\min\BRAs{1,e^{R}(B_K+U\,b_K)}}$, and the fixed-code converse~\eqref{eq:cc-converse} at $M=e^{R}$ codewords is the tail $P_e=\PR{B_K+U\,b_K\ge e^{-R}}=\Fspec{Q_X}{R}$: for the BSC the bracket~\eqref{eq:ach-conv-bracket} is the gap between integrating and thresholding one binomial spectrum. The uniform prior is exactly optimal in both coding directions at every blocklength (\cref{prop:symmetric}), so prior optimization is trivial for this channel; the non-symmetric case is what the programs of Sections~\ref{sec:prioropt} and~\ref{sec:ach-prioropt} address.

\subsection{The rate gap: quantifying the bracket}\label{sec:rategap}

We close the section by quantifying the bracket~\eqref{eq:ach-conv-bracket} at a \emph{fixed} prior. Fix $Q_X$, write $F(R)\triangleq\Fspec{Q_X}{R}$ for the converse side and $P(R)\triangleq\tilde P_e(R;Q_X)$ for the achievability side, and fix a target error level $\varepsilon\in(0,1)$: the converse permits rate $F^{-1}(\varepsilon)$, random coding achieves $P^{-1}(\varepsilon)$, and the question is how far apart the two rates are. The tools are the two identities already in hand --- the exponential-window form~\eqref{eq:exp-window} and the slope identity~\eqref{eq:slope-identity}. Write $\varsigma(R)\triangleq(\log F)'(R)=F'(R)/F(R)>0$ for the \emph{spectrum log-slope}: how fast the spectrum climbs, per nat, on a log scale.

\begin{definition}[Log-concave spectrum]\label{def:lc}
$F$ is \emph{log-concave at level $R$} if $\log F$ is concave on the whole upper tail $[R,\infty)$; equivalently, $\varsigma(\cdot)$ is non-increasing on $[R,\infty)$. (The window~\eqref{eq:exp-window} integrates the spectrum over all of $[R,\infty)$, so this is the domain on which the tangent bound below is used.)
\end{definition}

The assumption is the shape both asymptotic regimes suggest: in the large-deviations regime $-\log F\approx n\Lambda^\ast(R/n)$ with a convex Cram\'er rate function $\Lambda^\ast$, so $\log F$ is concave outright, and in the normal-approximation regime $F\approx\Phi\bigl((R-nI(Q_X,\W))/\sqrt{nV(Q_X,\W)}\bigr)$, log-concave because the normal CDF is, with $I(Q_X,\W)$ and $V(Q_X,\W)$ the per-letter mean and variance of the information density. \Cref{rem:onenat} quantifies both instances; the Gaussian shape is compared against the information-density spectrum in \cref{rem:gauss-regime}, and remains an assumption where it is used.

\begin{lemma}[Value sandwich]\label{lem:sandwich}
If $F$ is log-concave at level $R$ and $\varsigma(R)<1$, then $F(R)\le P(R)\le F(R)/(1-\varsigma(R))$.
\end{lemma}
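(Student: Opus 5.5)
The plan is to work entirely from the exponential-window representation~\eqref{eq:exp-window}, $P(R)=\int_0^\infty F(R+u)\,e^{-u}\,du$. The two inequalities of the sandwich come from two elementary comparisons of $F(R+u)$ with $F(R)$ over the window $u\ge0$. The lower bound uses monotonicity of the CDF; the upper bound uses the tangent-line bound supplied by log-concavity.

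For the lower bound, I note that $F$ is a CDF and hence non-decreasing. So $F(R+u)\ge F(R)$ for every $u\ge0$. Integrating against the probability density $e^{-u}$ on $[0,\infty)$ gives $P(R)\ge F(R)$. This part needs no hypothesis. It is consistent with the slope identity~\eqref{eq:slope-identity}, since $1+\dot E(R)\in(0,1]$ there.

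For the upper bound, I use the hypothesis that $\log F$ is concave on $[R,\infty)$. I will take $\varsigma(R)$ to mean the right derivative of $\log F$ at $R$, which exists by concavity; $F$ is continuous by \cref{lem:atomless}. A concave function lies below its tangent line, so
\[
\log F(R+u)\le\log F(R)+\varsigma(R)\,u,\qquad u\ge0,
\]
that is, $F(R+u)\le F(R)\,e^{\varsigma(R)u}$. Substituting this into the window gives
\[
P(R)\le F(R)\int_0^\infty e^{-(1-\varsigma(R))u}\,du=\frac{F(R)}{1-\varsigma(R)}.
\]
The integral is finite precisely because $\varsigma(R)<1$, which is where that hypothesis is used. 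The trivial bound $F\le1$ is not needed, though it would give an alternative cap.

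The only delicate step is the justification of the tangent bound at the left endpoint of the domain $[R,\infty)$. I will handle it by noting that a concave function on a closed half-line has a right derivative at the endpoint and lies below the supporting line of that slope. Equivalently, $\varsigma$ is non-increasing, so $\log F(R+u)-\log F(R)=\int_0^u\varsigma(R+t)\,dt\le\varsigma(R)u$ whenever $\log F$ is absolutely continuous; this holds for a concave function on the open interval and extends to the endpoint by continuity. I also need $F(R)>0$ so that $\log F$ is finite. This is implicit in the definition of $\varsigma$; if $F(R)=0$, the hypothesis is vacuous and the statement reads trivially.
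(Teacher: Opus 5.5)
Your proof is correct and follows essentially the same route as the paper: the upper bound inserts the tangent bound $F(R+u)\le F(R)\,e^{\varsigma(R)u}$ into the exponential window~\eqref{eq:exp-window}, and the lower bound uses the monotonicity argument that the paper invokes through the slope identity ($\dot E\le0$). Your handling of the endpoint via the right derivative, and your observation that $\varsigma(R)<1$ presupposes $F(R)>0$, are harmless refinements of that same argument.
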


\begin{IEEEproof}
The left inequality is the slope identity~\eqref{eq:slope-identity} ($\dot E\le0$). For the right, concavity of $u\mapsto\log F(R+u)$ on $[0,\infty)$ gives the tangent bound $F(R+u)\le F(R)\,e^{\varsigma(R)u}$ for every $u\ge0$; inserting it in~\eqref{eq:exp-window} and using $\varsigma(R)<1$, $P(R)\le F(R)\int_0^\infty e^{-(1-\varsigma(R))u}\,du=F(R)/(1-\varsigma(R))$.
\end{IEEEproof}

\begin{theorem}[Achievability--converse rate gap]\label{thm:rategap}
Fix $\varepsilon\in(0,1)$ and let $R_F=F^{-1}(\varepsilon)$, $R_P=P^{-1}(\varepsilon)$. Assume $F\in C^1$ is log-concave at level $R_P$ (\cref{def:lc}) with $\varsigma(R_P)<1$.
\begin{enumerate}
\item[(i)] \emph{(Non-increasing slope.)} If $\varsigma(R_F)>0$, then
\begin{equation}\label{eq:rategap-general}
0\;\le\;F^{-1}(\varepsilon)-P^{-1}(\varepsilon)\;\le\;\frac{-\log\bigl(1-\varsigma(R_P)\bigr)}{\varsigma(R_F)}.
\end{equation}
\item[(ii)] \emph{(Constant slope: the exponential boundary case.)} If moreover $\varsigma\equiv\varsigma_0\in(0,1)$ on $[R_P,R_F]$ (equivalently, $F$ is exactly exponential there), then
\begin{equation}\label{eq:rategap}
0\;\le\;F^{-1}(\varepsilon)-P^{-1}(\varepsilon)\;\le\;\frac{-\log(1-\varsigma_0)}{\varsigma_0}\;=\;1+\frac{\varsigma_0}{2}+\frac{\varsigma_0^2}{3}+\cdots
\end{equation}
\end{enumerate}
\end{theorem}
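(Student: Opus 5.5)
The plan is to combine the value sandwich of \cref{lem:sandwich} at the single point $R_P$ with a mean-value estimate for $\log F$ on the interval $[R_P,R_F]$. First, the lower bound $0\le R_F-R_P$: the slope identity~\eqref{eq:slope-identity} gives $P\ge F$ pointwise. Both $F$ and $P$ are non-decreasing, and $F$ is continuous. Then $F(R_P)\le P(R_P)=\varepsilon=F(R_F)$, so $R_P\le R_F$. Here we read $F^{-1}$ as the generalized inverse; strict monotonicity on the relevant range follows from $\varsigma>0$.

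For the upper bound in (i), apply \cref{lem:sandwich} at $R_P$, which is legitimate because $F$ is log-concave at level $R_P$ and $\varsigma(R_P)<1$. This gives $\varepsilon=P(R_P)\le F(R_P)/(1-\varsigma(R_P))$, i.e.
\[
\log F(R_F)-\log F(R_P)=\log\varepsilon-\log F(R_P)\le-\log\bigl(1-\varsigma(R_P)\bigr).
\]
Next, bound the left side from below. Since $F\in C^1$ and $\log F$ is concave on $[R_P,\infty)$, $\varsigma$ is non-increasing there. So $\log F(R_F)-\log F(R_P)=\int_{R_P}^{R_F}\varsigma(r)\,dr\ge\varsigma(R_F)(R_F-R_P)$. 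Dividing by $\varsigma(R_F)>0$ yields~\eqref{eq:rategap-general}.

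Part (ii) is then immediate. Setting $\varsigma(R_P)=\varsigma(R_F)=\varsigma_0$ in~\eqref{eq:rategap-general} gives the bound $-\log(1-\varsigma_0)/\varsigma_0$. The series follows from $-\log(1-s)=\sum_{k\ge1}s^k/k$ divided by $s$, convergent for $\varsigma_0\in(0,1)$. Note that (ii) only needs $\varsigma$ constant on $[R_P,R_F]$; log-concavity on the full tail is still inherited from the standing hypothesis, as the sandwich requires.

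The main obstacle is bookkeeping rather than analysis. We must ensure that $F(R_P)>0$ so that $\log F$ and $\varsigma$ are defined on $[R_P,R_F]$; this holds because $\varsigma(R_P)$ is assumed to exist, and concavity of $\log F$ keeps $F$ positive to the right. We must also ensure that $F^{-1}(\varepsilon)$ and $P^{-1}(\varepsilon)$ are well defined. I would handle the latter by using the continuity of $F$ (Lemma~\ref{lem:atomless}) and of $P$ (via~\eqref{eq:integral}), and by choosing, say, the smallest preimages. The inequality chain above only uses $P(R_P)=\varepsilon$ and $F(R_F)=\varepsilon$, both of which hold for any preimage by continuity.
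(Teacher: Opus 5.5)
Your proof is correct and is essentially the paper's argument: the value sandwich (\cref{lem:sandwich}) at $R_P$ gives $F(R_P)\ge\varepsilon(1-\varsigma(R_P))$, and the non-increasing log-slope bounds the climb of $\log F$ from $F(R_P)$ to $\varepsilon$. The paper writes the second step as the change of variable $\rho=F(R)$ in $\int d\rho/(\varsigma\rho)$, which is the same computation as your $\int_{R_P}^{R_F}\varsigma(r)\,dr\ge\varsigma(R_F)(R_F-R_P)$; your version has the minor advantage of not needing $F$ to be invertible on $[R_P,R_F]$, and the possible equality case $F(R_P)=\varepsilon$ in your ordering step is excluded because $P>F$ strictly wherever $F<1$.
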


Of the $C^1$ hypothesis, continuity of $F$ is automatic (\cref{lem:atomless}); differentiability is assumed only so that the log-slope $\varsigma$ is defined.

\begin{IEEEproof}
The value sandwich (\cref{lem:sandwich}) at $R=R_P$ pins the spectrum's starting value, $F(R_P)\ge\varepsilon(1-\varsigma(R_P))$, from which it must climb to $F(R_F)=\varepsilon$. The change of variable $\rho=F(R)$ converts this climb into an integral of $1/(\varsigma\rho)$, which the non-increasing slope bounds by $-\log(1-\varsigma(R_P))/\varsigma(R_F)$; see Appendix~\ref{app:rategap} for the computation.
\end{IEEEproof}

\begin{remark}[The one-nat limit]\label{rem:onenat}
The genuine quantitative condition is $\varsigma<1$: at $\varsigma=1$ the window integral $\int_0^\infty e^{-(1-\varsigma)u}\,du$ diverges --- the straight-line regime $\dot E\to-1$ of~\eqref{eq:slope-identity}. As the spectrum flattens ($\varsigma_0\to0$) the behavior is two-faced: the \emph{value} gap vanishes, $1\le P/F\le1/(1-\varsigma_0)\to1$, yet the \emph{rate} gap need not --- the bound $-\log(1-\varsigma_0)/\varsigma_0$ extends continuously to the limit $1$. Heuristically, the spectrum must climb the vanishing log-height $-\log(1-\varsigma_0)\approx\varsigma_0$ at the vanishing log-slope $\varsigma_0$, and the ratio stays of order one: it is $\E{T}=1$, the mean of the window in~\eqref{eq:exp-window}, resurfacing.

In the two canonical regimes the corresponding approximations predict a limiting gap of one nat. For a locally Gaussian spectrum $F(R)\approx\Phi\bigl((R-nI)/\sqrt{nV}\bigr)$, with $I=I(Q_X,\W)$ and $V=V(Q_X,\W)$ as above --- log-concave since the normal CDF is, an assumed shape compared against the information-density spectrum in \cref{rem:gauss-regime} below --- the closed form gives a gap $1-\Phi^{-1}(\varepsilon)/(2\sqrt{nV})+O(1/n)\to1$. For a large-deviations spectrum $-\log F\approx n\Lambda^\ast(R/n)$, log-concavity is automatic by convexity of the Cram\'er rate function, $\varsigma=|{\Lambda^\ast}'|$ is the Chernoff tilt, and $\varsigma<1$ is exactly operating above the critical rate. Random coding and the converse can thus become indistinguishable in \emph{error probability} while remaining one nat apart in \emph{rate}: the unit forward shift of the kernel.
\end{remark}

\begin{remark}[Comparison with the information-density spectrum]\label{rem:gauss-regime}
Under the matched metric and any prior $Q_X$, the PEP is dominated pointwise by the information-density exponential:
\[
\PEPU{x}{y}{u}=G_{x,y}+u\,H_{x,y}\ \le\ G_{x,y}+H_{x,y}
=Q_X\BRA{\BRAs{\bar x:m(\bar x,y)\ge m(x,y)}}\ \le\ e^{-\idens{x}{y}},
\]
the last step by Markov's inequality: the event depends on the metric only through its ordering, so it may be evaluated with the likelihood-ratio metric $\W(y\mid x)/P_Y(y)$, whose expectation under $\bar X\sim Q_X$ is exactly one; here $\idens{x}{y}\triangleq\log\bigl(\W(y\mid x)/P_Y(y)\bigr)$ is the information density with respect to the induced output law $P_Y=Q_X\W$. This is the change-of-measure estimate already used after \cref{thm:rcu-plus}. Consequently
\[
\Fspec{Q_X}{R}\ \le\ \PR{\idens{X}{Y}\le R},\qquad(X,Y)\sim Q_X\cdot\W.
\]
For a product prior on a memoryless channel the information density is a sum of $n$ i.i.d.\ per-letter densities. Writing $I(Q_X,\W)$ for the mean of the per-letter density --- the mutual information at prior $Q_X$, taken with respect to the induced output law --- and $V(Q_X,\W)$ for its variance, abbreviated $I$ and $V$ here, the Berry--Esseen theorem gives the Gaussian form $\Phi\bigl((R-nI)/\sqrt{nV}\bigr)$ up to $O(1/\sqrt n)$. This comparison provides a Gaussian upper approximation to the PEP spectrum for product inputs. It does not by itself establish a Gaussian approximation or log-concavity for the PEP spectrum, which remain assumptions where they are used (\cref{def:lc}, \cref{thm:rategap}). Pushing the same domination through the $\RCUp$ kernel gives $\E{\min\BRAs{1,e^{R}e^{-\idens{X}{Y}}}}$, the dependence-testing-type bound noted after \cref{thm:rcu-plus}; at this level the resulting expansion of the achievable rate is second-order tight with an $O(1)$ third term. The standard $\tfrac12\log n$ third-order achievability term~\cite{polyanskiy2010channel} enters one level down, through the pairwise estimate itself: for non-singular memoryless channels a local central-limit sharpening of the Markov step above gives $\PEP{x}{y}\le B\,e^{-\idens{x}{y}}/\sqrt n$ on every output block satisfying a variance floor --- all but an $e^{-\Omega(n)}$-probability set (\cref{lem:pep-local}, stated and proved in Appendix~\ref{app:pep-local}; cf.~\cite[Lem.~47]{polyanskiy2010channel}). Substituting this estimate directly into the $\RCUp$ kernel, the exceptional set contributing at most its probability, gives
\[
\tilde P_e(R;Q_X)\ \le\ \E{\min\BRAs{1,e^{R-\frac12\log n+\log B}\,e^{-\idens{X}{Y}}}}+e^{-\Omega(n)}
\]
--- the same dependence-testing-type bound at the shifted rate $R-\tfrac12\log n+\log B$ --- whose Berry--Esseen analysis~\cite{polyanskiy2010channel} yields the usual $\tfrac12\log n$ third-order improvement under the stated non-singularity and moment conditions; the same substitution shifts the spectrum bound, $\Fspec{Q_X}{R}\le\PR{\idens{X}{Y}\le R-\tfrac12\log n+\log B}+e^{-\Omega(n)}$. Thus the local pairwise estimate exposes the $n^{-1/2}$ factor responsible for the third-order improvement.
\end{remark}

\section{Computing the Minimax Meta-Converse}\label{sec:prioropt}

The meta-converse identity of Section~\ref{sec:pep}, specialized to the matched metric~\eqref{eq:meta-channel} (restated as \cref{thm:cc-matched}), expresses the converse through the error spectrum $\Fspec{Q_X}{R}$, whose value depends on the input prior $Q_X$. The \emph{tightest} converse over priors is the minimax meta-converse
\begin{equation}\label{eq:metaconverse-intro}
	\inf_{Q_X\in\cP(\cX)}\Fspec{Q_X}{R}
	\;=\;\inf_{Q_X}\max_{Q_Y}\;\BETA{1-e^{-R}}{Q_X\times Q_Y}{Q_X\W}.
\end{equation}
Under matched maximum-likelihood decoding $Q_X\mapsto\Fspec{Q_X}{R}$ is convex (Theorem~\ref{thm:joint-convex}), so~\eqref{eq:metaconverse-intro} is a convex program; but as written it is a genuine minimax, the inner $\sup_{Q_Y}$ opposing the outer $\inf_{Q_X}$.

The contribution of this section is that the \emph{reverse-channel identity} of Section~\ref{sec:pep} removes this opposition and collapses~\eqref{eq:metaconverse-intro} to a single finite-dimensional linear program, in the input prior and a reverse channel jointly; for memoryless channels with fixed alphabets, a type reduction (\cref{prop:lp-type}) makes the program polynomial in the blocklength. The program is the finite-blocklength instance of Matthews' nonsignaling converse~\cite{matthews2012linear}, reached here directly from the PEP spectrum rather than through information-theoretic relaxations. Throughout, alphabets are finite.

\subsection{Aligned directions via the reverse channel}\label{sec:lp}

Recall (Theorem~\ref{thm:reverse-channel}) that for the likelihood-ratio metric $m(x,y)=\W(y|x)/Q_Y(y)$ with $Q_Y$ of full support --- order-equivalent in $x$ to the matched metric, hence inducing the same spectrum, and finite with finite mean on the finite alphabets, so the hypotheses of Theorem~\ref{thm:reverse-channel} hold ---
\begin{equation}\label{eq:rc-sup-channel}
	\Es{Q_X\times Q_Y}{m(X,Y)\,\Ind{\PEP{X}{Y}\le e^{-R}}}
	=\!\!\sup_{\substack{W_{X|Y}^*:\;W_{X|Y}^*(x|y)\le e^R Q_X(x)}}\!\! e^{-R}\,\Es{W_{X|Y}^*Q_Y}{m(X,Y)},
\end{equation}
where $W_{X|Y}^*$ ranges over reverse channels (kernels from $\cY$ to $\cX$) and the conditional $D_\infty$ cap $D_\infty(W_{X|Y}^*\|Q_X\mid Q_Y)\le R$ reads $W_{X|Y}^*(x|y)\le e^R Q_X(x)$.

By atomlessness of the dithered PEP under absolute continuity (Lemma~\ref{lem:atomless}), the minimax converse is a complementary supremum,
\begin{equation*}
	1-\inf_{Q_X}\Fspec{Q_X}{R}=\sup_{Q_X}\PR{\PEP{X}{Y}\le e^{-R}}.
\end{equation*}
Substituting $m(x,y)=\W(y|x)/Q_Y(y)$ into~\eqref{eq:rc-sup-channel} and taking the supremum over $Q_X$,
\begin{align*}
	\sup_{Q_X}\PR{\PEP{X}{Y}\le e^{-R}}
	&=\sup_{Q_X}\ \sup_{W_{X|Y}^*(x|y)\le e^R Q_X(x)}\!e^{-R}\sum_{x,y}W_{X|Y}^*(x|y)\,Q_Y(y)\,\frac{\W(y|x)}{Q_Y(y)}\\
	&=\sup_{\substack{Q_X,\,W_{X|Y}^*:\;W_{X|Y}^*(x|y)\le e^R Q_X(x)}}e^{-R}\sum_{x,y}W_{X|Y}^*(x|y)\,\W(y|x).
\end{align*}
The auxiliary law $Q_Y$ cancels in the objective and is absent from the constraints, so the objective is \emph{linear} in the joint decision variables $(Q_X,W_{X|Y}^*)$, and the two optimizations --- now both suprema, no longer opposed --- merge into a single program.

\begin{theorem}[Minimax meta-converse as a linear program]\label{thm:lp}
	The prior-optimized minimax meta-converse satisfies
	\begin{equation*}
		1-\inf_{Q_X}\Fspec{Q_X}{R}=\text{\emph{(optimal value of the LP)}}
	\end{equation*}
	\begin{align}
		\text{\emph{maximize}}\quad & e^{-R}\sum_{x,y}W_{X|Y}^*(x|y)\,\W(y|x)\label{eq:lp-obj}\\
		\text{\emph{s.t.}}\quad
		& \textstyle\sum_x Q_X(x)=1,\quad Q_X(x)\ge0,\quad\forall x,\label{eq:lp-c1}\\
		& \textstyle\sum_x W_{X|Y}^*(x|y)=1,\ \ W_{X|Y}^*(x|y)\ge0,\quad\forall x,y,\label{eq:lp-c2}\\
		& W_{X|Y}^*(x|y)\le e^R\,Q_X(x),\quad\forall x,y,\label{eq:lp-c3}
	\end{align}
	a linear program in $(Q_X,W_{X|Y}^*)$ with $O(|\cX|\,|\cY|)$ variables and constraints.
\end{theorem}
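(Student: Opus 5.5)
The plan is to make the derivation displayed just before the statement rigorous: prove two inequalities between $1-\inf_{Q_X}\Fspec{Q_X}{R}$ and the LP value, working on finite alphabets throughout.

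\emph{Step 1 (reduction to a supremum).} By Lemma~\ref{lem:atomless}, for $(X,Y)\sim Q_X\cdot\W$ we have $\PR{\PEP{X}{Y}=e^{-R}}=0$. Hence $1-\Fspec{Q_X}{R}=\PR{\PEP{X}{Y}\le e^{-R}}$, and the left-hand side equals $\sup_{Q_X}\PR{\PEP{X}{Y}\le e^{-R}}$. Its hypothesis holds automatically on discrete alphabets.

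\emph{Step 2 (per-prior identity).} Fix $Q_X$ and a full-support $Q_Y$, for instance uniform on $\cY$. Apply \eqref{eq:rc-sup} of Theorem~\ref{thm:reverse-channel} with the metric $m(x,y)=\W(y|x)/Q_Y(y)$. This metric is order-equivalent in $x$ to the matched metric, so it induces the same PEP and the same spectrum. Hypotheses (A1)--(A2) hold because $m$ is finite and $\Es{Q_X\times Q_Y}{m}=1$. The left side of \eqref{eq:rc-sup} then becomes
\[
\sum_{x,y}Q_X(x)\W(y|x)\,\PR{\PEP{x}{y}\le e^{-R}}=\PR{\PEP{X}{Y}\le e^{-R}}.
\]
The right side becomes $\sup_W e^{-R}\sum_{x,y}W(x|y)\W(y|x)$ over reverse channels satisfying $W(x|y)\le e^RQ_X(x)$, with $Q_Y$ cancelling.

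One point needs care. On a finite alphabet, $D_\infty(W\|Q_X\mid Q_Y)\le R$ forces $W(x|y)=0$ whenever $Q_X(x)=0$. This is exactly constraint \eqref{eq:lp-c3} at such $x$, so the two feasible sets agree. Because $Q_Y$ has full support, the essential supremum is a maximum over all $y$, so the constraint is imposed for every $y$, as in the LP.

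\emph{Step 3 (merging suprema).} Taking $\sup_{Q_X}$ of both sides, a nested supremum equals the joint supremum over pairs $(Q_X,W)$ satisfying \eqref{eq:lp-c1}--\eqref{eq:lp-c3}. This establishes equality with the LP value. The feasible region is a nonempty compact polytope: the uniform $Q_X$ together with $W(\cdot|y)=Q_X$ is feasible since $e^R\ge1$. So the supremum is attained and "optimal value" is justified. The objective and all constraints are linear, with $|\cX|+|\cX||\cY|$ variables and $O(|\cX||\cY|)$ constraints.

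The main obstacle is Step 2's reliance on the sup form of Theorem~\ref{thm:reverse-channel}, specifically that the optimizer $W^\star(x|y)=e^RQ_X(x)\PR{\PEP{x}{y}\le e^{-R}}$ is a genuine reverse channel. This requires that the event $\{\PEP{X}{y}\le e^{-R}\}$ have $Q_X$-mass exactly $e^{-R}$ for every $y$, and Lemma~\ref{thm:pep-uniform} supplies this for each fixed $y$. I would also verify directly that $\sum_{x,y}W^\star\W=e^R\,\PR{\PEP{X}{Y}\le e^{-R}}$, which gives the "$\le$" direction independently of the bathtub argument. For "$\ge$", I would re-check the fill-from-the-top exchange for arbitrary feasible $W$, pointwise in $y$. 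At each $y$, maximizing $\sum_x W(x|y)\W(y|x)$ under $0\le W(x|y)\le e^RQ_X(x)$ and $\sum_xW(x|y)=1$ is a fractional knapsack, solved by greedily filling the largest values of $\W(y|x)$, which are the smallest PEPs. The dithered tie-splitting of $W^\star$ is exactly the fractional fill on the tie class.
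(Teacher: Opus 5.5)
Your proposal is correct and follows the paper's own route. You use atomlessness to rewrite $1-\inf_{Q_X}\Fspec{Q_X}{R}$ as $\sup_{Q_X}\PR{\PEP{X}{Y}\le e^{-R}}$, apply the sup form \eqref{eq:rc-sup} with the likelihood-ratio metric $\W(y|x)/Q_Y(y)$ so that $Q_Y$ cancels, and merge the two suprema into one LP. The extra points you check (full support of $Q_Y$ making the $D_\infty$ cap hold at every $y$, the $Q_X(x)=0$ case, compactness giving attainment, and the fractional-knapsack check of $W^\star$) are details the paper leaves implicit, the last being exactly the fill-from-the-top argument of Appendix~\ref{app:reverse-channel}.
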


\begin{IEEEproof}
	The objective and constraints are linear by the cancellation of $Q_Y$ above: \eqref{eq:lp-c2} requires $W_{X|Y}^*$ to be a stochastic kernel and \eqref{eq:lp-c3} is the conditional $D_\infty$ cap of~\eqref{eq:rc-sup-channel}. The optimal value therefore equals the joint supremum displayed above, which is $1-\inf_{Q_X}\Fspec{Q_X}{R}$.
\end{IEEEproof}

The variables $W_{X|Y}^*(x|y)$ are acceptance probabilities: the chance that input $x$ is declared correct given output $y$. Constraint~\eqref{eq:lp-c2} accepts exactly one input per output in expectation, and~\eqref{eq:lp-c3} couples the inputs competing at a common output through the shared prior $Q_X$. The program has $O(|\cX|\,|\cY|)$ variables; standard interior-point solvers apply. In the memoryless setting the program inherits the coordinate-permutation symmetry of the channel and collapses onto types (\cref{prop:lp-type} below), which is what makes finite-blocklength evaluation tractable (Section~\ref{sec:num}).

\paragraph{Connection to Matthews' converse} The LP~\eqref{eq:lp-obj}--\eqref{eq:lp-c3} is precisely the finite-blocklength instance of the nonsignaling converse of Matthews~\cite{matthews2012linear}: the reverse-channel variables $W_{X|Y}^*$ play the role of nonsignaling ``response functions'', and the conditional $D_\infty$ cap~\eqref{eq:lp-c3} is exactly the nonsignaling constraint; the LP-converse line that followed Matthews was developed further by Jose and Kulkarni~\cite{jose2017linear}, who extended linear-programming converses to finite-blocklength lossy joint source--channel coding.

The two routes are complementary. Matthews \emph{starts} from a relaxation of the code structure (nonsignaling codes) and \emph{arrives} at an LP converse; we start from the PEP error spectrum and the reverse-channel variational identity (Theorem~\ref{thm:reverse-channel}) and arrive at the \emph{same} LP, exactly equivalent to the minimax meta-converse. The reverse channel $W_{X|Y}^*(x|y)$ is the dithered posterior ``acceptance'' probability that input $x$ is the transmitted one given output $y$, capped at $e^{R}Q_X(x)$ by the spectrum's threshold structure. The PEP derivation thus gives the nonsignaling LP an operational reading and places it on the same footing as the achievability program of Section~\ref{sec:ach-prioropt}. The identification is exact, not merely structural, as the LP dual shows.

\begin{remark}[The LP dual: per-output thresholds]\label{rem:lp-dual}
Attach multipliers to the constraints of~\eqref{eq:lp-obj}--\eqref{eq:lp-c3} and eliminate those of the box~\eqref{eq:lp-c3} at their smallest feasible values: the dual collapses to a maximin over per-output thresholds $t\in[0,1]^{|\cY|}$, and strong LP duality reads
\begin{equation}\label{eq:lp-dual}
\inf_{Q_X}\Fspec{Q_X}{R}
=\max_{t\in[0,1]^{|\cY|}}\ \min_{x\in\cX}\ \Bigl[\sum_y\min\bigl\{\W(y\mid x),\,t_y\bigr\}-e^{-R}\sum_y t_y\Bigr],
\end{equation}
the optimal multipliers of the per-output normalization rows~\eqref{eq:lp-c2} being $\nu_y^\star=e^{-R}t_y^\star$. The thresholds are water levels: each output's column of the channel is clipped at $t_y$, so the inner minimand is a per-output water-filling account, and complementary slackness on the prior rows reproduces the equal-marginal-value optimality conditions of the achievability side (\cref{prop:ach-kkt}). Since the primal value is $1-\inf_{Q_X}\Fspec{Q_X}{R}$ (\cref{thm:lp}), \eqref{eq:lp-dual} identifies Matthews' LP value with the minimax meta-converse exactly. The eliminate-and-substitute algebra is carried out in Appendix~\ref{app:lp-dual}.
\end{remark}

\subsection{Memoryless channels: reduction to types}\label{sec:lp-type}

As written, for a memoryless channel $\W=W^{\otimes n}$ the LP lives on the product alphabets, with $|\cX|^n$ prior variables and $|\cX|^n|\cY|^n$ reverse-channel variables. The coordinate-permutation symmetry of the product channel collapses it to polynomial size.

\begin{proposition}[Type reduction of the converse LP]\label{prop:lp-type}
Let the channel be memoryless, $\W=W^{\otimes n}:\cX^n\to\cY^n$ with $\cX,\cY$ finite. Then the LP of \cref{thm:lp} admits an optimal solution $(Q_X,W_{X|Y}^*)$ invariant under simultaneous coordinate permutations: $Q_X(x)$ depends on $x\in\cX^n$ only through its type, and $W_{X|Y}^*(x\mid y)$ depends on $(x,y)$ only through the joint type of the pair. The program therefore reduces to $\binom{n+|\cX|-1}{|\cX|-1}$ prior variables and $\binom{n+|\cX||\cY|-1}{|\cX||\cY|-1}$ reverse-channel variables --- polynomial in $n$ at fixed alphabets, of degrees $|\cX|-1$ and $|\cX||\cY|-1$ respectively. The per-output-threshold dual~\eqref{eq:lp-dual} symmetrizes likewise: an optimal threshold vector exists whose entries $t_y$ depend only on the type of $y$.
\end{proposition}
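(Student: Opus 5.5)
The proof is a symmetrization (averaging) argument over the coordinate-permutation group. Let $S_n$ act on $\cX^n$ and $\cY^n$ by permuting coordinates, writing $\pi x$ and $\pi y$. Memorylessness gives the invariance $\W(\pi y\mid \pi x)=\W(y\mid x)$ for every $\pi\in S_n$. The plan has four steps.

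\textbf{Step 1: the LP is permutation-invariant.} For a feasible pair $(Q_X,W_{X|Y}^*)$, define
\[
Q_X^\pi(x)\triangleq Q_X(\pi x),\qquad W^{*\pi}(x\mid y)\triangleq W_{X|Y}^*(\pi x\mid \pi y).
\]
The simplex constraint \eqref{eq:lp-c1} is preserved because $\pi$ is a bijection of $\cX^n$. The kernel constraint \eqref{eq:lp-c2} is preserved because $\sum_x W^*(\pi x\mid\pi y)=1$. The cap \eqref{eq:lp-c3} is preserved because $W^*(\pi x\mid\pi y)\le e^R Q_X(\pi x)$. Finally, reindexing the sum and using channel invariance shows that the objective \eqref{eq:lp-obj} is unchanged.

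\textbf{Step 2: average over the group.} Set
\[
\bar Q_X\triangleq\frac1{n!}\sum_{\pi}Q_X^\pi,\qquad \bar W\triangleq\frac1{n!}\sum_\pi W^{*\pi}.
\]
All constraints are linear in the joint variable $(Q_X,W^*)$, and the same $\pi$ is applied to both components. The averaged pair is therefore feasible; in particular, the cap holds termwise and so holds after averaging. The objective is linear and constant along the orbit, so $(\bar Q_X,\bar W)$ is optimal whenever the original pair is. An optimal solution exists because the feasible set is compact and nonempty: the pair ($Q_X$ uniform, $W^*=Q_X$) satisfies the cap since $e^R\ge1$ for $R\ge0$. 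By construction, $\bar Q_X$ is constant on $S_n$-orbits of $\cX^n$, which are exactly the type classes. Likewise $\bar W$ is constant on orbits of the diagonal action on $\cX^n\times\cY^n$, and these are exactly the joint-type classes.

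\textbf{Step 3: count the reduced variables.} Restricting the LP to invariant variables and regrouping each sum by type yields an LP whose variables are indexed by types. The number of types on $\cX$ is $\binom{n+|\cX|-1}{|\cX|-1}$ by stars and bars. The number of joint types, i.e.\ types on $\cX\times\cY$, is $\binom{n+|\cX||\cY|-1}{|\cX||\cY|-1}$. The constraints also collapse. Constraint \eqref{eq:lp-c2} becomes one equation per output type, weighted by the counts of $x$ compatible with each conditional type. Constraint \eqref{eq:lp-c3} becomes one inequality per joint type.

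\textbf{Step 4: symmetrize the dual.} For \eqref{eq:lp-dual}, the inner function $t\mapsto\min_x[\cdots]$ is concave, as a minimum of concave piecewise-linear functions of $t$. It is also invariant under $t\mapsto t\circ\pi$: reindexing $y\mapsto\pi y$ and using $\W(\pi y\mid\pi x)=\W(y\mid x)$ maps the minimand at $x$ to the minimand at $\pi x$, and the minimum over $x$ absorbs this relabeling. Averaging an optimal $t^\star$ over the group keeps it in $[0,1]^{|\cY|}$. By concavity (Jensen) its value does not decrease, so the average is still optimal, and its entries depend only on the type of $y$.

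The main point needing care is Step 2: the averaging must use the same permutation jointly on inputs and outputs, so that the coupling cap \eqref{eq:lp-c3} survives. Averaging $Q_X$ and $W^*$ independently would break that cap. The rest is bookkeeping.
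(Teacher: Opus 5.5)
Your proposal is correct and follows essentially the same orbit-averaging argument as the paper. Memorylessness makes the LP invariant under the simultaneous coordinate-permutation action; the group average of an optimal solution is then an invariant optimum, constant on type and joint-type classes; and the dual thresholds are symmetrized by the concavity and permutation invariance of the max--min objective. Your use of $\pi$ rather than $\pi^{-1}$ in the action makes no difference, and your check that an optimum exists (nonempty compact feasible set for $R\ge0$) is a small point the paper leaves implicit.
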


\begin{IEEEproof}
For a permutation $\pi$ of the $n$ coordinates write $(\pi x)_i\triangleq x_{\pi^{-1}(i)}$; memorylessness gives $\W(\pi y\mid\pi x)=\W(y\mid x)$. Let $\pi$ act on solutions by
\[
Q_X^\pi\triangleq Q_X\circ\pi^{-1},\qquad
W^{*,\pi}(x\mid y)\triangleq W_{X|Y}^*\bigl(\pi^{-1}x\mid\pi^{-1}y\bigr),
\]
the substitution used (for a general symmetry group) in \cref{rem:group-symmetric} below. This action preserves the normalizations~\eqref{eq:lp-c1}--\eqref{eq:lp-c2}, maps the box~\eqref{eq:lp-c3} to $W^{*,\pi}(x\mid y)=W_{X|Y}^*(\pi^{-1}x\mid\pi^{-1}y)\le e^R Q_X(\pi^{-1}x)=e^R Q_X^\pi(x)$, and preserves the objective~\eqref{eq:lp-obj} by the substitution $(x,y)\mapsto(\pi x,\pi y)$ together with $\W(\pi y\mid\pi x)=\W(y\mid x)$. The feasible set is convex (all constraints are linear in the pair $(Q_X,W_{X|Y}^*)$) and the objective is linear, so the average of an optimal solution over all $n!$ permutations is feasible, attains the same optimal value, and is invariant under every $\pi$. The orbits of the action are the type classes of $x$ on $\cX^n$ and, under the simultaneous action on $\cX^n\times\cY^n$, the joint type classes of $(x,y)$; an invariant solution is constant on orbits, and the orbit counts are the stated binomial coefficients. For the dual, the objective of~\eqref{eq:lp-dual} is concave in $t\in[0,1]^{|\cY|^n}$ and invariant under $t\mapsto t\circ\pi^{-1}$ (reindex the inner minimization by $x\mapsto\pi x$), so the same orbit average of an optimal $t$ is optimal and constant on output types.
\end{IEEEproof}

Channel structure (zero entries, additional symmetries) can shrink the reduced program further; \cref{sec:num} reports the sizes used in practice. The achievability-side counterpart of this reduction --- the march run on the type simplex --- is \cref{rem:ach-type}.

The same reverse-channel construction applies, over an enlarged candidate space, to lossy source coding and joint source--channel coding; those directions are not pursued here. Channel symmetry, by contrast, collapses the program outright, as we record next.

\subsection{Symmetric channels: a closed-form optimum}\label{sec:symmetric}

When the channel is symmetric enough the program need not be solved at all: the uniform prior is optimal, in both coding directions and at every blocklength. Call $\W$ \emph{Gallager-symmetric}~\cite[p.~94]{gallager1968information} if the output alphabet partitions into blocks $\cY=\cY_1\cup\cdots\cup\cY_K$ such that within each block $\cY_i$ the transition submatrix $[\W(y\mid x)]_{x\in\cX,\,y\in\cY_i}$ has rows that are permutations of one another and likewise columns: the multiset $\{\W(y\mid x):y\in\cY_i\}$ is the same for every $x\in\cX$, and the multiset $\{\W(y\mid x):x\in\cX\}$ is the same for every $y\in\cY_i$.

\begin{proposition}[Uniform prior for Gallager-symmetric channels]\label{prop:symmetric}
Let $\W$ be Gallager-symmetric and write $U\triangleq\Unif{\cX}$. Then the uniform prior attains the minimax meta-converse, $\Fspec{U}{R}=\inf_{Q_X}\Fspec{Q_X}{R}$, for every $R\ge0$, and likewise maximizes the achievability objective $J$ of \cref{prop:ach-concave} (\cref{sec:ach-prioropt} below) for every nonnegative kernel.
\end{proposition}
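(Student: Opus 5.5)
The plan is a symmetrization argument: convexity in the prior plus invariance of the objective under a transitive group acting on $\cX$. The achievability objective $J$ is not yet defined at this point, so for that part I only assume what \cref{prop:ach-concave} announces: $J$ is concave on the simplex and is built from the same PEP-kernel functional $\E{f(\PEP{X}{Y})}$.

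\textbf{Step 1: the symmetry group.} From Gallager symmetry I will extract a group $\Gamma$ of pairs $(\sigma,\tau)$, with $\sigma$ a permutation of $\cX$ and $\tau$ a permutation of $\cY$ preserving each block $\cY_i$. The pairs must satisfy $\W(\tau y\mid\sigma x)=\W(y\mid x)$, and $\{\sigma\}$ must act transitively on $\cX$.

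Transitivity is the delicate point. Row and column permutation-equivalence within a block alone does not obviously give a pair of permutations. My first attempt is to take the group generated by all such pairs and check transitivity directly. If that fails, I will bypass the group and argue through the dual form \eqref{eq:lp-dual} instead.

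\textbf{Step 2: invariance of the spectrum.} For $(\sigma,\tau)\in\Gamma$, the map $(x,y,u)\mapsto(\sigma x,\tau y,u)$ carries the joint law $Q_X\cdot\W$ to $(Q_X\circ\sigma^{-1})\cdot\W$. It also carries the PEP computed under $Q_X$ to the PEP computed under $Q_X\circ\sigma^{-1}$. This holds because the matched metric satisfies $m(\sigma x,\tau y)=m(x,y)$, taking the log-likelihood, which is order-equivalent at each $y$. The counts $G$ and $H$ are then transported accordingly. Consequently
\[
\Fspec{Q_X\circ\sigma^{-1}}{z}=\Fspec{Q_X}{z}
\]
for all $z$, and likewise every kernel average $J$ is invariant.

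\textbf{Step 3: averaging.} Given any prior $Q_X$, form $\bar Q=\frac1{|\Gamma|}\sum_\gamma Q_X\circ\sigma_\gamma^{-1}$. By transitivity, $\bar Q$ is the uniform prior $U$. By convexity of $Q_X\mapsto\Fspec{Q_X}{R}$ (\cref{thm:joint-convex}, matched case, as noted after \eqref{eq:minimax-mc}),
\[
\Fspec{U}{R}\le\frac1{|\Gamma|}\sum_\gamma\Fspec{Q_X\circ\sigma_\gamma^{-1}}{R}=\Fspec{Q_X}{R}.
\]
Taking the infimum over $Q_X$ gives the converse claim. The same averaging applied to the concave $J$ gives $J(U)\ge J(Q_X)$, which is the achievability claim.

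\textbf{Fallback if transitivity fails.} I will use the LP of \cref{thm:lp} directly. Column symmetry within blocks means that a feasible $(Q_X,W^*)$ can be symmetrized only if the group exists. So in that case I would instead exhibit a dual certificate in \eqref{eq:lp-dual}: take thresholds $t_y$ constant on each block $\cY_i$. Row symmetry then makes the bracket independent of $x$, and the task reduces to showing that the uniform prior's primal value matches this certificate.

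I expect Step 1, the transitivity or certificate matching, to be the main obstacle. Steps 2 and 3 are routine.
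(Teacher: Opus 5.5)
\textbf{Main route.} Step 1 does not prove the proposition as stated. Orbit averaging under a transitive group $\Gamma$ of pairs $(\sigma,\tau)$ with $\W(\tau y\mid\sigma x)=\W(y\mid x)$ is exactly the shortcut of \cref{rem:group-symmetric}. It covers only the transitive-group subclass, which is strictly smaller than the Gallager-symmetric class. Take a single block whose entries form a Latin square with trivial autotopism group and distinct symbol probabilities. Every row and column is then a permutation of the same multiset, yet the only pair $(\sigma,\tau)$ preserving the matrix is the identity. So $\Gamma$ is trivial, Step 1 fails, and the averaging in Step 3 fails with it, for both $F$ and $J$. The obstacle you flag is therefore not a technicality: the fallback has to carry the whole proof.

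\textbf{Fallback.} It is the right idea; it is the LP-dual mirror of the paper's argument. But it stops where the content lies: you neither specify the block-constant thresholds nor carry out the ``matching.'' Here is what is needed.
\begin{itemize}
\item Set $w=e^{-R}$, $n=|\cX|$, and let $m$ be the integer with $m/n<w\le(m+1)/n$. Take $t_y=\nu^y_{m+1}$, the $(m{+}1)$-th largest entry of column $y$. The column condition makes this constant on blocks.
\item Water-filling at $U$ fills the top $m$ entries of each column at mass $1/n$ and the next entry at $w-m/n$. This gives $G(U;w)=\frac1n\sum_{x,y}(\W(y\mid x)-t_y)^+ + w\sum_y t_y$.
\item Using $\min\{a,t\}=a-(a-t)^+$ and $\sum_y\W(y\mid x)=1$, the dual bracket in \eqref{eq:lp-dual} equals $1-\sum_y(\W(y\mid x)-t_y)^+-w\sum_y t_y$.
\item The row condition makes $\sum_y(\W(y\mid x)-t_y)^+\equiv\lambda$ independent of $x$. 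Hence the bracket equals $1-G(U;w)=\Fspec{U}{R}$ at every $x$.
\item Weak duality then gives $\Fspec{U}{R}\le\inf_{Q_X}\Fspec{Q_X}{R}$.
\end{itemize}
The paper runs the same computation in primal form, as the supergradient inequality $G(Q_X;w)\le G(U;w)+\sum_x(Q_X(x)-\tfrac1n)g(x)$ with $g(x)=\sum_y(\W(y\mid x)-t_y)^+$ constant in $x$.

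\textbf{Achievability half.} Your fallback does not address $J$ at all. You need the observation that the converse claim at every $R\ge0$ is pointwise dominance $G(Q_X;w)\le G(U;w)$ for all $w\in(0,1]$. Integrating against $\kappa\ge0$ then gives $J(Q_X)\le J(U)$. The group averaging you planned for $J$ is unavailable in exactly the non-transitive case.
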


\begin{IEEEproof}
It suffices that $U$ maximize the PEP-scale spectrum $G(\cdot;w)$ of \cref{lem:waterfill} at every $w\in(0,1]$: then $J(Q_X)=\int_0^1 G(Q_X;w)\,\kappa(w)\,dw\le J(U)$ for every kernel $\kappa\ge0$, and since the two spectra are complements under $w=e^{-R}$ (\cref{lem:atomless}), $\Fspec{U}{R}=1-G(U;e^{-R})\le 1-G(Q_X;e^{-R})=\Fspec{Q_X}{R}$ for every $Q_X$.

Fix $w\in(0,1]$, let $n\triangleq\abs{\cX}$ and $m\triangleq\#\{j:j/n<w\}\le n-1$, and for each $y$ let $t_y$ be the $(m{+}1)$-th largest of the column entries $\{\W(y\mid x):x\in\cX\}$; by the column condition, $t_y=t_i$ is constant on each block $\cY_i$. In the notation of \cref{lem:waterfill}, the fill orders are prior-independent, $t_y=\nu^y_{m+1}$, and the uniform cumulative masses are $\sigma^y_j(U)=j/n$ for every $y$, so $\sigma^y_j(U)<w$ exactly when $j\le m$. The elementary ramp inequality $\min\{w,\sigma\}\le\min\{w,\sigma_0\}+\Ind{\sigma_0<w}\,(\sigma-\sigma_0)$, applied at $\sigma_0=\sigma^y_j(U)$ term by term in the ramp form $G(Q_X;w)=\sum_y\sum_j c^y_j\min\{w,\sigma^y_j(Q_X)\}$ of~\eqref{eq:wf-ramp}, gives
\begin{equation*}
G(Q_X;w)\;\le\;G(U;w)+\sum_{y}\sum_{j\le m}c^y_j\bigl(\sigma^y_j(Q_X)-\sigma^y_j(U)\bigr)
\;=\;G(U;w)+\sum_{x}\bigl(Q_X(x)-\tfrac1n\bigr)\,g(x),
\end{equation*}
where the second equality exchanges sums via $\sigma^y_j(Q_X)=\sum_{i\le j}Q_X(x^y_{(i)})$ and telescopes the increments: writing $j(x,y)$ for the rank of $x$ in output $y$'s fill order, $\sum_{j(x,y)\le j\le m}c^y_j=\nu^y_{j(x,y)}-\nu^y_{m+1}=\W(y\mid x)-t_y$ when $j(x,y)\le m$ and the sum is empty otherwise, so
\begin{equation*}
g(x)=\sum_{y}\bigl(\W(y\mid x)-t_y\bigr)^{+}.
\end{equation*}
(Entries above $t_y$ have rank at most $m$, entries below it rank above $m$, and ties are harmless: consecutive tied levels have $c^y_j=0$, so an entry tied with $t_y$ contributes $t_y-t_y=0$ whichever side of $m$ its rank falls.) By the row condition, $\sum_{y\in\cY_i}(\W(y\mid x)-t_i)^{+}$ depends only on the multiset of row entries within the block, hence not on $x$; thus $g\equiv\lambda$ is constant, the correction term is $\lambda\sum_x(Q_X(x)-\tfrac1n)=0$, and $G(Q_X;w)\le G(U;w)$ for every $Q_X$.
\end{IEEEproof}

In both directions prior optimization is therefore unnecessary: the bound is the spectrum read --- or integrated against the kernel --- at the uniform prior.

\begin{remark}[Transitive group symmetry]\label{rem:group-symmetric}
A transitive symmetry group gives a shorter route to the same conclusion. Suppose a group $\Gamma$ acts transitively on $\cX$ and, for each $g\in\Gamma$, there is a permutation $\sigma_g$ of $\cY$ with $\W(\sigma_g(y)\mid g(x))=\W(y\mid x)$ for all $x,y$. Such a channel is Gallager-symmetric: on each orbit of the output-permutation group generated by $\{\sigma_g\}$, the relabelings $(x,y)\mapsto(g(x),\sigma_g(y))$ carry any row multiset to any other (transitivity) and preserve every column multiset, so the orbits serve as blocks. For this subclass an orbit-averaging argument replaces the certificate: each objective is concave in $Q_X$ (linear, for the converse LP) and invariant under $Q_X\mapsto Q_X\circ g^{-1}$ --- for the LP via the substitution $(Q_X,W_{X|Y}^*)\mapsto(Q_X\circ g^{-1},\,W_g^*)$ with $W_g^*(x\mid y)=W_{X|Y}^*\bigl(g^{-1}(x)\mid\sigma_g^{-1}(y)\bigr)$, which preserves the constraints~\eqref{eq:lp-c1}--\eqref{eq:lp-c3} and the objective~\eqref{eq:lp-obj}; for $J$ because the water-filling value at $\sigma_g(y)$ under $Q_X\circ g^{-1}$ equals that at $y$ under $Q_X$, so $G(Q_X\circ g^{-1};w)=G(Q_X;w)$ at every $w$ --- and averaging an optimum over its orbit $\{Q_X^\star\circ g^{-1}:g\in\Gamma\}$ yields, by concavity, a $\Gamma$-invariant optimum, of which transitivity leaves only the uniform prior. The inclusion is strict: there are Gallager-symmetric channels with trivial automorphism group --- blocks built from Latin squares, say --- for which no averaging argument is available and \cref{prop:symmetric} still applies.
\end{remark}

For the binary-symmetric channel both routes apply: the matrix is Gallager-symmetric with the single block $\cY=\{0,1\}$ (every row and column a permutation of $(1-p,\,p)$), and $\Gamma=\{\mathrm{id},\,\text{flip}\}$ acts transitively on $\{0,1\}$ with $\sigma$ the output bit-flip. Either way the uniform input is optimal in both directions at every blocklength --- the closed-form fact behind the analytic illustration of \cref{sec:bsc-example}.

\section{Prior Optimization of the Achievability Bound}\label{sec:ach-prioropt}

Section~\ref{sec:prioropt} optimized the prior on the \emph{converse} side, where
the spectrum is read at a \emph{single} threshold and the minimax collapses to one
linear program. The \emph{achievability} bound is, by the integral
representation~\eqref{eq:integral}, a positive-kernel integral of the \emph{whole}
spectrum, so its prior optimization couples all thresholds at once. We show that
the two ingredients already in hand --- the integral
formula~\eqref{eq:integral} and the reverse-channel
identity~\eqref{eq:rc-sup} --- make this a \emph{concave} program over the
input simplex.  Its
gradient is read off the spectrum in one sweep, so the program is solved by a
direct first-order march.

The converse
reads the spectrum at one threshold while achievability integrates it against a
kernel: the two share a common prior-optimization structure --- the input
simplex, ordered by the channel's water-filling --- seen through two kernels,
the converse's indicator and the random-coding kernel.

\subsection{The bound as a kernel integral}

For an input prior $Q_X$ write the non-decreasing spectrum in the PEP variable
$w\in[0,1]$,
\begin{equation}\label{eq:ach-spectrum}
G(Q_X;w)\triangleq\PR{\PEP{X}{Y}\le w},\quad G(Q_X;0)=0,\quad(X,Y)\sim Q_X\cdot\W;
\end{equation}
we write $G$ for this PEP-scale CDF, reserving $F$ for the rate-scale spectrum
$\Fspec{Q_X}{z}$ of \cref{def:spectrum} --- the two are complements under
$w=e^{-z}$. Both random-coding bounds of Section~\ref{sec:channel} then have the form
\begin{equation}\label{eq:ach-J}
J(Q_X)\triangleq\int_0^1 G(Q_X;w)\,\kappa(w)\,dw,\qquad P_e=1-J(Q_X),
\end{equation}
for a nonnegative kernel $\kappa$: the exact bound
(\cref{thm:exact-rc}) uses $\kappa(w)=(M-1)(1-w)^{M-2}$ and the
$\RCUp$ bound (\cref{thm:rcu-plus}) $\kappa(w)=e^{R}\Ind{w\le e^{-R}}$, both
obtained by Stieltjes integration by parts (with $G(Q_X;0)=0$, $G(Q_X;1)=1$).
Minimizing $P_e$ is maximizing $J$. Throughout this section we use the
achievability-natural convention $M-1=e^{R}$ for the random-coding ensemble; the
bound at the integer codebook size $M=\lceil e^{R}\rceil$ follows by monotonicity
in the codebook size. The dithered PEP is atomless (\cref{lem:atomless}), so
$G(Q_X;\cdot)$ is continuous.

\subsection{Water-filling the spectrum}

At the matched (likelihood-ratio) metric the reverse-channel identity gives the
spectrum, at each threshold, as a fractional-knapsack value whose fill order is
fixed by the channel.

\begin{lemma}[Water-filling form]\label{lem:waterfill}
For the likelihood-ratio metric $m(x,y)=\W(y\mid x)/Q_Y(y)$ ($Q_Y$ of full
support, as in \cref{sec:lp}) and every $w\in[0,1]$,
\begin{equation}\label{eq:waterfill}
G(Q_X;w)=\sum_{y\in\cY}s_y(Q_X;w),\qquad
s_y(Q_X;w)=\max_{\substack{0\le V\le Q_X\\ \sum_x V(x)=w}}\ \sum_{x}V(x)\,\W(y\mid x).
\end{equation}
Fix $y$ and order $\cX$ as $x^{y}_{(1)},\dots,x^{y}_{(n)}$ ($n=\abs{\cX}$) so that
$\W(y\mid x^{y}_{(1)})\ge\cdots\ge\W(y\mid x^{y}_{(n)})$ (ties broken arbitrarily).
With $\nu^{y}_{j}\triangleq\W(y\mid x^{y}_{(j)})$, $\nu^{y}_{n+1}\triangleq0$, the
metric increments $c^{y}_{j}\triangleq\nu^{y}_{j}-\nu^{y}_{j+1}\ge0$, and the
cumulative masses $\sigma^{y}_{j}(Q_X)\triangleq\sum_{i\le j}Q_X(x^{y}_{(i)})$,
$\sigma^{y}_{0}\triangleq0$,
\begin{equation}\label{eq:wf-ramp}
s_y(Q_X;w)=\sum_{j}c^{y}_{j}\,\min\bigl\{w,\sigma^{y}_{j}(Q_X)\bigr\},
\end{equation}
equivalently the staircase integral
\begin{equation}\label{eq:wf-staircase}
s_y(Q_X;w)=\int_{0}^{w}\rho_y(Q_X;t)\,dt,\qquad
\rho_y(Q_X;t)=\nu^{y}_{j}\ \text{ for } t\in\bigl(\sigma^{y}_{j-1},\sigma^{y}_{j}\bigr].
\end{equation}
The fill order $x^{y}_{(1)},\dots,x^{y}_{(n)}$ depends only on the channel, not on
$Q_X$; the breakpoints $\sigma^{y}_{j}$ are linear in $Q_X$; and each
$s_y(Q_X;\cdot)$ is non-decreasing, concave and piecewise-linear in $w$.
\end{lemma}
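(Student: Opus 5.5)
The plan is to derive \eqref{eq:waterfill} directly from the reverse-channel identity \eqref{eq:rc-sup} of \cref{thm:reverse-channel}. The ramp and staircase forms then follow from the greedy (fractional-knapsack) solution of the per-output problem.

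\textbf{Step 1: the two sides of \eqref{eq:rc-sup}.} First dispose of the endpoints. At $w=0$ both sides vanish: $G(Q_X;0)=0$, and the only feasible $V$ is $V=0$. At $w=1$ we have $G=1$, and $V=Q_X$ gives $s_y=\sum_x Q_X(x)\W(y\mid x)$, which sums to $1$ over $y$. For $w\in(0,1)$ set $z=-\log w>0$ and take $m(x,y)=\W(y\mid x)/Q_Y(y)$, with hypotheses (A1)--(A2) verified as in \cref{sec:lp}. The left side of \eqref{eq:rc-sup} is
\[
\sum_{x,y}Q_X(x)Q_Y(y)\frac{\W(y\mid x)}{Q_Y(y)}\PR{\PEP{x}{y}\le w}=\PRs{Q_X\cdot\W}{\PEP{X}{Y}\le w}=G(Q_X;w),
\]
where the probability is over the dither. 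The right side is
\[
\sup_{W^*}\ w\sum_{x,y}W^*(x\mid y)\,\W(y\mid x),\qquad W^*(x\mid y)\le Q_X(x)/w .
\]
In both expressions $Q_Y$ cancels, exactly as in the derivation preceding \cref{thm:lp}.

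\textbf{Step 2: reparametrize and decouple.} Put $V_y(x)\triangleq w\,W^*(x\mid y)$. The constraints become $0\le V_y\le Q_X$ and $\sum_x V_y(x)=w$, and these couple only entries sharing the same $y$. The supremum therefore splits into a sum over $y$ of independent problems, which are exactly the programs $s_y(Q_X;w)$. Each is a maximum over a compact polytope, which is nonempty because $\sum_x Q_X(x)=1\ge w$; hence sup equals max. This proves \eqref{eq:waterfill}.

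\textbf{Step 3: solve the knapsack.} For fixed $y$, the problem of maximizing $\sum_x V(x)\W(y\mid x)$ under a box constraint and a total-mass constraint is a fractional knapsack. I will show it is solved by filling the order $x^y_{(1)},x^y_{(2)},\dots$ to capacity $Q_X$ until mass $w$ is used. The argument is an exchange one: if some mass sits on a lower-ranked $x$ while a higher-ranked $x'$ is unsaturated, moving that mass to $x'$ does not decrease the objective. This is the same ``fill from the top'' argument as in \cref{thm:reverse-channel}. The greedy fill places mass $\min\{w,\sigma^y_j\}-\min\{w,\sigma^y_{j-1}\}$ on $x^y_{(j)}$, so
\[
s_y=\sum_j\nu^y_j\bigl(\min\{w,\sigma^y_j\}-\min\{w,\sigma^y_{j-1}\}\bigr).
\]
Abel summation with $\nu^y_{n+1}=0$ and $\sigma^y_0=0$ turns this into \eqref{eq:wf-ramp}. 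The staircase form \eqref{eq:wf-staircase} is the same sum written as an integral of the piecewise-constant density $\rho_y$.

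\textbf{Step 4: structural claims.} The fill order uses only the channel column $\W(y\mid\cdot)$, so it is independent of $Q_X$. Each $\sigma^y_j$ is a partial sum of $Q_X$, hence linear in $Q_X$. Finally, $s_y(Q_X;\cdot)$ is a nonnegative combination (since $c^y_j\ge0$) of the functions $w\mapsto\min\{w,\sigma\}$. Each of these is non-decreasing, concave and piecewise linear, and so is their sum.

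\textbf{Main obstacle.} The step needing care is Step 1: identifying the metric-weighted left side with the channel-law CDF $G$. This requires that the dither be averaged consistently, and that the Lemma's event $\{\PEP{X}{Y}\le w\}$ match the event in \eqref{eq:rc-sup}; I will keep both stated as the non-strict event. Continuity from \cref{lem:atomless} removes any ambiguity between $\le$ and $<$. Ties in the fill order are harmless, because tied levels give $c^y_j=0$.
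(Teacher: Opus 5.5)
Your proposal is correct and follows the paper's route. You use the sup form of the reverse-channel identity at $z=-\log w$, where $Q_Y$ cancels, then the substitution $V=wW^*$, which decouples across outputs. The per-output fractional knapsack is solved by the greedy fill, and summation by parts gives the ramp form.

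The differences are minor. You prove greedy optimality by an exchange argument, whereas the paper bounds every feasible $V$ directly through the level decomposition $\nu^y_j=\sum_{k\ge j}c^y_k$, each partial sum being at most $\min\{w,\sigma^y_k\}$, and then notes that the greedy fill attains every such bound at once. You also handle the endpoints $w\in\{0,1\}$ separately; the paper leaves these implicit, since it invokes an identity stated only for $z>0$.
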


\begin{IEEEproof}
The change of variable $V=w\,W^{*}$ turns the reverse-channel cap into the box
$V\le Q_X$ with budget $w$; the constraints decouple across outputs, and each
per-output program is a fractional knapsack whose optimum fills inputs in
decreasing channel-value order --- the ramp form~\eqref{eq:wf-ramp} follows by
summing level increments (Appendix~\ref{app:ach-prioropt}).
\end{IEEEproof}

\begin{remark}[One program, two kernels]\label{rem:one-program}
Writing $A_y$ for the linear map taking the prior to its cumulative masses
$\bigl(\sigma^{y}_{j}(Q_X)\bigr)_j$ and $c_y\triangleq(c^{y}_{j})_j\ge0$ for the
metric increments, \eqref{eq:wf-ramp} is the composition
$s_y(Q_X;w)=c_y^{\top}\Phi_w(A_yQ_X)$ with the ramp potential
$\Phi_w(\sigma)\triangleq\min\{w,\sigma\}$ applied componentwise; hence
\[
G(Q_X;w)=\sum_{y}c_y^{\top}\Phi_w\bigl(A_y Q_X\bigr),\qquad
J(Q_X)=\sum_{y}c_y^{\top}\Phi_\kappa\bigl(A_y Q_X\bigr),
\]
with the kernel-smoothed average
$\Phi_\kappa(\sigma)\triangleq\int_0^1\min\{w,\sigma\}\,\kappa(w)\,dw$. The
converse and achievability prior optimizations are thus \emph{one} program
$c^\top\Phi(AQ_X)$ distinguished only by the potential $\Phi$: the converse LP
of \cref{thm:lp} maximizes the single ramp $\Phi_{e^{-R}}$ (its value being
$1-\inf_{Q_X}\Fspec{Q_X}{R}$), while achievability maximizes the smoothed
$\Phi_\kappa$.
\end{remark}

\subsection{Direct optimization on the simplex}

The water-filling form makes the prior program concave and, at the same time,
hands over its gradient: nothing more is needed to march to the optimum.

\begin{proposition}[Concave prior program and its gradient]\label{prop:ach-concave}
For every nonnegative kernel $\kappa$ the achievability objective
$Q_X\mapsto J(Q_X)=\int_0^1 G(Q_X;w)\,\kappa(w)\,dw$ of~\eqref{eq:ach-J} is
concave on the simplex $\cP(\cX)$, so prior optimization of the random-coding
bound is the concave maximization
\begin{equation}\label{eq:ach-concave}
\inf_{Q_X\in\cP(\cX)}P_e(R;Q_X)=1-\max_{Q_X\in\cP(\cX)}J(Q_X).
\end{equation}
Writing $\bar\kappa(u)\triangleq\int_u^1\kappa(w)\,dw$ and using the fill order,
unit values $\nu^{y}_{j}$ (with $\nu^{y}_{n+1}\triangleq0$), and cumulative masses
$\sigma^{y}_{j}(Q_X)$ of \cref{lem:waterfill}, the gradient is
\begin{equation}\label{eq:ach-grad}
\frac{\partial J}{\partial Q_X(x)}
=\sum_{y\in\cY}\ \sum_{j\ge j(x,y)}\bigl(\nu^{y}_{j}-\nu^{y}_{j+1}\bigr)\,
\bar\kappa\bigl(\sigma^{y}_{j}(Q_X)\bigr),
\end{equation}
where $j(x,y)$ is the rank of $x$ in output $y$'s fill order; a single sweep over
the $\abs{\cX}\abs{\cY}$ knots evaluates $J$ and $\nabla J$ together.
\end{proposition}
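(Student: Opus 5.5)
The plan is to work entirely from the ramp form of \cref{lem:waterfill}, as packaged in \cref{rem:one-program}: $J(Q_X)=\sum_y c_y^{\top}\Phi_\kappa(A_yQ_X)$, with $c_y\ge0$, $A_y$ linear and prior-independent (the fill order depends only on the channel), and
\[
\Phi_\kappa(\sigma)=\int_0^1\min\{w,\sigma\}\,\kappa(w)\,dw .
\]
First I would justify exchanging the $w$-integral with the finite sums over $y$ and $j$. This is Tonelli's theorem, since all terms are nonnegative, and each $\Phi_\kappa(\sigma)$ with $\sigma\in[0,1]$ is bounded by $\int_0^1 w\,\kappa(w)\,dw$. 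That bound is finite for both kernels in use; for a general kernel I would simply assume $J$ finite, as the statement implicitly does.

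Concavity then follows from a chain of elementary facts. For each fixed $w$, $\sigma\mapsto\min\{w,\sigma\}$ is concave. A nonnegative mixture over $w$ (weights $\kappa\ge0$) preserves concavity, so $\Phi_\kappa$ is concave on $[0,1]$. Composition with the affine maps $Q_X\mapsto\sigma^y_j(Q_X)$ preserves concavity. Finally, nonnegative combination with weights $c^y_j\ge0$ preserves it as well. So $J$ is concave on the simplex. Identity~\eqref{eq:ach-concave} is then just $P_e=1-J$ from~\eqref{eq:ach-J}, with the maximum attained because $J$ is continuous (concave and finite on a finite-dimensional compact simplex; alternatively, $\Phi_\kappa$ is Lipschitz with constant $\bar\kappa(0)$ whenever $\kappa$ is integrable).

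For the gradient I would differentiate $\Phi_\kappa$ directly: $\Phi_\kappa(\sigma)=\int_0^\sigma w\kappa(w)\,dw+\sigma\,\bar\kappa(\sigma)$. Differentiating gives $\Phi_\kappa'(\sigma)=\sigma\kappa(\sigma)+\bar\kappa(\sigma)-\sigma\kappa(\sigma)=\bar\kappa(\sigma)$, or more simply $\partial_\sigma\min\{w,\sigma\}=\Ind{w>\sigma}$ integrated against $\kappa$. Since $\partial\sigma^y_j/\partial Q_X(x)=\Ind{j\ge j(x,y)}$, the chain rule gives~\eqref{eq:ach-grad} with $c^y_j=\nu^y_j-\nu^y_{j+1}$. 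The single-sweep claim follows because, for each $y$, the partial sums $\sum_{j\ge k}c^y_j\bar\kappa(\sigma^y_j)$ are suffix sums computed in one backward pass over the $|\cX|$ ranks, and $J$ accumulates in the same pass.

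The main obstacle is differentiability at points where the formula is only one-sided. For the $\RCUp$ kernel $\kappa=e^R\Ind{w\le e^{-R}}$, $\bar\kappa$ jumps at $e^{-R}$, so $\Phi_\kappa$ has a kink there. Likewise, ties in the fill order make the rank $j(x,y)$ ambiguous. I would handle ties by noting that tied levels have $c^y_j=0$, so the formula is independent of the tie-breaking. For kinks I would state~\eqref{eq:ach-grad} as the derivative wherever no $\sigma^y_j(Q_X)$ sits at a discontinuity of $\bar\kappa$, and otherwise as a valid supergradient, which is what a first-order method on a concave objective needs. A supergradient is obtained by choosing either one-sided value of $\bar\kappa$, since any selection between $\bar\kappa(\sigma^-)$ and $\bar\kappa(\sigma^+)$ lies in the superdifferential of the concave $\Phi_\kappa$. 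For the exact kernel, $\bar\kappa(u)=(1-u)^{M-1}$ is continuous, so $J$ is $C^1$ and the formula holds everywhere.
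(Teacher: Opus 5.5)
Your proof is correct, but you reach concavity by a different route from the paper.

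\emph{Concavity.} The paper argues structurally. Each $s_y(Q_X;w)$ is the value of the fractional-knapsack LP of the water-filling lemma, whose box constraint $V\le Q_X$ has the prior on its right-hand side. An LP value is concave in its right-hand side, and integrating against $\kappa\ge0$ finishes the argument. You instead compose the concave scalar potential $\Phi_\kappa$ with the affine cumulative masses, using nonnegative weights $c^y_j$. The paper's argument needs no closed form and would survive any box-constrained reverse-channel value. Yours makes concavity and the gradient come from a single object.

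\emph{Gradient.} The two computations agree up to summation by parts. The paper uses the staircase form, writes the per-output term as $\sum_j\nu^y_j[B(\sigma^y_j)-B(\sigma^y_{j-1})]$ with $B(u)=\int_0^u\bar\kappa$, then differentiates and reindexes. Your $\Phi_\kappa$ is exactly $B$: Tonelli applied to $\min\{w,\sigma\}=\int_0^\sigma\Ind{u<w}\,du$ gives $\Phi_\kappa(\sigma)=\int_0^\sigma\bar\kappa(u)\,du$. So the ramp form yields the $c^y_j$-weighted formula directly by the chain rule. That identity is also the clean way to get $\Phi_\kappa'=\bar\kappa$; your product-rule computation tacitly assumes $\kappa$ is continuous at $\sigma$.

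Two corrections.

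First, the obstacle you single out does not exist. For the $\RCUp$ kernel it is $\kappa$ that jumps, not $\bar\kappa$: $\bar\kappa(u)=(1-e^{R}u)^{+}$ is continuous, as $\bar\kappa$ is for every integrable $\kappa$. Hence $\Phi_\kappa$ is $C^1$; here $\Phi_\kappa(\sigma)=\sigma-e^{R}\sigma^2/2$ for $\sigma\le e^{-R}$ and $e^{-R}/2$ beyond. So $J$ is differentiable everywhere, and the gradient formula is a true gradient, as the proposition states. No supergradient hedge is needed.

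Second, a finite concave function on a compact simplex need not be continuous up to the boundary. Your first justification that the maximum is attained therefore fails. Keep the second one: $\Phi_\kappa$ is Lipschitz, or use dominated convergence with majorant $w\kappa(w)$.
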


\begin{IEEEproof}
By \cref{lem:waterfill} each per-output value $s_y(Q_X;w)$ is the optimum of a
linear program whose box constraint $V\le Q_X$ carries the prior on its
right-hand side; the optimal value of such a program is concave in the
right-hand side, so $s_y(Q_X;\cdot)$ is concave in $Q_X$, and hence so are
$G=\sum_y s_y$ and, for $\kappa\ge0$, the integral $J$. For the gradient,
substituting the staircase form~\eqref{eq:wf-staircase} and exchanging the order
of integration turns $J$ into knot sums with breakpoints linear in the prior;
differentiating breakpoint by breakpoint gives~\eqref{eq:ach-grad}
(Appendix~\ref{app:ach-prioropt}).
\end{IEEEproof}

\begin{proposition}[Water-filling optimality]\label{prop:ach-kkt}
A prior $Q_X^{\star}$ maximizes $J$ over $\cP(\cX)$ if and only if there is a level
$\lambda$ such that the marginal value
$g^{\star}(x)\triangleq\partial J/\partial Q_X(x)\big|_{Q_X^{\star}}$ satisfies
$g^{\star}(x)=\lambda$ for every $x$ in the support of $Q_X^{\star}$ and
$g^{\star}(x)\le\lambda$ otherwise.
\end{proposition}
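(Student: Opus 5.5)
This is the KKT characterization of a concave maximization over the simplex. The plan is to use concavity of $J$ (\cref{prop:ach-concave}) together with the first-order optimality condition for concave functions on a polytope. The one technical point is that $J$ need not be differentiable everywhere: it is built from ramps $\min\{w,\sigma\}$, and the kernel-smoothed potential $\Phi_\kappa$ of \cref{rem:one-program} has derivative $\bar\kappa(\sigma)$. So I would first record the regularity that makes the gradient formula~\eqref{eq:ach-grad} meaningful. Since $\Phi_\kappa'(\sigma)=\int_\sigma^1\kappa=\bar\kappa(\sigma)$ is continuous whenever $\kappa$ is integrable, $\Phi_\kappa$ is $C^1$ and concave. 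Then $J=\sum_y c_y^\top\Phi_\kappa(A_yQ_X)$ is $C^1$ on the simplex, and its gradient is~\eqref{eq:ach-grad} by the chain rule. Both kernels of~\eqref{eq:ach-J} are integrable, so this covers them. (For the $\RCUp$ kernel, $\bar\kappa(u)=e^R(e^{-R}-u)^+=(1-e^Ru)^+$ is continuous, which is why no one-sided derivatives are needed.)

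With differentiability in hand, I would prove the two directions.

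\emph{Necessity.} Suppose $Q_X^\star$ is optimal. For any $x'$ in the support and any $x$, consider the feasible direction that moves mass $\epsilon\le Q_X^\star(x')$ from $x'$ to $x$. The directional derivative along it, $g^\star(x)-g^\star(x')$, must be $\le0$. Taking $x$ also in the support and reversing the roles gives equality on the support; call the common value $\lambda$. Taking $x$ off the support then gives $g^\star(x)\le\lambda$.

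\emph{Sufficiency.} Suppose the level condition holds. By concavity, for every $Q_X\in\cP(\cX)$,
\[
J(Q_X)\le J(Q_X^\star)+\sum_x g^\star(x)\bigl(Q_X(x)-Q_X^\star(x)\bigr).
\]
The first-order term equals $\sum_x g^\star(x)Q_X(x)-\lambda$, because $Q_X^\star$ sits on the support where $g^\star=\lambda$. Since $g^\star\le\lambda$ everywhere and $Q_X$ sums to one, this term is $\le\lambda-\lambda=0$. Hence $J(Q_X)\le J(Q_X^\star)$.

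The main obstacle I expect is justifying the supporting-hyperplane inequality at boundary points of the simplex, where the gradient is taken with respect to coordinates along which the prior cannot decrease. I would handle this by noting that $J$ is defined by the same formula on the whole nonnegative orthant: the water-filling ramp form extends linearly in the masses. It is concave there, so the gradient inequality holds on an open neighborhood of the orthant's relative interior and extends to the boundary by continuity of $\bar\kappa$.

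Should $\kappa$ fail to be integrable, I would instead replace $g^\star$ by one-sided directional derivatives and state the condition with superdifferentials. Since the paper's kernels are bounded or integrable, I would not pursue that case.
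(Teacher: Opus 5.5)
Your proposal is correct and follows the paper's route: the paper's proof is a one-line appeal to the standard KKT conditions for maximizing the concave objective of \cref{prop:ach-concave} over the simplex, with concavity supplying sufficiency. Your additional check that $\Phi_\kappa'=\bar\kappa$ is continuous for integrable $\kappa$ supplies the regularity the paper leaves implicit: it shows that $J$ extends to a $C^1$ concave function off the simplex, so the partial derivatives in the statement are well defined. It also removes the boundary issue you flag, because the extended formula is concave and $C^1$ on all of $\mR^{\cX}$.
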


\begin{IEEEproof}
This is the standard KKT condition for maximizing a concave function over the
simplex (\cref{prop:ach-concave}); concavity makes it sufficient.
\end{IEEEproof}

\Cref{prop:ach-kkt} is the achievability analogue of the converse's complementary
slackness: the optimal prior equalizes the marginal value across its support, the
channel's water-filling order fixing how that value accrues at each output.
Because $J$ is concave and \cref{prop:ach-concave} returns $\nabla J$ in one
sweep, the optimum is reached by a direct first-order march on the simplex ---
projected gradient, mirror descent, or Frank--Wolfe. The Frank--Wolfe step is the
most transparent: its linear oracle $\max_{Q\in\cP(\cX)}\sum_x g(x)\,Q(x)$ places
all mass on the input of largest marginal value, so every step is itself a
water-filling move.

The kernel enters the gradient~\eqref{eq:ach-grad} only
through the scalar weights $\bar\kappa\bigl(\sigma^{y}_{j}\bigr)$ --- evaluations
of the tail integral $\bar\kappa(\cdot)$ at the knots --- so the same solver
applies to every nonnegative kernel of \cref{prop:ach-concave}: the exact random-coding kernel $\kappa(w)=(M-1)(1-w)^{M-2}$ and the
$\RCUp$ kernel $\kappa(w)=e^{R}\Ind{w\le e^{-R}}$ are handled with no
kernel-specific reformulation.

\begin{proposition}[Convergence of the march]\label{prop:ach-march}
Assume the matched decoding metric (the likelihood-ratio metric of
\cref{lem:waterfill}) and either random-coding kernel --- both are
nonincreasing with maximum $\kappa(0^{+})=e^{R}$ under the convention
$M-1=e^{R}$. Then the gradient~\eqref{eq:ach-grad} is Lipschitz on the simplex:
\begin{equation}\label{eq:ach-smooth}
\norm{\nabla J(Q_X)-\nabla J(Q_X')}_{\infty}\le L\,\norm{Q_X-Q_X'}_{1},
\qquad L=\kappa(0^{+})=e^{R}.
\end{equation}
Consequently the Frank--Wolfe march
$Q_X^{(k+1)}=(1-\gamma_k)\,Q_X^{(k)}+\gamma_k\,\delta_{x_k}$, with
$x_k\in\argmax_x g(x;Q_X^{(k)})$ and $\gamma_k=2/(k+2)$, started anywhere on
$\cP(\cX)$, satisfies
\begin{equation}\label{eq:ach-rate}
\max_{Q_X\in\cP(\cX)}J(Q_X)-J\bigl(Q_X^{(k)}\bigr)\le\frac{8L}{k+2},
\end{equation}
each iteration costing one $O(\abs{\cX}\abs{\cY})$ gradient sweep.
\end{proposition}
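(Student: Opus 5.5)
The plan is to prove the Lipschitz estimate~\eqref{eq:ach-smooth} directly from the gradient formula~\eqref{eq:ach-grad}, and then obtain~\eqref{eq:ach-rate} from the standard Frank--Wolfe analysis for smooth concave maximization, using concavity from \cref{prop:ach-concave}.

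\emph{Step 1: the tail integral is Lipschitz.} Both kernels are nonnegative and nonincreasing with supremum $\kappa(0^{+})=e^{R}$. For the exact kernel this is $(M-1)(1-w)^{M-2}$ at $w\to0$ with $M-1=e^{R}$; for the $\RCUp$ kernel it is immediate. Hence $\bar\kappa(u)=\int_u^1\kappa$ satisfies $|\bar\kappa(u)-\bar\kappa(u')|\le e^{R}|u-u'|$. Since $\bar\kappa$ is continuous, the right-hand side of~\eqref{eq:ach-grad} is continuous in $Q_X$. This justifies treating it as the gradient, with $J$ continuously differentiable on the simplex, even at priors where cumulative masses coincide or where the $\RCUp$ kernel jumps.

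\emph{Step 2: the Lipschitz bound with the right constant.} This is the main obstacle. The naive estimate $\sum_j c^y_j|\Delta\sigma^y_j|\le \nu^y_1\norm{\Delta Q}_1$ sums to $\sum_y\max_x\W(y\mid x)$, which can exceed $1$. So the increments must be exchanged with the masses. Fix $x$ and write $\Delta Q=Q_X-Q_X'$. By Step 1 and $|\Delta\sigma^y_j|\le\sum_{i\le j}|\Delta Q(x^y_{(i)})|$,
\[
|g(x)-g'(x)|\le e^{R}\sum_y\sum_{i}|\Delta Q(x^y_{(i)})|\sum_{j\ge\max(i,j(x,y))}c^y_j
= e^{R}\sum_y\sum_i|\Delta Q(x^y_{(i)})|\,\nu^y_{\max(i,j(x,y))}.
\]
Because the fill order is decreasing, $\nu^y_{\max(i,j(x,y))}=\min\{\W(y\mid x^y_{(i)}),\W(y\mid x)\}$. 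Reindexing by inputs $x'$ gives the bound
\[
e^{R}\sum_{x'}|\Delta Q(x')|\sum_y\min\{\W(y\mid x'),\W(y\mid x)\}.
\]
The inner sum is at most $\sum_y\W(y\mid x)=1$. This yields $\norm{\nabla J(Q_X)-\nabla J(Q_X')}_\infty\le e^{R}\norm{\Delta Q}_1$, which is~\eqref{eq:ach-smooth} with $L=e^{R}$.

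\emph{Step 3: the Frank--Wolfe rate.} By concavity and the $\ell_1/\ell_\infty$ smoothness just proved, the descent lemma gives, for $Q'=Q+\gamma(\delta_{x_k}-Q)$,
\[
J(Q')\ge J(Q)+\gamma\langle\nabla J(Q),\delta_{x_k}-Q\rangle-\tfrac{L}{2}\gamma^2\norm{\delta_{x_k}-Q}_1^2.
\]
Since the $\ell_1$ diameter of the simplex is $2$, the curvature constant satisfies $C\le 4L$. By concavity, the linear-oracle gap $\langle\nabla J(Q),\delta_{x_k}-Q\rangle$ dominates the suboptimality $h_k=\max J-J(Q^{(k)})$. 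This gives the recursion $h_{k+1}\le(1-\gamma_k)h_k+\tfrac{C}{2}\gamma_k^2$. With $\gamma_k=2/(k+2)$, the usual induction gives $h_k\le 2C/(k+2)\le 8L/(k+2)$, independently of the starting point, because $\gamma_0=1$ erases the initialization.

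\emph{Step 4: cost per iteration.} Each iteration needs one gradient sweep, costing $O(\abs{\cX}\abs{\cY})$ by \cref{prop:ach-concave}, followed by an argmax over $\cX$.
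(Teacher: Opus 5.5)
Your proof is correct. Steps 1, 3 and 4 match the paper's argument: $\bar\kappa$ is $\kappa(0^{+})$-Lipschitz, the curvature constant is at most $L\cdot 2^2=4L$ on the $\ell_1$-diameter-$2$ simplex, and the standard $2C_f/(k+2)$ induction finishes.

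You depart from the paper in Step~2, and the obstacle you identify there does not exist. The gradient formula~\eqref{eq:ach-grad} already restricts the inner sum to $j\ge j(x,y)$. The direct estimate $|\Delta\sigma^y_j|\le\norm{\Delta Q}_1$ therefore gives, per output, $\sum_{j\ge j(x,y)}c^y_j=\nu^y_{j(x,y)}=\W(y\mid x)$. Summing over $y$ yields $L=e^{R}$ in one line, and this is exactly the paper's proof. The quantity $\sum_y\max_x\W(y\mid x)$ that you call the naive bound appears only if you drop that restriction and sum over all $j$.

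Your exchange of increments and masses is still valid, and it is strictly finer than the paper's bound. It gives
$|g(x)-g'(x)|\le e^{R}\sum_{x'}|\Delta Q(x')|\sum_y\min\{\W(y\mid x'),\W(y\mid x)\}$.
This is a Lipschitz modulus weighted by the overlap $1-d_{\mathrm{TV}}$ between channel rows. It could support a channel-dependent sharpening of $L$, but for the statement as posed it returns the same constant $e^{R}$, at the cost of an extra summation exchange.

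Two smaller points are worth keeping.
\begin{itemize}
\item In Step~1 you note that the gradient is continuous. In fact $J$ is $C^1$, because it is built from $B(u)=\int_0^u\bar\kappa\in C^1$ evaluated at breakpoints that are linear in $Q_X$ under a prior-independent fill order. The paper leaves this implicit.
\item The $\gamma_0=1$ induction covers $k\ge1$ only. The $k=0$ case of~\eqref{eq:ach-rate} holds trivially, since $0\le J\le 1$ for both kernels and $L=e^{R}\ge1$.
\end{itemize}
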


\begin{IEEEproof}
The cumulative masses $\sigma^{y}_{j}(Q_X)$ are $1$-Lipschitz in the prior and
$\bar\kappa$ is $\kappa(0^{+})$-Lipschitz; for the matched metric the telescoped
weights sum over outputs to one, giving $L=e^{R}$, and the standard
Frank--Wolfe guarantee on the diameter-$2$ simplex yields the $O(L/k)$ rate
(Appendix~\ref{app:ach-prioropt}).
\end{IEEEproof}

The smoothness constant $L=e^{R}$ is a worst-case bound, exponential in the
rate, so the guarantee~\eqref{eq:ach-rate} is loose at practical rates; it
certifies convergence, not its observed speed. In the experiments of
\cref{sec:num} the march, with exact line search and warm starts, reaches
certificate accuracy within a few hundred iterations per rate point --- far
fewer than \eqref{eq:ach-rate} predicts.

\begin{remark}[Memoryless channels: a polynomial march]\label{rem:ach-type}
For a memoryless channel $\W^{\otimes n}$ the objective $J$ is invariant under a
common permutation of the $n$ coordinates, so by concavity it is maximized by a
permutation-invariant prior --- a mixture of uniform laws over the type classes.
The march may therefore run on the \emph{type simplex}, one variable per
composition, of dimension $\binom{n+\abs{\cX}-1}{\abs{\cX}-1}$ --- polynomial in
$n$ at fixed alphabet, the prior-side instance of the type reduction that makes
the converse LP tractable (\cref{prop:lp-type}). The fill order and the gradient~\eqref{eq:ach-grad} are
evaluated on the induced type-level channel, so each step stays polynomial
in~$n$.
\end{remark}

\section{Numerical Illustration}
\label{sec:num}

Two examples illustrate the two operational consequences of the framework: the converse--achievability \emph{bracket} produced by evaluating one and the same pairwise object in both directions (Section~\ref{sec:channel}), and the \emph{prior-family} improvement obtained by replacing a memoryless input law with the achievability-optimal type-based prior, computed by the simplex march of Section~\ref{sec:ach-prioropt}.

They complement the fully analytic binary-symmetric-channel illustration of Section~\ref{sec:bsc-example}, where both directions reduce in closed form to one binomial spectrum and prior optimization is rendered trivial by symmetry. The two numerical channels here exercise the cases the BSC does not --- a continuous-alphabet channel evaluated by exact non-central-$t$ computation, and an asymmetric channel where the prior genuinely matters. We focus on these two representative channels; additional channels and rate--SNR sweeps follow by the same exact evaluation. Code reproducing both experiments --- including the simplex march with per-point optimality certificates and its verification against an independent exact convex-programming solve --- is available in the companion libraries~\cite{elkayam2026oneshotIT,elkayam2026awgnfbl}.

\subsection{AWGN: an exact converse--achievability bracket}

For the additive white Gaussian noise channel under maximum-likelihood decoding, with codewords on the power shell $\BRAs{\bx\colon\norm{\bx}^2=nP}$, the pairwise error probability admits a closed form: conditioned on the angular correlation $T=\BRAi{\bx,\by}/(\norm{\bx}\norm{\by})$ between the transmitted codeword and the received word, the per-codeword pairwise error is the deterministic function $G(t)=\PR{\hat\rho\ge t}$ of an isotropic competitor against a fixed direction. This single function plays both roles of the framework.

As the $\beta$-functional it gives the optimal meta-converse for the uniform-on-shell input: among all auxiliary output measures $Q_{Y^n}$ the rotationally symmetric choice is $\beta$-minimizing~\cite{polyanskiy2010thesis,polyanskiy2013saddle}, the likelihood-ratio test reduces to a spherical cap, and $\beta_{1-\varepsilon}$ equals $G(t)$ with the threshold $t$ fixed by the type-I constraint through the non-central-$t$ CDF of $T$. This recovers Shannon's 1959 sphere-packing bound~\cite{shannon1959probability} as an instance of the meta-converse (\cref{thm:meta-converse}), here evaluated in exact closed form.

As the achievability kernel the same $G$ enters the $\RCUp$ integral (\cref{thm:rcu-plus}): writing $W=-\log G(T)$ and $F(\gamma)=\PR{W\le\gamma}$,
\begin{equation}\label{eq:awgn-rcuplus}
  \varepsilon_{\mathrm{ach}}^+(\log M)=\int_{\log M}^{\infty}F(\gamma)\,e^{\log M-\gamma}\,d\gamma,
\end{equation}
the only step from the random-coding ensemble being the Bernoulli envelope $1-(1-G(T))^{M-1}\le\min\BRAs{1,(M-1)G(T)}$. The two bounds share every other ingredient --- the same exactly computed function $G$ and the same law of $T$ --- so within this comparison the remaining difference between the curves is, by construction, the random-coding envelope.

\Cref{fig:awgn} brackets the optimal rate $R^*(n,\varepsilon)$ at $\mathrm{SNR}=0$~dB ($C=0.5$ bits/use), $\varepsilon=10^{-3}$, between the non-central-$t$ (NCT) meta-converse and the $\RCUp$ achievability. \Cref{tab:awgn} reports the reference operating point $n=200$. The bracket has width $0.009$ bits/use --- under $6\%$ of the finite-blocklength penalty $C-R^*\approx0.16$ bits --- and its measured widths ($0.053$ bits at $n=50$, $0.0016$ bits at $n=1000$) are consistent with closure at the dispersion rate $\Theta(1/\sqrt n)$. At the operating point the $\RCUp$ achievability ($R=0.3365$) lies $0.0092$ bits below the converse, whereas Polyanskiy's $\kappa\beta$ bound ($R=0.2837$) lies $0.0620$ bits below: the $\RCUp$ gap is about $6.7\times$ tighter. The relaxation cost of the common $Q_{Y^n}=\cN(0,(1+P)I_n)$ choice relative to the optimal output measure is only $0.003$ bits and appears to decay approximately as $1/n$ in these examples, an order faster than the bracket width.

\begin{table}[t]
\centering
\caption{Rate bounds for the AWGN channel at $n=200$, $\mathrm{SNR}=0$~dB, $\varepsilon=10^{-3}$ ($C=0.5000$ bits/use).}
\label{tab:awgn}
\begin{tabular}{lcc}
\toprule
Bound & $R$ (bits/use) & Gap to converse \\
\midrule
NCT converse (optimal meta-converse) & $0.3456$ & --- \\
$\chi^2$ converse \cite{polyanskiy2010channel} & $0.3484$ & --- \\
$\RCUp$ achievability & $0.3365$ & $0.0092$ \\
Normal approximation \cite{polyanskiy2010channel} & $0.3261$ & $0.0196$ \\
$\kappa\beta$ achievability \cite{polyanskiy2010channel} & $0.2837$ & $0.0620$ \\
\bottomrule
\end{tabular}
\end{table}

\begin{figure}[t]
  \centering
  \includegraphics[width=\columnwidth]{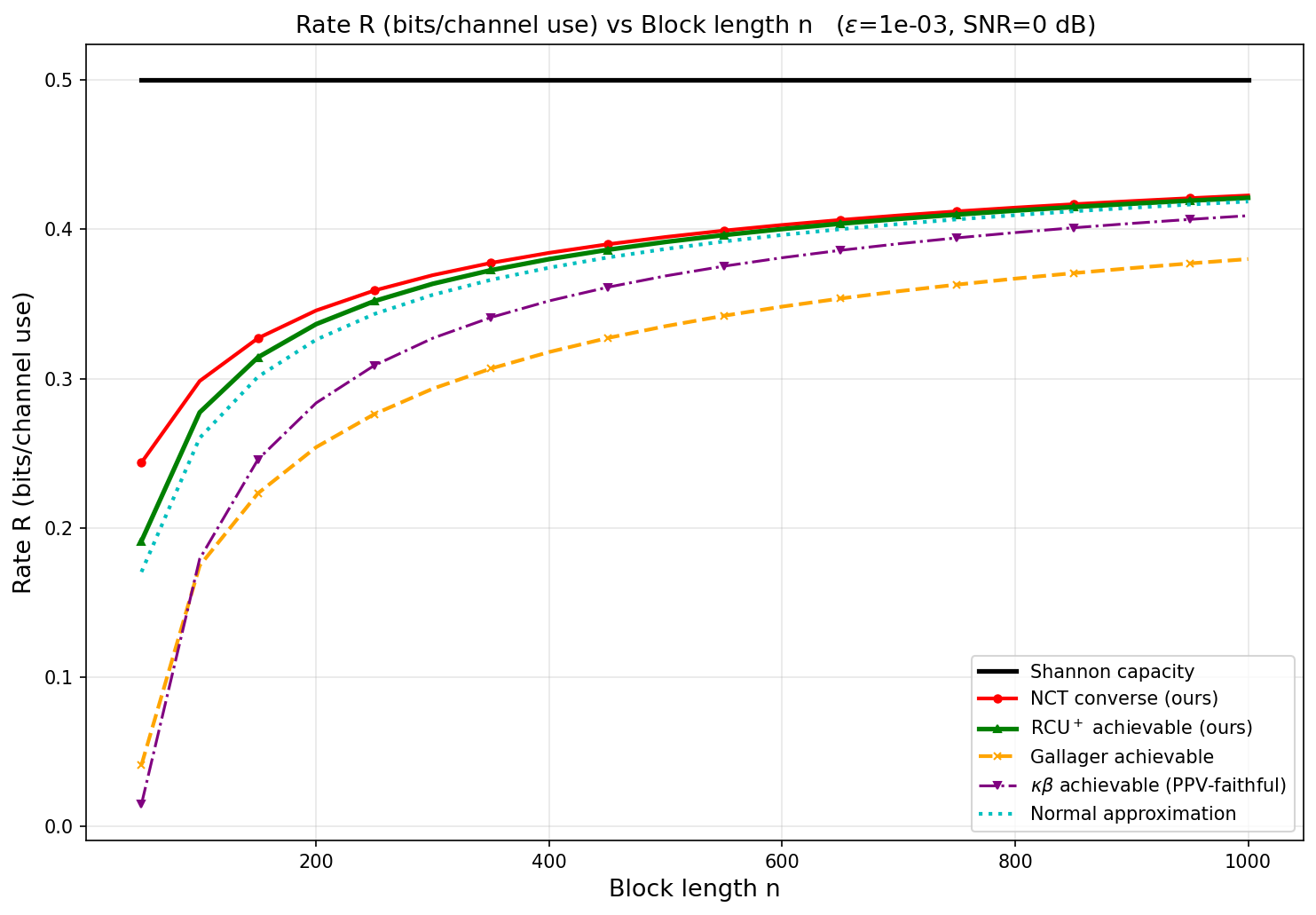}
  \caption{AWGN converse--achievability bracket at $\mathrm{SNR}=0$~dB, $\varepsilon=10^{-3}$, sphere-shell input. The non-central-$t$ meta-converse (\cref{thm:meta-converse}) and the $\RCUp$ achievability (\cref{thm:rcu-plus}) are driven by the same exactly evaluated pairwise kernel $G$; \cref{tab:awgn} reports the $n=200$ operating point.}
  \label{fig:awgn}
\end{figure}

\subsection{Binary Z-channel: the prior-family gap}

The second example isolates the value of optimizing the prior over the full type simplex rather than over memoryless laws. We use a binary Z-channel with crossover $z=0.1$,
\[
  \W=\begin{pmatrix}1 & 0\\ z & 1-z\end{pmatrix},\qquad C(\W)\approx0.763\text{ bits},
\]
chosen because its asymmetry makes the gap non-trivial: the capacity-achieving input is non-uniform, so a uniform memoryless prior is visibly suboptimal. On the type simplex the march of \cref{rem:ach-type} has one variable per composition --- $n+1=21$ types in the binary case --- and each iteration costs one gradient sweep over the type-level water-filling knots (\cref{prop:ach-march}); the converse-LP of \cref{thm:lp}, used below as a comparison point, is run in its type-reduced form (\cref{prop:lp-type}) --- $O(n^2)$ variables in this binary implementation --- and solves in seconds at this scale.

\Cref{fig:prioropt} compares, at $n=20$ under the $\RCUp$ kernel, the best achievability over the memoryless prior family against the best over the full type simplex, with the kernel held fixed so that only the prior family varies. The type-based optimum is computed by the simplex march of Section~\ref{sec:ach-prioropt}: Frank--Wolfe steps driven by the water-filling gradient of \cref{prop:ach-concave}, with exact line search along each step (the objective is piecewise-quadratic concave on a segment). It is independently verified: at every rate point the march value agrees with an exact convex-programming solve of the same program to within $4\times10^{-8}$, with the Frank--Wolfe duality gap as a self-contained optimality certificate.

In these experiments the march converges within a few hundred iterations per rate point, well under a second each, warm-started across the sweep. Enlarging the family from memoryless to type-based tightens the achievability across the rate range; the two curves are nearly indistinguishable on a $\log P_e$ axis, and the relative gain $(P_e^{\mathrm{ml}}-P_e^{\mathrm{march}})/P_e^{\mathrm{ml}}$ holds a plateau of about $3.4\%$ across the low-rate half of the sweep (peaking at $R\approx0.24$ bits) and vanishes at the high-rate end. The modest size of the gap under the exact $\RCUp$ kernel is itself informative: for this channel and blocklength, within the two prior families considered, the bulk of the achievability improvement available from prior optimization is already captured by the best memoryless law, with the type-based optimum contributing a residual low-to-mid-rate gain.

The experiment also shows that the achievability optimum genuinely requires its own program: the \emph{converse}-LP prior of \cref{thm:lp}, inserted into the $\RCUp$ bound as a surrogate, performs \emph{worse} than the best memoryless law at every rate in this sweep. The two prior optimizations share one water-filling geometry (\cref{rem:one-program}) but their optima differ, and only the march delivers the achievability-side one.

\begin{figure}[t]
  \centering
  \includegraphics[width=\columnwidth]{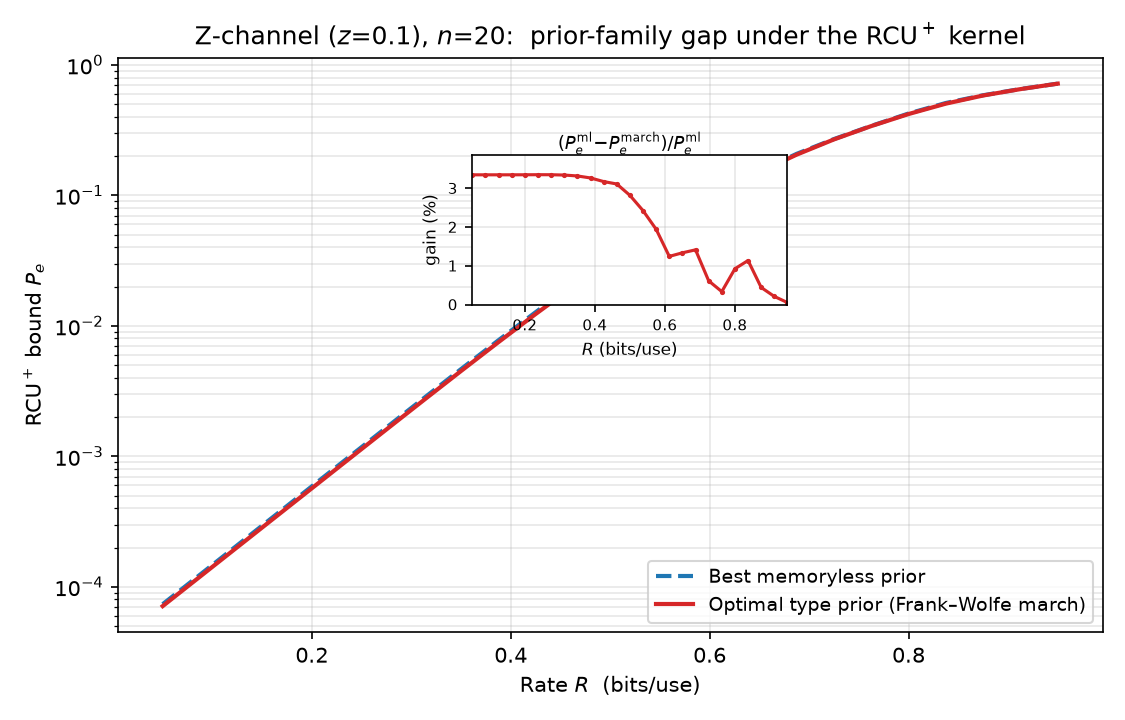}
  \caption{Prior-family gap for the binary Z-channel ($z=0.1$, $C\approx0.763$ bits) at $n=20$ under the $\RCUp$ kernel: best memoryless prior (dashed) versus the achievability-optimal type-based prior (solid), computed by the Frank--Wolfe march of \cref{prop:ach-march}. The curves overlap on $\log P_e$; the inset shows the relative gain $(P_e^{\mathrm{ml}}-P_e^{\mathrm{march}})/P_e^{\mathrm{ml}}$.}
  \label{fig:prioropt}
\end{figure}

\section{Conclusion}\label{sec:conc}

Three results summarize the paper.

\emph{The pairwise error probability depends on the decoding metric only
through the ordering it induces, providing a common representation for
arbitrary metrics.} Under the dithered tie rule the PEP obeys an exact
probability-integral-transform identity, and the error spectrum
$\Fspec{Q_X}{z}$ it induces is defined for an arbitrary decoding metric, at
any blocklength.

\emph{Achievability and converse are comparable through one spectrum.} The
random-coding bound (and its $\RCUp$ refinement) is a kernel integral of
$\Fspec{Q_X}{z}$, and the error probability of any fixed code is the same
spectrum evaluated at the code's empirical prior --- exactly, and for any
metric. The two directions differ only by a kernel and by the prior at which
the spectrum is read, and the comparison is quantitative: under log-concavity
of the spectrum their rate gap is at most $-\log(1-\varsigma)/\varsigma$,
which approaches one nat in the small-slope limit (\cref{thm:rategap}).

\emph{Both prior optimizations are computable.} Under matched ML the spectrum
is convex in the prior, and the reverse-channel identity turns the
prior-optimized minimax meta-converse into a finite-dimensional linear program
(\cref{thm:lp}) --- polynomial in the blocklength for memoryless channels with
fixed alphabets, after the type reduction of \cref{prop:lp-type}. The same
identity, integrated against the random-coding kernel, makes the
prior-optimized achievability bound a concave program over the input simplex
with an explicit gradient (\cref{prop:ach-concave,prop:ach-kkt}), optimized by
a first-order method with the convergence guarantee of \cref{prop:ach-march};
for Gallager-symmetric channels, a class strictly larger than the
transitive-group-symmetric one, the uniform prior is optimal in both directions
(\cref{prop:symmetric}).

Natural continuations --- joint source--channel and lossy source coding over
the same spectrum, refinements of the random-coding bound, and the
multi-terminal setting, where the bilinear coupling of encoder priors in the
meta-converse breaks convexity --- are left to future work.

\appendices

\section{Randomized Probability-Integral Transform and PEP Uniformity}\label{app:pep-uniform}

This appendix proves the uniformity and rank properties of the dithered PEP
(Lemma~\ref{thm:pep-uniform}). The engine is a randomized probability-integral
transform (RPIT) that restores uniformity in the presence of atoms.

\begin{lemma}[RPIT]\label{app:lem-rpit}
Let $T$ be a real random variable with CDF $F_T$ and atom function
$P_T(t)\triangleq F_T(t)-F_T(t^-)$, and let $U\sim\uU$ be independent of $T$.
Then $\Phi\triangleq F_T(T^-)+U\,P_T(T)\sim\uU$.
\end{lemma}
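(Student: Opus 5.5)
The plan is to show $\PR{\Phi\le v}=v$ for every $v\in(0,1)$ by conditioning on $T$. The map $t\mapsto[F_T(t^-),F_T(t)]$ sends distinct $t$ to intervals with disjoint interiors. Given $T=t$, $\Phi$ is uniform on $[F_T(t^-),F_T(t)]$ when $P_T(t)>0$, and equals $F_T(t)$ when $t$ is not an atom. So $\Phi$ fills the unit interval by stacking, in increasing order of $t$, the image of each atom spread uniformly over its height interval, together with the continuous part mapped through $F_T$.

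First, fix $v\in(0,1)$ and define the quantile $t_v\triangleq\inf\{t:F_T(t)\ge v\}$, which is finite. Right-continuity gives $F_T(t_v)\ge v$, and the infimum gives $F_T(t_v^-)\le v$. So $v\in[F_T(t_v^-),F_T(t_v)]$.

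Next, split the event $\{\Phi\le v\}$ according to whether $T<t_v$, $T=t_v$, or $T>t_v$.
\begin{itemize}
\item If $T<t_v$, then $\Phi\le F_T(T)\le F_T(t_v^-)\le v$, so $\Phi\le v$ holds surely. This piece contributes $\PR{T<t_v}=F_T(t_v^-)$.
\item If $T>t_v$, then $\Phi\ge F_T(T^-)\ge F_T(t_v)\ge v$. So $\Phi\le v$ holds only on the event $\{\Phi=v\}\cap\{T>t_v\}$.
\item If $T=t_v$ and $t_v$ is an atom, then $\Phi=F_T(t_v^-)+U P_T(t_v)\le v$ has conditional probability $(v-F_T(t_v^-))/P_T(t_v)$. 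This piece contributes $v-F_T(t_v^-)$. If $t_v$ is not an atom, then $F_T(t_v^-)=F_T(t_v)=v$ and this piece is null, which agrees with the formula.
\end{itemize}
Summing the three pieces gives $\PR{\Phi\le v}=v$, provided the boundary event in the second case has probability zero.

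The main obstacle is exactly that boundary event, $\PR{T>t_v,\ \Phi=v}=0$. Here $\Phi=v$ with $T>t_v$ forces $F_T(T^-)=F_T(t_v)=v$. If $T$ is an atom, the uniform $U$ makes $\Phi=F_T(T^-)$ a null event, since $U=0$ has probability zero. Otherwise $F_T(T)=v$, so $T$ lies in the interval $\{t>t_v:F_T(t)=F_T(t_v)\}$, where $F_T$ is flat. Such an interval has $T$-probability zero: its mass is bounded by the increment of $F_T$ across it, which is zero (handle the right endpoint by right-continuity). I expect this measure-theoretic care about flat stretches and endpoints to be the only delicate point. The rest is bookkeeping, after which Lemma~\ref{thm:pep-uniform}(iii) follows by applying the result to $T=m(X,y)$ and using $1-U\sim\uU$.
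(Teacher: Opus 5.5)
Your proof is correct and follows essentially the paper's argument: a direct computation of $\PR{\Phi\le v}$ in which the mass strictly below the quantile counts surely, the mass above it enters only through a null event (a flat stretch of $F_T$ or a dither endpoint), and the straddling atom contributes its linear fraction through $U$. The only difference is bookkeeping. The paper works with the reflected variable $F_T(T)-U\,P_T(T)$ and splits into the cases ``$v$ is attained by $F_T$'' and ``$v$ lies inside a jump,'' reducing the second case to the first; your generalized quantile $t_v$ handles both cases in a single three-way split.
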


\begin{IEEEproof}
Work with the equivalent form $\Psi\triangleq F_T(T)-U\,P_T(T)$, related to
$\Phi$ by $U\mapsto1-U$ (which preserves $\uU$ and independence). Fix
$\phi\in(0,1)$.

\emph{Case 1: $F_T(t)=\phi$ for some $t$.} Split
$\PR{\Psi\le\phi}=\PR{\Psi\le\phi,T\le t}+\PR{\Psi\le\phi,T>t}$. If $T\le t$ then
$\Psi\le F_T(T)\le\phi$, so the first term is $\PR{T\le t}=\phi$. If $T>t$ then
$\Psi\ge F_T(T^-)\ge\phi$, with equality only when $P_T(T)=0$ and
$F_T(T^-)=\phi$, an event of probability $0$ (it forces $T$ into a flat piece of
$F_T$ above $t$); hence the second term vanishes and $\PR{\Psi\le\phi}=\phi$.

\emph{Case 2: no $t$ has $F_T(t)=\phi$.} By right-continuity there is a unique
$t_0$ with $F_T(t_0^-)\le\phi<F_T(t_0)$; set $\phi_1\triangleq F_T(t_0)$, so
$P_T(t_0)=\phi_1-F_T(t_0^-)>0$ and $\phi=\phi_1-u\,P_T(t_0)$ for a unique
$u\in(0,1]$. Then
$\PR{\Psi\le\phi}=\PR{\Psi\le\phi_1}-\PR{\phi<\Psi\le\phi_1}$. By Case~1,
$\PR{\Psi\le\phi_1}=\phi_1$. The event $\{\phi<\Psi\le\phi_1\}$ occurs only at
$T=t_0$, where $\Psi=\phi_1-U\,P_T(t_0)$, so it equals $\{T=t_0,\,U<u\}$ and has
probability $P_T(t_0)\,u=\phi_1-\phi$. Thus $\PR{\Psi\le\phi}=\phi$.
\end{IEEEproof}

\begin{IEEEproof}[Proof of Lemma~\ref{thm:pep-uniform}]
Parts (i)--(ii) are immediate from the definition~\eqref{eq:pep-def} of
$\PEPU{x}{y}{u}$ and of the order $\succ$. For (iii), fix $y$ and let
$T\triangleq m(X,y)$ with $X\sim Q_X$, CDF $F_T$ and atom function $P_T$. Writing
the strictly-greater mass as $G_{X,y}=1-F_T(T)$ and the tie mass as
$H_{X,y}=P_T(T)$,
\[
  \PEP{X}{y}=G_{X,y}+U\,H_{X,y}=1-\bigl(F_T(T)-U\,P_T(T)\bigr),
\]
which is $1-\Psi$ for the variable $\Psi$ of Lemma~\ref{app:lem-rpit}; since
$\Psi\sim\uU$, so is $\PEP{X}{y}$.
\end{IEEEproof}

The same construction, used over a finite candidate set, yields the rank
identity behind every fixed-code converse in the paper.

\begin{lemma}[Finite rank equivalence]\label{app:lem-rank}
Let $j\in[M]$ carry scores $d_j$ and i.i.d.\ dithers $U_j\sim\uU$, and let
$Q$ be the uniform prior on $[M]$. For an independent uniform index $J$, let
$\mathrm{rank}(J)$ be the position of $(d_J,U_J)$ among
$\{(d_j,U_j)\}_{j\in[M]}$ under $\succ$ (rank $1$ = top). With
$\PEPs{J}\triangleq G_J+U_J H_J$, $G_J=\#\{j:d_j>d_J\}/M$,
$H_J=\#\{j:d_j=d_J\}/M$, one has, for every $L\in\{1,\dots,M\}$,
\[
  \PR{\mathrm{rank}(J)\le L}=\PR{\PEPs{J}\le L/M}.
\]
\end{lemma}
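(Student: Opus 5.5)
The plan is to condition on the score vector $(d_j)_{j\in[M]}$ and on the index $J$, and to compare the two events inside a single tie class. Fix $J$ and write $g\triangleq\#\{j:d_j>d_J\}$ and $h\triangleq\#\{j:d_j=d_J\}\ge1$, so that $G_J=g/M$ and $H_J=h/M$. Every candidate with a strictly larger score outranks $J$ under $\succ$, and every candidate with a strictly smaller score is outranked. Hence $\mathrm{rank}(J)=g+1+N$, where $N$ is the number of tied competitors $j\neq J$ with $U_j<U_J$. Because the dithers are i.i.d.\ and continuous, and $J$ is exchangeable within its tie class, $N$ is uniform on $\{0,\dots,h-1\}$. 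Therefore
\[
\PR{\mathrm{rank}(J)\le L\mid d,J}=\frac{\min\{\max\{L-g,0\},h\}}{h}.
\]

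On the other side, $\PEPs{J}\le L/M$ holds iff $g+U_Jh\le L$, that is, iff $U_J\le (L-g)/h$. Conditionally on $(d,J)$, $U_J$ is uniform on $[0,1]$, so this event has probability $\min\{\max\{(L-g)/h,0\},1\}$. This is the same expression as before. Integrating over $(d,J)$ then gives the identity.

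The only step needing care is the claim that $N$ is uniform. I would justify it by noting that the ranks of the $h$ i.i.d.\ dithers in the tie class form a uniform random permutation, and that $J$'s position in that permutation is exactly $N+1$.

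There is a subtle point in comparing the two sides. The right-hand side uses the same $U_J$ that enters the rank, but there it is uniform marginally, not conditionally on $N$. The two events are therefore not pathwise equal; only their conditional probabilities agree. This is sufficient, because the statement only asserts equality of probabilities.

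Boundary cases are handled by the clipping. If $L\le g$, both probabilities are $0$. If $L\ge g+h$, both are $1$.
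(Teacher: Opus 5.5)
Your proposal is correct and is essentially the paper's own proof: condition on the scores and on $J$, write $\mathrm{rank}(J)=1+g+N$ with $N$ uniform on $\{0,\dots,h-1\}$, and match $\bigl((L-g)_+\wedge h\bigr)/h$ against $\PR{U_J\le (L-g)/h}$ before averaging over $(d,J)$. Your side remark that the two events agree only in conditional probability, not pathwise, is accurate and does not affect the argument.
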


\begin{IEEEproof}
Condition on the scores and on $J$, and put $a\triangleq\#\{j:d_j>d_J\}$ and
$k\triangleq\#\{j:d_j=d_J\}\ge1$ (the tie block contains $J$). Then
$\mathrm{rank}(J)=1+a+\#\{j\neq J:d_j=d_J,\,U_j<U_J\}$. Among the $k$ tied
dithers the rank of $U_J$ is uniform on $\{0,\dots,k-1\}$, so
$\PR{\mathrm{rank}(J)\le L}=\bigl(((L-a)_+)\wedge k\bigr)/k$. On the other side
$M\,\PEPs{J}=a+U_J k$, whence
$\PR{\PEPs{J}\le L/M}=\PR{U_J\le(L-a)/k}=\bigl(((L-a)_+)\wedge k\bigr)/k$, the
same value. Averaging over scores and $J$ proves the claim.
\end{IEEEproof}

\section{The Meta-Converse Identity}\label{app:meta-converse}

\begin{IEEEproof}[Proof of Theorem~\ref{thm:meta-converse}]
Write $\beta_\alpha(P;Q)=\min_{0\le T\le1,\,\Es{P}{T}\ge
\alpha}\Es{Q}{T}$ and abbreviate $Q^\bullet\triangleq(Q_X\times P_Y)\cdot m$.

\emph{($\le$).} The test $T_e(x,y)\triangleq\Ind{\PEP{x}{y}\ge e^{-R}}$ has, by
product-law uniformity (Lemma~\ref{thm:pep-uniform}(iii)),
\[
Q_X(T_e(X,y)=1)=1-e^{-R}\quad\text{at every }y,
\]
hence $(Q_X\times Q_Y)(T_e=1)=1-e^{-R}$ for every $Q_Y$; $T_e$ is therefore feasible
at level $\alpha=1-e^{-R}$, and
\[
\beta_{1-e^{-R}}(Q_X\times Q_Y;Q^\bullet)\le\Es{Q^\bullet}{T_e}
=\Es{Q_X\times P_Y}{m\,\Ind{Z_e\le R}}.
\]
Maximizing over $Q_Y$ gives ``$\le$''.

\emph{($\ge$).} For each $y$ pick $(\tau_y,\theta_y)$ solving the randomized
quantile equation $Q_X(m(\cdot,y)<\tau_y)+\theta_y\,Q_X(m(\cdot,y)=\tau_y)
=1-e^{-R}$. Set $\lambda^\ast\triangleq\sum_y P_Y(y)\tau_y$ and
$Q_Y^\star(y)\triangleq P_Y(y)\tau_y/\lambda^\ast$. From the quantile equation,
$\PR{m(X,y)\ge\tau_y}\ge e^{-R}$, so Markov's inequality gives
$\tau_y\le e^{R}\Es{Q_X}{m(X,y)}$ and hence, by assumption (A2),
$\lambda^\ast\le e^{R}\Es{Q_X\times P_Y}{m(X,Y)}<\infty$; thus $Q_Y^\star$ is a
probability measure (the degenerate case $m\equiv0$ is trivial). The
Neyman--Pearson likelihood ratio under $Q_Y^\star$ is
$L(x,y)=m(x,y)P_Y(y)/Q_Y^\star(y)=\lambda^\ast m(x,y)/\tau_y$, so the optimal
$\beta$-test at $Q_Y^\star$ accepts exactly where $m(\cdot,y)<\tau_y$, with tie
fraction $\theta_y$ at $m=\tau_y$ --- which is precisely $T_e$. Therefore
$\beta_{1-e^{-R}}(Q_X\times Q_Y^\star;Q^\bullet)=\Es{Q_X\times P_Y}{m\,
\Ind{Z_e\le R}}$, giving ``$\ge$''.
\end{IEEEproof}

\section{The Reverse-Channel Identity}\label{app:reverse-channel}

\begin{IEEEproof}[Proof of Theorem~\ref{thm:reverse-channel}]
The cap $D_\infty(W\|Q_X\mid Q_Y)\le z$ is the pointwise box
$0\le W(x\mid y)\le e^{z}Q_X(x)$ together with $\sum_x W(x\mid y)=1$. We prove the
$\sup$ form~\eqref{eq:rc-sup}, the form consumed by the linear program of
Section~\ref{sec:prioropt} and the water-filling of Section~\ref{sec:ach-prioropt};
the $\inf$ form~\eqref{eq:rc-inf} is the sign-reversed construction.

By uniformity (Lemma~\ref{thm:pep-uniform}), $\PRs{X\sim Q_X}{\PEP{X}{y}\le
e^{-z}}=e^{-z}$ at every $y$. Define
$W^\star(x\mid y)\triangleq e^{z}Q_X(x)\PR{\PEP{x}{y}\le e^{-z}}$ (the
probability taken over the dither). It sums to
$e^{z}\cdot e^{-z}=1$, satisfies $W^\star(x\mid y)\le e^{z}Q_X(x)$, hence is
feasible, and
$\Es{W^\star Q_Y}{m}=e^{z}\Es{Q_X\times Q_Y}{m\,\Ind{\PEP{X}{Y}\le e^{-z}}}$, so
the left side of~\eqref{eq:rc-sup} equals $e^{-z}\Es{W^\star Q_Y}{m}\le$ the
supremum.

For the reverse inequality fix $y$ and let $S_y\triangleq\{x:\PR{\PEP{x}{y}\le
e^{-z}}>0\}$. Since the PEP is decreasing in the rank of $m(\cdot,y)$, $S_y$
is a superlevel set: with $\mu_y\triangleq\inf_{x\in S_y}m(x,y)$ one has
$m(x,y)\ge\mu_y$ on $S_y$ and $m(x,y)\le\mu_y$ off $S_y$. By construction
$W^\star$ vanishes off $S_y$ and saturates the cap on $S_y$, except possibly on
the tie set where the dither probability is strictly between $0$ and $1$ ---
and there $m(x,y)=\mu_y$ exactly. For any feasible $W$ put
$\Delta(x)\triangleq W(x\mid y)-W^\star(x\mid y)$; feasibility forces
$\Delta\le0$ wherever $W^\star$ saturates the cap (where $m\ge\mu_y$) and
$\Delta\ge0$ off $S_y$ (where $m\le\mu_y$), while on the tie set $m=\mu_y$
regardless of the sign of $\Delta$; and $\sum_x\Delta(x)=0$.
Hence
\[
  \Es{W}{m(X,y)}-\Es{W^\star}{m(X,y)}=\sum_x\Delta(x)m(x,y)\le\mu_y\sum_x\Delta(x)=0,
\]
the inequality holding termwise ($\Delta\,m\le\Delta\,\mu_y$ in all three
regions). Thus $W^\star$ maximizes the objective at every $y$ ---
``fill from the top'': the budget-$1$ column is packed onto the largest metric
values first, at the maximal permitted density $e^{z}Q_X$; integrating over
$Q_Y$ yields~\eqref{eq:rc-sup}. The $\inf$ form replaces $S_y$ by the sublevel
set $\{x:\PR{\PEC{x}{y}\le e^{-z}}>0\}$ (``fill from the bottom'') and reverses
every inequality.
\end{IEEEproof}

The proof is written in countable notation to keep the bookkeeping light; in
the standard Borel setting of Section~\ref{sec:pep} it extends with no change
in structure. The cap $D_\infty(W\|Q_X\mid Q_Y)\le z$ forces
$W(\cdot\mid y)\ll Q_X$ with density at most $e^{z}$, so feasible channels are
identified with their densities; $W^\star$ is defined through its density
$e^{z}\PR{\PEP{x}{y}\le e^{-z}}$, measurable in $(x,y)$ by the joint
measurability of $(G_{x,y},H_{x,y})$; sums over $x$ become integrals against
$Q_X$; the exchange inequality $\Delta\,m\le\Delta\,\mu_y$ holds pointwise in
the same three regions; and the final average over $Q_Y$ is Fubini under the
integrability hypothesis of Theorem~\ref{thm:reverse-channel} (the analogue of
\textup{(A2)} with $Q_Y$ in place of $P_Y$).

\section{Slope Identity and the Fixed-Code Converse}\label{app:integral}

\begin{IEEEproof}[Proof of the slope identity~\eqref{eq:slope-identity}]
Write $P(R)\triangleq\tilde P_e(R;Q_X)$, $F(R)\triangleq\Fspec{Q_X}{R}$ and
$E(R)=-\log P(R)$. The spectrum $F$ is continuous (Lemma~\ref{lem:atomless}),
so differentiating the integral representation~\eqref{eq:integral} is
legitimate and gives
\[
  P'(R)=P(R)-F(R),
\]
whence $\dot E(R)=-P'(R)/P(R)=F(R)/P(R)-1$, which rearranges to
$P(R)=F(R)/(1+\dot E(R))$ at every $R$ with $F(R)>0$. For the slope bounds: the
integrand of~\eqref{eq:integral} satisfies $F(z)\ge F(R)$ for $z\ge R$
(monotonicity), so $P(R)\ge F(R)\,e^R\int_R^\infty e^{-z}dz=F(R)$, giving
$\dot E(R)\le0$; and $\dot E(R)=F(R)/P(R)-1>-1$ exactly when $F(R)>0$.
\end{IEEEproof}

\begin{IEEEproof}[Proof of Theorem~\ref{thm:cc-converse}]
Fix $y$ and let $Q_X$ be uniform over the $M$ codewords with scores
$d_i=m(x_i,y)$ and dithers $U_i$. By definition the dithered PEP of $x_i$ equals
$\PEPU{x_i}{y}{U_i}=G_{x_i,y}+U_i H_{x_i,y}=\PEPs{i}$ in the notation of
Lemma~\ref{app:lem-rank} (the empirical $G,H$ are exactly the normalized
strictly-greater and tie counts). A decoding error for the transmitted index
$I\sim\Unif{[M]}$ is the event $\mathrm{rank}(I)>1$, so by
Lemma~\ref{app:lem-rank} with $L=1$,
$\PRs{}{\text{error}\mid Y=y}=\PR{\mathrm{rank}(I)>1}=\PR{\PEP{X}{y}>1/M}$.
Averaging over $Y$ and using atomlessness (Lemma~\ref{lem:atomless}) to promote
$>$ to $\ge$ gives $P_e(\cC)=\PR{\PEP{X}{Y}\ge1/M}$.
\end{IEEEproof}

\section{The Rate-Gap Theorem}\label{app:rategap}

Throughout, $F$, $P$, $\varsigma=(\log F)'$, $R_F=F^{-1}(\varepsilon)$ and
$R_P=P^{-1}(\varepsilon)$ are as in \cref{sec:rategap}, under the hypotheses of
\cref{thm:rategap}; $P$ is non-decreasing, $F$ being so.

\begin{IEEEproof}[Proof of Theorem~\ref{thm:rategap}]
By the slope identity~\eqref{eq:slope-identity}, $P\ge F$ wherever $F>0$, so
$P(R_F)\ge F(R_F)=\varepsilon=P(R_P)$ and, $P$ being non-decreasing,
$R_F\ge R_P$. The upper half of the value sandwich (\cref{lem:sandwich} at
$R=R_P$, whose hypotheses are exactly those of the theorem) gives
$\varepsilon=P(R_P)\le F(R_P)/(1-\varsigma(R_P))$, i.e.\
$F(R_P)\ge\varepsilon(1-\varsigma(R_P))$. With $F\in C^1$ strictly increasing on
$[R_P,R_F]$ (its slope $F'=\varsigma F$ is positive there, since
$\varsigma\ge\varsigma(R_F)>0$) the change of variable $\rho=F(R)$ yields
\[
R_F-R_P=\int_{F(R_P)}^{\varepsilon}\frac{d\rho}{\varsigma(R(\rho))\,\rho}
\;\le\;\frac{1}{\varsigma(R_F)}\int_{\varepsilon(1-\varsigma(R_P))}^{\varepsilon}\frac{d\rho}{\rho}
=\frac{-\log\bigl(1-\varsigma(R_P)\bigr)}{\varsigma(R_F)},
\]
using $\varsigma(R)\ge\varsigma(R_F)$ on $[R_P,R_F]$ (non-increasing slope).
This is case~(i); when $\varsigma\equiv\varsigma_0$ on $[R_P,R_F]$ both
occurrences of the slope in the bound evaluate to $\varsigma_0$,
giving~\eqref{eq:rategap}.
\end{IEEEproof}

\section{The Local Pairwise Estimate}\label{app:pep-local}

The local sharpening invoked in \cref{rem:gauss-regime} is a lemma of the framework rather than an import: the change of measure that drives it is the Bayes posterior $P_{X\mid Y}$ --- the reverse-channel object the framework already works with --- rather than a tilting chosen for the proof. For blocks, write $\idens{x}{y}\triangleq\sum_{k=1}^{n}\idens{x_k}{y_k}$.

\begin{lemma}[Local pairwise estimate]\label{lem:pep-local}
Fix a product prior $Q_X^{\otimes n}$ on a memoryless channel $\W^{\otimes n}$ under the matched metric, and an output block $y=(y_1,\dots,y_n)$ with $P_Y^{\otimes n}(y)>0$. Under the product posterior $P_{X\mid Y}(\cdot\mid y)=\prod_{k=1}^{n}P_{X\mid Y}(\cdot\mid y_k)$, let $\sigma_k^2$ and $t_k$ denote the variance and the third absolute central moment of $\idens{X_k}{y_k}$, and set $V_n(y)\triangleq\sum_{k}\sigma_k^2$, $T_n(y)\triangleq\sum_{k}t_k$. If $V_n(y)\ge n\,v_0$ and $T_n(y)\le n\,\tau_0$ for some $v_0>0$ and $\tau_0<\infty$, then for every $x$ and every $u\in[0,1]$,
\begin{equation}\label{eq:pep-local}
\PEPU{x}{y}{u}\ \le\ \PRs{\bar X\sim Q_X^{\otimes n}}{\idens{\bar X}{y}\ge\idens{x}{y}}\ \le\ B\,\frac{e^{-\idens{x}{y}}}{\sqrt n},
\end{equation}
where
\[
B\triangleq\frac{1}{1-e^{-1}}\BRA{\frac{1}{\sqrt{2\pi v_0}}+\frac{2c_0\,\tau_0}{v_0^{3/2}}}
\]
and $c_0$ is an absolute (Berry--Esseen) constant; $c_0=6$ suffices~\cite[Ch.~XVI.5, Thm.~2]{feller1971vol2}.
\end{lemma}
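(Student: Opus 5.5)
The first inequality is immediate. The PEP satisfies $\PEPU{x}{y}{u}\le G_{x,y}+H_{x,y}=Q_X^{\otimes n}\{\bar x:m(\bar x,y)\ge m(x,y)\}$. Under the matched metric the ordering of $m(\cdot,y)$ coincides with the ordering of $\idens{\cdot}{y}=\log\W(y\mid\cdot)/P_Y(y)$, since $P_Y$ does not depend on the candidate. So the event is exactly $\{\idens{\bar X}{y}\ge\idens{x}{y}\}$.

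For the second inequality, write $a\triangleq\idens{x}{y}$ and $S\triangleq\idens{\bar X}{y}$. We change measure to the posterior, as the preamble suggests. Since $P_{X\mid Y}(\bar x\mid y)=Q_X^{\otimes n}(\bar x)\,e^{\idens{\bar x}{y}}$, we get
\[
\PRs{Q_X^{\otimes n}}{S\ge a}=\Es{P_{X\mid Y}(\cdot\mid y)}{e^{-S}\Ind{S\ge a}}.
\]
Under the product posterior, $S=\sum_k\idens{X_k}{y_k}$ is a sum of independent, non-identically distributed terms with total variance $V_n(y)$ and third-moment sum $T_n(y)$. We then slice $[a,\infty)$ into unit intervals $[a+\ell,a+\ell+1)$, $\ell\ge0$. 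On the $\ell$-th interval $e^{-S}\le e^{-a-\ell}$, so
\[
\Es{}{e^{-S}\Ind{S\ge a}}\le e^{-a}\sum_{\ell\ge0}e^{-\ell}\,\PR{S\in[a+\ell,a+\ell+1)}.
\]

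The key step is a uniform bound on the probability that $S$ falls in a window of length $1$. We use the Berry--Esseen theorem for non-identical summands in Feller's form: $\sup_t\abs{\PR{S\le t}-\Phi((t-\mu)/\sqrt{V_n})}\le c_0T_n/V_n^{3/2}$. For any interval of length $1$, the Gaussian term contributes at most $1/\sqrt{2\pi V_n}$, and the two Kolmogorov-distance errors contribute at most $2c_0T_n/V_n^{3/2}$. The hypotheses $V_n\ge nv_0$ and $T_n\le n\tau_0$ turn this into
\[
\PR{S\in I}\le\frac{1}{\sqrt n}\BRA{\frac{1}{\sqrt{2\pi v_0}}+\frac{2c_0\tau_0}{v_0^{3/2}}}.
\]
Summing the geometric series $\sum_\ell e^{-\ell}=1/(1-e^{-1})$ then gives exactly the constant $B$ and the bound $B\,e^{-a}/\sqrt n$.

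The main obstacle is bookkeeping rather than substance. We must make sure the Berry--Esseen version used covers independent, non-identically distributed, possibly lattice summands, which Feller's Thm.~XVI.5.2 does since it bounds only the Kolmogorov distance. A length-$1$ interval bound needs two CDF evaluations, a closed endpoint and an open one. The open endpoint is handled by taking a left limit of the uniform bound, which is why the error term carries the factor $2$. We should also check that $P_Y^{\otimes n}(y)>0$ makes the posterior well defined, and that candidates with $\W(y\mid\bar x)=0$ (where $S=-\infty$) are harmless, because they never satisfy $S\ge a$ when $a$ is finite. If $a=-\infty$, the bound is trivial.
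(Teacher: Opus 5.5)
Your proposal is correct and follows the paper's proof step for step: the dither relaxation to $G_{x,y}+H_{x,y}$ with order-equivalence to the information density, the posterior change of measure, unit-interval slicing with the factor $1/(1-e^{-1})$, and a two-endpoint Berry--Esseen small-ball bound. The only difference is cosmetic. You handle the half-open window by a left limit of the Kolmogorov bound, whereas the paper enlarges $[t,t+1)$ to $(t-\epsilon,t+1]$ and lets $\epsilon\downarrow0$; in both cases the factor $2$ comes from the two endpoint evaluations.
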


\begin{IEEEproof}
Abbreviate $P\triangleq P_{X\mid Y}(\cdot\mid y)$, $Q\triangleq Q_X^{\otimes n}$, and
$a\triangleq\idens{x}{y}$. If $a=-\infty$ the right-hand side
of~\eqref{eq:pep-local} is infinite and there is nothing to prove; assume $a$
finite.

\emph{Step 1 (dither).} By Definition~\ref{def:pep}, for every $u\in[0,1]$,
\[
\PEPU{x}{y}{u}=G_{x,y}+u\,H_{x,y}\le G_{x,y}+H_{x,y}
=Q\BRAs{\bar x:m(\bar x,y)\ge m(x,y)}.
\]
The matched
metric is order-equivalent in $\bar x$, at each fixed $y$, to the information
density $\idens{\bar x}{y}$ (as in \cref{rem:gauss-regime}), so this set is
$\BRAs{\bar x:\idens{\bar x}{y}\ge a}$ --- the first inequality
of~\eqref{eq:pep-local}.

\emph{Step 2 (posterior change of measure).} For every block $\bar x$ with
$\W^{\otimes n}(y\mid\bar x)>0$,
\[
P(\bar x)=\frac{Q(\bar x)\,\W^{\otimes n}(y\mid\bar x)}{P_Y^{\otimes n}(y)}
=Q(\bar x)\,e^{\idens{\bar x}{y}},
\qquad\text{equivalently}\qquad
Q(\bar x)=P(\bar x)\,e^{-\idens{\bar x}{y}};
\]
blocks with $\W^{\otimes n}(y\mid\bar x)=0$ have $\idens{\bar x}{y}=-\infty$ and
lie outside the event $\{\idens{\bar x}{y}\ge a\}$. Hence, with
$S\triangleq\idens{\bar X}{y}$,
\[
\PRs{\bar X\sim Q}{S\ge a}=\Es{\bar X\sim P}{e^{-S}\,\Ind{S\ge a}}.
\]
Under $P$ the sum $S=\sum_{k}\idens{\bar X_k}{y_k}$ has independent (in general
non-identically distributed) terms, each finite $P$-almost surely, with total
variance $V_n(y)$ and third-moment sum $T_n(y)$.

\emph{Step 3 (unit-interval slicing).} Partitioning $\{S\ge a\}$ into the
slices $\{S\in[a+j,a+j+1)\}$, $j=0,1,2,\dots$, and using $e^{-S}\le e^{-(a+j)}$
on the $j$-th slice,
\[
\Es{P}{e^{-S}\,\Ind{S\ge a}}
\le\sum_{j\ge0}e^{-(a+j)}\,\PRs{P}{S\in[a+j,a+j+1)}
\le\frac{e^{-a}}{1-e^{-1}}\,\sup_{t\in\mR}\PRs{P}{S\in[t,t+1)}.
\]

\emph{Step 4 (Berry--Esseen small-ball bound).} Fix $t\in\mR$ and
$\epsilon>0$. Since $[t,t+1)\subset(t-\epsilon,t+1]$, applying the
Berry--Esseen theorem for independent, non-identically distributed
summands~\cite[Ch.~XVI.5, Thm.~2]{feller1971vol2} at both endpoints, and
bounding the increment of $\Phi$ by the peak normal density $1/\sqrt{2\pi}$
times the interval length,
\begin{align*}
\PRs{P}{S\in[t,t+1)}
&\le\Phi\!\BRA{\frac{t+1-\Es{P}{S}}{\sqrt{V_n(y)}}}
-\Phi\!\BRA{\frac{t-\epsilon-\Es{P}{S}}{\sqrt{V_n(y)}}}
+\frac{2c_0\,T_n(y)}{V_n(y)^{3/2}}\\
&\le\frac{1+\epsilon}{\sqrt{2\pi V_n(y)}}+\frac{2c_0\,T_n(y)}{V_n(y)^{3/2}}.
\end{align*}
Letting $\epsilon\downarrow0$ and inserting the hypotheses $V_n(y)\ge n v_0$,
$T_n(y)\le n\tau_0$,
\begin{align*}
\sup_{t\in\mR}\PRs{P}{S\in[t,t+1)}
&\le\frac{1}{\sqrt{2\pi V_n(y)}}+\frac{2c_0\,T_n(y)}{V_n(y)^{3/2}}\\
&\le\frac{1}{\sqrt{2\pi n v_0}}+\frac{2c_0\,n\tau_0}{(n v_0)^{3/2}}
=\frac{1}{\sqrt n}\BRA{\frac{1}{\sqrt{2\pi v_0}}+\frac{2c_0\,\tau_0}{v_0^{3/2}}}.
\end{align*}
Combining Steps 2--4 gives
$\PRs{\bar X\sim Q}{S\ge a}\le B\,e^{-a}/\sqrt n$ with $B$ as
in~\eqref{eq:pep-local}, the second inequality.
\end{IEEEproof}

The floor on $V_n(y)$ is where singularity enters. Writing $v(b)$ and $t(b)$ for the posterior variance and third absolute central moment of $\idens{X}{b}$ at a single output letter $b$, and $N(b\mid y)$ for the count of $b$ in $y$, one has $V_n(y)=\sum_b N(b\mid y)\,v(b)$ and $T_n(y)=\sum_b N(b\mid y)\,t(b)$. For finite alphabets $\tau_0=\max_b t(b)$ serves for every $y$; and some letter $b^\ast$ with $P_Y(b^\ast)>0$ has $v(b^\ast)>0$ exactly when the channel is non-singular on the support of $Q_X$ --- two inputs of positive prior mass with distinct positive likelihoods at a common output. In that case $V_n(y)\ge N(b^\ast\mid y)\,v(b^\ast)$, and a Chernoff--Hoeffding bound on the count gives the floor with $v_0=\tfrac12P_Y(b^\ast)\,v(b^\ast)$ for all output blocks outside a set of $P_Y^{\otimes n}$-probability at most $e^{-nP_Y(b^\ast)^2/2}$. For singular channels --- the binary erasure channel is the canonical case --- every $v(b)$ vanishes: at each output the posterior carries a single likelihood value, no $1/\sqrt n$ gain is available at any output, and correspondingly the $\tfrac12\log n$ term is absent from the BEC expansion.

\begin{remark}[What the refinement measures]\label{rem:pep-vs-is}
\Cref{lem:pep-local} also locates, quantitatively, where the information-spectrum
estimate loses accuracy. The bound $\PEP{x}{y}\le e^{-\idens{x}{y}}$ of
\cref{rem:gauss-regime} is the Chernoff bound on the pairwise tail at tilt
$s=1$: for every $s\ge0$,
$\PRs{\bar X\sim Q_X}{\idens{\bar X}{y}\ge a}\le
e^{-sa}\,\Es{Q_X}{e^{s\,\idens{\bar X}{y}}}$, and at $s=1$ the moment is
exactly one --- the choice information-spectrum arguments freeze. The exact
pairwise tail, which the PEP is, improves on this shadow in two separate
factors. First, the \emph{tilt}: away from the central regime the tail's true
exponent is attained at an optimizing $s\neq1$, which is precisely where
error-exponent analysis, with its optimized tilt
parameter~\cite{gallager1968information}, parts ways with the
information-spectrum bound; freezing $s=1$ is exponent-accurate only in the
dispersion regime. Second, the \emph{prefactor}: even at $s=1$ the local
central-limit factor $1/\sqrt n$ of \cref{lem:pep-local} is invisible to the
information-spectrum estimate, and it is the factor responsible for the
third-order improvement discussed in \cref{rem:gauss-regime}. The spectrum,
built from the exact PEP rather than from either bound, carries both
corrections automatically.
\end{remark}

\section{The LP Dual}\label{app:lp-dual}

We carry out the eliminate-and-substitute algebra behind
Remark~\ref{rem:lp-dual}. Attach a multiplier $\mu\in\mR$ to the prior
normalization~\eqref{eq:lp-c1}, a multiplier $\nu_y\in\mR$ to each per-output
normalization row in~\eqref{eq:lp-c2}, and a multiplier $\lambda_{xy}\ge0$ to
each box constraint~\eqref{eq:lp-c3}. The LP dual
of~\eqref{eq:lp-obj}--\eqref{eq:lp-c3} is
\begin{align*}
\text{minimize}\quad & \mu+\sum_{y}\nu_y\\
\text{subject to}\quad & \nu_y+\lambda_{xy}\ \ge\ e^{-R}\,\W(y\mid x),\quad
\lambda_{xy}\ge0,\quad\forall x,y,\\
& \mu\ \ge\ e^{R}\sum_{y}\lambda_{xy},\qquad\forall x\in\cX,
\end{align*}
the first constraint family dual to the variables $W^*_{X\mid Y}(x\mid y)$ and
the second to the variables $Q_X(x)$. Substitute $t_y\triangleq e^{R}\nu_y$ and
eliminate $\lambda_{xy}$ at its smallest feasible value
$\lambda_{xy}=e^{-R}\max\{\W(y\mid x)-t_y,\,0\}$; one checks that
$t_y\in[0,1]$ is without loss at the dual optimum (coordinates $t_y<0$ or
$t_y>1$ can only increase the objective). Using $\sum_y\W(y\mid x)=1$, the dual
objective becomes
\begin{align*}
\mu+\sum_y\nu_y
&=\max_{x\in\cX}\Bigl(1-\sum_y\min\{\W(y\mid x),t_y\}\Bigr)+e^{-R}\sum_y t_y\\
&=1-\min_{x\in\cX}\Bigl[\sum_y\min\{\W(y\mid x),t_y\}-e^{-R}\sum_y t_y\Bigr].
\end{align*}
Minimizing over $t\in[0,1]^{|\cY|}$ and invoking LP strong duality against the
primal value $1-\inf_{Q_X}\Fspec{Q_X}{R}$ of \cref{thm:lp}
gives~\eqref{eq:lp-dual}, with the optimal multipliers of the per-output rows
$\nu_y^\star=e^{-R}t_y^\star$. Finally, complementary slackness on the prior
rows ($Q_X^\star(x)>0\Rightarrow\mu=e^{R}\sum_y\lambda_{xy}$) states that every
input in the support of the optimal prior attains the inner minimum
of~\eqref{eq:lp-dual} --- the converse-side counterpart of the
equal-marginal-value condition of \cref{prop:ach-kkt}.

\section{Water-Filling and the Simplex March}\label{app:ach-prioropt}

\begin{IEEEproof}[Proof of Lemma~\ref{lem:waterfill}]
The sup form of the reverse-channel identity~\eqref{eq:rc-sup} reads, for this
metric and the conditional $D_\infty$ cap at level $z=-\log w$,
\[
G(Q_X;w)=\sup_{W^{*}:\,W^{*}(x\mid y)\le w^{-1}Q_X(x)}w\sum_{x,y}W^{*}(x\mid y)\W(y\mid x),
\]
with $W^{*}$ a stochastic reverse channel (the auxiliary $Q_Y$ cancels). The
change of variable $V(x)\triangleq w\,W^{*}(x\mid y)$ turns the cap into
$V\le Q_X$, stochasticity into $\sum_x V(x)=w$, and the objective into
$\sum_x V(x)\W(y\mid x)$; the constraints decouple across $y$, giving
\eqref{eq:waterfill}. For the per-output value, decompose the objective by
levels: $\nu^{y}_{j}=\sum_{k\ge j}c^{y}_{k}$, so for every feasible $V$,
\[
\sum_x V(x)\,\W(y\mid x)
=\sum_{k}c^{y}_{k}\sum_{j\le k}V\bigl(x^{y}_{(j)}\bigr)
\;\le\;\sum_{k}c^{y}_{k}\,\min\bigl\{w,\sigma^{y}_{k}\bigr\},
\]
each partial sum being at most the budget $w$ and at most the box mass
$\sigma^{y}_{k}$. The water-filling
$V\bigl(x^{y}_{(i)}\bigr)=\min\bigl\{Q_X(x^{y}_{(i)}),(w-\sigma^{y}_{i-1})_+\bigr\}$
makes every partial sum equal $\min\{w,\sigma^{y}_{k}\}$ simultaneously, so the
bound is attained and \eqref{eq:wf-ramp} holds; tied inputs share a common
$\nu^{y}_{j}$, so the value is tie-break independent. Differentiating
\eqref{eq:wf-ramp} in $w$ gives \eqref{eq:wf-staircase}:
$\partial_w\min\{w,\sigma^{y}_{k}\}=\Ind{w<\sigma^{y}_{k}}$ and
$\sum_{k:\,\sigma^{y}_{k}>t}c^{y}_{k}=\nu^{y}_{j}$ for
$t\in(\sigma^{y}_{j-1},\sigma^{y}_{j})$ by telescoping. Every stated property is
read off \eqref{eq:wf-ramp}: each ramp $\min\{w,\sigma^{y}_{j}\}$ is
non-decreasing, concave and piecewise-linear in $w$, its breakpoint
$\sigma^{y}_{j}$ is linear in $Q_X$, and the weights $c^{y}_{j}$ are nonnegative.
\end{IEEEproof}

\begin{IEEEproof}[Proof of the gradient formula~\eqref{eq:ach-grad} of
Proposition~\ref{prop:ach-concave}]
Substitute the staircase $s_y(Q_X;w)=\int_0^w\rho_y(Q_X;t)\,dt$
of~\eqref{eq:wf-staircase} into $J$ and exchange the order of integration,
\[
\int_0^1 s_y(Q_X;w)\,\kappa(w)\,dw
=\int_0^1\rho_y(Q_X;t)\,\bar\kappa(t)\,dt
=\sum_{j}\nu^{y}_{j}\!\int_{\sigma^{y}_{j-1}}^{\sigma^{y}_{j}}\!\bar\kappa(t)\,dt .
\]
Write $m=j(x,y)$ and $B(u)\triangleq\int_0^u\bar\kappa$, so the per-output term is
$\sum_j\nu^{y}_{j}\bigl[B(\sigma^{y}_{j})-B(\sigma^{y}_{j-1})\bigr]$. The prior mass
$Q_X(x)$ enters $\sigma^{y}_{k}$ exactly for $k\ge m$, with
$\partial\sigma^{y}_{k}/\partial Q_X(x)=\Ind{k\ge m}$, and $B'=\bar\kappa$; hence
\[
\frac{\partial}{\partial Q_X(x)}\sum_j\nu^{y}_{j}\bigl[B(\sigma^{y}_{j})-B(\sigma^{y}_{j-1})\bigr]
=\sum_{j\ge m}\nu^{y}_{j}\,\bar\kappa(\sigma^{y}_{j})
-\sum_{j\ge m+1}\nu^{y}_{j}\,\bar\kappa(\sigma^{y}_{j-1})
=\sum_{j\ge m}\bigl(\nu^{y}_{j}-\nu^{y}_{j+1}\bigr)\,\bar\kappa(\sigma^{y}_{j}),
\]
the last equality reindexing the second sum by $j\mapsto j+1$. Summing over
$y$ gives~\eqref{eq:ach-grad}.
\end{IEEEproof}

\begin{IEEEproof}[Proof of Proposition~\ref{prop:ach-march}]
The cumulative masses are linear in the prior, so
\[
\abs{\sigma^{y}_{j}(Q_X)-\sigma^{y}_{j}(Q_X')}\le\norm{Q_X-Q_X'}_{1},
\]
and $\bar\kappa$ is $\kappa(0^{+})$-Lipschitz because $\bar\kappa'=-\kappa$ with
$0\le\kappa\le\kappa(0^{+})$. Using
$\sum_{j\ge m}(\nu^{y}_{j}-\nu^{y}_{j+1})=\nu^{y}_{m}$ with $m=j(x,y)$ and
$\nu^{y}_{j(x,y)}=\W(y\mid x)$ in~\eqref{eq:ach-grad},
\begin{align*}
\bigl|g(x;Q_X)-g(x;Q_X')\bigr|
&\le\kappa(0^{+})\,\norm{Q_X-Q_X'}_{1}\sum_{y}\nu^{y}_{j(x,y)}\\
&=\kappa(0^{+})\,\norm{Q_X-Q_X'}_{1}\sum_{y}\W(y\mid x)
=L\,\norm{Q_X-Q_X'}_{1},
\end{align*}
which is~\eqref{eq:ach-smooth}. The telescoping step is metric-agnostic; the
matched hypothesis enters only in identifying the telescoped weight
$\nu^{y}_{j(x,y)}$ with the channel value $\W(y\mid x)$, whose sum over outputs
is $1$ --- so the constant $L=e^{R}$ is specific to the matched metric. Hence $-J$ is $L$-smooth with respect to
$\norm{\cdot}_{1}$, whose dual norm is $\norm{\cdot}_{\infty}$; the simplex has
$\norm{\cdot}_{1}$-diameter $2$, so the Frank--Wolfe curvature constant is at most
$L\cdot2^{2}=4L$, and the standard guarantee
$\max_{Q_X}J-J(Q_X^{(k)})\le 2C_f/(k+2)$ for an $L$-smooth concave objective
yields~\eqref{eq:ach-rate}.
\end{IEEEproof}

\bibliographystyle{IEEEtran}
\bibliography{refs}

\end{document}